\theoremstyle{plain}
\newtheorem{theorem}{Theorem}[section]
\newtheorem{lemma}[theorem]{Lemma}
\newtheorem{proposition}[theorem]{Proposition}
\newtheorem{corollary}[theorem]{Corollary}
\newtheorem{definition}[theorem]{Definition} 
\newtheorem{example}[theorem]{Example}  
\newtheorem{remark}[theorem]{Remark}
\tikzset{->-/.style={decoration={
  markings,
  mark=at position #1 with {\arrow{>}}},postaction={decorate}}}
\tikzset{->-/.default=0.5}
\pgfplotsset{compat=1.10}
\newcommand{\NN}{{\mathbb N}}
\newcommand{\RR}{{\mathbb R}}
\newcommand{\TT}{{\mathbb T}}
\newcommand{\ZZ}{{\mathbb Z}}
\newcommand{\cF}{{\mathcal F}}
\newcommand{\cK}{{\mathcal K}}
\newcommand{\cB}{{\mathcal B}}
\newcommand{\cS}{{\mathcal S}}
\newcommand{\cW}{{\mathcal W}}
\newcommand{\dd}{\mathrm{d}}
\newcommand{\sQ}{\mathscr{Q}}
\newcommand{\sC}{\mathscr{C}}
\newcommand{\Exp}{\mathrm{Exp}}
\begin{document}

\title{Cobordism invariance of topological edge-following states}

\author[1]{Matthias Ludewig}
\author[2]{Guo Chuan Thiang}
\affil[1,2]{School of Mathematical Sciences, University of Adelaide, Australia}
\affil[1]{Fakult\"at f\"ur Mathematik, Universit\"at Regensburg, Germany}
\affil[2]{Beijing International Center for Mathematical Research, Peking University, China}


\maketitle

\begin{abstract}
We prove that a spectral gap-filling phenomenon occurs whenever a Hamiltonian operator encounters a coarse index obstruction upon compression to a domain with boundary. Furthermore, the gap-filling spectra contribute to quantised current channels, which follow and are localised at the possibly complicated boundary. This index obstruction is shown to be insensitive to deformations of the domain boundary, so the phenomenon is generic for magnetic Laplacians modelling quantum Hall systems and Chern topological insulators. A key construction is a quasi-equivariant version of Roe's algebra of locally compact finite propagation operators.  \end{abstract}

\section{Introduction}
One insight gained from the study of quantum Hall systems and topological insulators, is that their Hamiltonian operators $H_X$, acting on $L^2(X)$ for some manifold $X$ say, have spectral gaps that become filled up with ``topological boundary states'' when $H_X$ is compressed to an operator $H_W$ acting on some domain $W\subset X$ with boundary. Examples and rigorous proofs of such gap-filling phenomena are available in the case $X=\RR^2$ and $W$ a half-plane, e.g.\ \cite{Pule,BMR,FGW, KSB}. For general domains, not much is rigorously known about the fate of these boundary states. The physical expectation is that they persist due their ``topological origin'' and contribute to a quantised boundary-following current. Furthermore, to properly qualify as ``topological'' and fulfil their advertised novel applications, the boundary states should be robust against modifications of boundary conditions, see pp.\ 8 of \cite{AQPP} for a physical discussion.

\medskip

\noindent {\bf Outline.} In this paper, we use the tools of coarse geometry and $K$-theory to study the spectral gap-filling phenomenon in very general geometric settings, and especially its striking consequences in the form of boundary currents.

If $H_X$ is invariant with respect to a group action by $\Gamma$, the spectral projection $P_S$ onto a separated part $S$ of its spectrum defines an abstract $K$-theory class $[P_S]$ for the $\Gamma$-equivariant Roe algebra of $X$, denoted $C^*(X,\Gamma)$ (see Thm.\ \ref{thm:functional.calculus} and \cite{Ludewig-Thiang}). 
A subset $W\subset X$ is not generally invariant under $\Gamma$ (or even under any subgroup of $\Gamma$), but we nevertheless associate a Roe type algebra $Q^*(W,\Gamma)$ to it, which we call the \emph{quasi-equivariant} Roe algebra  introduced in Section \ref{sec:SES}. Here a quasi-equivariant operator on $L^2(W)$ has the crucial property that it eventually becomes equivariant as one moves sufficiently far away from $\partial W$. 
This ``periodization'' procedure maps $Q^*(W,\Gamma)$ onto $C^*(X,\Gamma)$, with kernel the Roe algebra of $W$ \emph{localised at} $\partial W$, denoted $C^*_W(\partial W)$; this gives the short exact sequence
\begin{equation*} 
\begin{tikzcd}
 0 \ar[r] & C^*_W(\partial W) \ar[r] & Q^*(W, \Gamma) \ar[r] &C^*(X, \Gamma) \ar[r] & 0.
\end{tikzcd}
\end{equation*}
In Section \ref{sec:gap.filling.phenomenon}, we explain how the corresponding $K$-theoretic exponential map $\Exp_W:K_0(C^*(X,\Gamma))\rightarrow K_1(C^*_W(\partial W))$ applied to $[P_S]$ gives a ``boundary-localised'' obstruction for the compressed $H_W$ acting on $L^2(W)$ to maintain the spectral gaps adjacent to $S$. Thus ${\rm Exp}_W[P_S]\neq 0$ implies gap-filling when passing from $H_X$ to $H_W$ (see Fig.\ \ref{fig:spectral.projection}). 

Here, we observe that ${\rm Exp}_W[P_S]\in K_1(C^*_W(\partial W))$ is a \emph{non-equivariant} coarse index, so that \emph{no invariance property whatsoever is required of $\partial W$}. This is why our methods can address what is arguably the most astounding aspect of the gap-filling phenomenon by boundary states ---  it persists under deformations of the geometry of $\partial W$.

The direct computation of the obstruction ${\rm Exp}_W[P_S]$  might seem difficult except for special choices of $W$. In Section \ref{sec:cobordism}, we prove a certain \emph{cobordism invariance} of this obstruction, inspired by Roe's partitioned manifold index theorem \cite{Roe-partition, Roe-coarse-book}. We exploit this invariance to reduce the problem to standard half-spaces $\cW$ for which ${\rm Exp}_\cW[P_S]$ can be computed explicitly. For example, in Section \ref{sec:Euclidean.computation}, the obstruction is shown to be present for spectral projections of Chern insulators such as the magnetic Laplacian $H_{{\rm Lan},X}$ on $X=\RR^2$ (the \emph{Landau Hamiltonian} in physics), for \emph{generic} $W\subset \RR^2$, including $W$ with multiple boundary components. Thus we deduce a family of new results, that $H_{{\rm Lan},W}$ has no gaps in its spectrum (above the lowest Landau level), \emph{without having to solve the extremely difficult spectral problem for $H_{{\rm Lan},W}$}.

In Section \ref{sec:cyclic.cocycles}, under a polynomial growth condition on $\Gamma$, we prove Theorem \ref{thm:main.current.theorem} which provides a more concrete \emph{numerical} formula for the gap-filling indicator ${\rm Exp}_W[P_S]\in K_1(C^*_W(\partial W))$. As explained in Remark \ref{rem:physical.significance}, the numerical formula is physically the general expression for the current along the boundary $\partial W$ due to the gap-filling states of $H_W$, and there is no \emph{a priori} reason for it to take on only quantised values. In identifying this boundary current with a Fredholm index, Theorem \ref{thm:main.current.theorem} explains why it is quantised, invariant under ``coarse modifications'' of $\partial W$ and boundary conditions there, and invariant against perturbations of $H_X$ preserving the spectral separation of $S$.

\medskip

\noindent {\bf Related work:} For certain physical applications, e.g.\ quantum Hall effect, it is also important to establish robustness of the gap-filling spectra with respect to disorder and random potential terms, \cite{BES,KSB,FGW,PSB}. We do not address these issues, but rather focus on introducing new mathematical techniques to establish robustness \emph{with respect to choice of domain $W$}. 
We mention that the case of discrete $X$ (``tight-binding'' Hamiltonians describing lattice models) was studied recently by the second author \cite{Thiang-edge} using somewhat different techniques, and it provided preliminary evidence motivating this work. The authors also expanded the coarse index method to study gap-filling of Landau operators on the hyperbolic plane \cite{LT-coarse}. Two earlier works which introduced (uniform) Roe algebras and coarse geometry methods in topological phases of lattice models are \cite{Kubota, EwertMeyer}. Finally, a recent paper proposes coarse cohomology as invariants for interacting lattice systems \cite{Kapustin}, generalising the oft-utilised Chern classes in the non-interacting case.

\section{The quasi-equivariant Roe algebra}\label{sec:SES}

We begin with a rather general setting which allows the construction of what we call the \emph{quasi-equivariant short exact sequence} of $C^*$-algebras, Eq.~\eqref{ShortExactSequence}. This sequence can be considered a generalisation of the classical Toeplitz extension, reviewed in Example \ref{ex:Toeplitz}. 
After this section, we will adopt a more geometric setting, detailed at the beginning of \S \ref{sec:gap.filling.phenomenon}, which is suitable for spectral theory.

Let $(X, d, dx)$ be a proper metric measure space, i.e., one in which closed balls are compact. For two subsets $Y,Z\subset X$, their distance is denoted by \mbox{$d(Y,Z)={\rm inf}\{d(y,z):y\in Y, z\in Z\}$}. We also write \mbox{$B_R(A)=\{x\in X : d(x,A)\leq R\}$}. For $f\in L^\infty(X)$, we write $f$ for the corresponding multiplication operator on $L^2(X)$. We say that $A \in \cB(L^2(X))$ is {\em locally compact}, if $Af$ and $fA$ are compact for all compactly supported $f \in C_c(X)$. $A$ has {\em finite propagation} if there exists $R \geq 0$ such that $f Ag = 0$ whenever $f, g \in C_0(X)$ have supports at least $R$ apart. The closure in $\cB(L^2(X))$ of all locally compact, finite propagation operators is the {\em Roe algebra} $C^*(X)$ \cite{Roe-coarse-book}.

Assume moreover that $X$ carries a proper, isometric, measure-preserving action of a locally compact group $\Gamma$. This means that the group $\Gamma$ acts from the right on $L^2(X)$, via the unitary operators $U_\gamma, \gamma\in\Gamma$ defined by $U_\gamma w := \gamma^* w$, $w\in L^2(X)$. The {\em equivariant Roe algebra} $C^*(X, \Gamma)$ is the norm-closure in $\cB(L^2(X))$ of the $\Gamma$-invariant locally compact, finite propagation operators.

\vspace{1em}

\emph{Let $W \subset X$ be a closed subset, with $\partial W$ having zero measure}. Note that $W$ is not assumed to be preserved under $\Gamma$.

\vspace{1em}
We denote by $\Pi_W: L^2(X) \rightarrow L^2(W)$ the map that restrict functions to $W$ and by  $\Pi_W^*: L^2(W) \rightarrow L^2(X)$ the map that extends functions by zero to a function on $X$. For each $\gamma \in \Gamma$, we get the compressed operators 
\begin{equation*}
T_\gamma := \Pi_W U_\gamma \Pi_W^* \in \cB(L^2(W)). 
\end{equation*}
We remark that the resulting map $\Gamma \rightarrow \cB(L^2(W)), \gamma\mapsto T_\gamma$ is {\em not} a group homomorphism; in particular, the operators $T_\gamma$ are not generally invertible or even isometries.

\begin{definition}[The quasi-equivariant Roe algebra] \label{DefQuasiInvRoe}
~ We denote by $\sQ_0(W, \Gamma) \subseteq \cB(L^2(W))$ the algebra of all locally compact, finite propagation operators $A$ for which there exists $R\geq 0$ such that $(T_\gamma A - A T_\gamma)w = 0$ whenever both the support of $w\in C_c(W)$ and the support of $T_\gamma w$ have at least distance $R$ from $\partial W$. The {\em quasi-equivariant Roe algebra} $Q^*(W, \Gamma)$ is the closure of $\sQ_0(W, \Gamma)$ in the operator norm.
\end{definition}

Intuitively, when we are far away from $\partial W$, a quasi-equivariant operator on $L^2(W)$ behaves like a $\Gamma$-invariant one on $L^2(X)$. More precisely, we can relate the quasi-equivariant and equivariant Roe algebras under the following assumption on $W$:
\begin{equation} \label{ConditionOnW}
\begin{aligned}
&\text{\itshape For each $x\in X$, there exists a sequence $(\gamma_n)_{n \in \NN}$ in $\Gamma$}\\
&\text{\itshape such that $\gamma_n x \in W$ and $d(\gamma_n x,\partial W) \rightarrow \infty$.}
\end{aligned}
\end{equation}
For example, this is automatic if the action is cocompact and $d(\cdot,\partial W)$ is unbounded on $W$. Fig.\ \ref{fig:allowed.quasi.invariant} illustrates some subsets of the Euclidean plane (with standard $\Gamma=\ZZ^2$ action) satisfying/failing this criteria.

\begin{theorem}[The periodization map]
Under the assumption \eqref{ConditionOnW}, there exists a unique $*$-homomorphism
\begin{equation} \label{TheMapPi}
  \varpi: Q^*(W, \Gamma) \longrightarrow C^*(X, \Gamma), \qquad A \longmapsto \varpi A,
\end{equation}
the {\em periodization map}, with the property that for all $A \in \sQ_0(W, \Gamma)$, there exists $R \geq 0$ such that whenever the support of $w \in C_c(W)$ has distance at least $R$ from $\partial W$, then $\varpi A \Pi_W^* w = \Pi_W^* A w$.
\end{theorem}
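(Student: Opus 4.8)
The plan is to construct $\varpi A$ directly for $A \in \sQ_0(W,\Gamma)$ by a ``periodization'' (averaging-at-infinity) procedure, verify it lies in $C^*(X,\Gamma)$, check the defining compatibility property, then extend by continuity and argue uniqueness. For the construction: fix $A \in \sQ_0(W,\Gamma)$ with propagation bound $R_0$ and quasi-equivariance constant $R$. Given $x \in X$, pick a sequence $(\gamma_n)$ as in \eqref{ConditionOnW} with $\gamma_n x \in W$ and $d(\gamma_n x, \partial W) \to \infty$. The idea is that for $w \in C_c(X)$ supported near $x$, the vectors $U_{\gamma_n} \Pi_W^*(\text{something})$ stabilize; more precisely, I would define $\varpi A$ on a dense set of compactly supported $w$ by $\varpi A\, w := U_{\gamma_n}^* \Pi_W^* A\, \Pi_W U_{\gamma_n} w$ for $n$ large, and show this is eventually independent of $n$ and of the choice of sequence. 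Independence of $n$ follows because once both $\mathrm{supp}(U_{\gamma_n} w)$ and its $A$-image (controlled by propagation $R_0$) are farther than $R$ from $\partial W$, the quasi-equivariance relation lets one slide $T_{\gamma_n \gamma_m^{-1}}$ through $A$; independence of the sequence is similar, comparing two sequences by a third. Local compactness and the finite propagation bound $R_0$ are manifestly preserved under $w \mapsto U_{\gamma_n}^* \Pi_W^* A \Pi_W U_{\gamma_n} w$, and $\Gamma$-invariance of the limit follows because the construction is built to be insensitive to translating $w$ by any fixed group element (absorb it into the sequence). Hence $\varpi A \in C^*(X,\Gamma)$, and by construction $\varpi A \,\Pi_W^* w = \Pi_W^* A w$ whenever $\mathrm{supp}(w)$ is far enough from $\partial W$ (take $\gamma_n = e$).

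Next I would check $\varpi$ is a $*$-homomorphism on $\sQ_0(W,\Gamma)$: multiplicativity and $*$-compatibility hold because the periodization is, locally and eventually, just conjugation by the unitaries $U_{\gamma_n}$ together with the honest identity $\Pi_W \Pi_W^* = \mathrm{id}_{L^2(W)}$, which turns $\Pi_W^* A \Pi_W \cdot \Pi_W^* B \Pi_W$ into $\Pi_W^* AB \Pi_W$ on the relevant subspace; one must check that a single sequence $(\gamma_n)$ works simultaneously for $A$, $B$, and $AB$, which is fine since the propagation and quasi-equivariance constants combine additively. The key norm estimate is $\|\varpi A\| \leq \|A\|$: this holds because, restricted to functions supported in a fixed large ball, $\varpi A$ agrees with a unitary conjugate of a compression of $A$, hence has norm $\leq \|A\|$, and norms on $\cB(L^2(X))$ are computed by testing against such localized vectors (using that $X$ is proper and $C_c(X)$-functions are dense). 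This bound is what allows the extension of $\varpi$ from $\sQ_0(W,\Gamma)$ to a $*$-homomorphism on the closure $Q^*(W,\Gamma)$. Uniqueness is then immediate: any two $*$-homomorphisms with the stated property agree on $\sQ_0(W,\Gamma)$ — the property pins down $\varpi A$ on all $\Pi_W^* w$ with $\mathrm{supp}(w)$ far from $\partial W$, and by translating with $U_\gamma$ (using equivariance of the target) and invoking \eqref{ConditionOnW}, these vectors are total in $L^2(X)$ — and hence on all of $Q^*(W,\Gamma)$ by continuity.

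The main obstacle I anticipate is the well-definedness of the pointwise limit: showing that $U_{\gamma_n}^* \Pi_W^* A \Pi_W U_{\gamma_n} w$ is genuinely eventually constant (not merely Cauchy) and independent of the sequence, uniformly enough in $w$ to patch together into a single bounded operator on $L^2(X)$ rather than just a densely defined one. The delicate point is that the constant $R$ controlling quasi-equivariance depends on $A$ but the ``large $n$'' threshold depends on $\mathrm{supp}(w)$, so I need to organize the argument so that for each fixed compact set $K$ there is a uniform threshold $n(K)$ beyond which the formula stabilizes for all $w$ supported in $K$; combined with the $\|\varpi A\| \le \|A\|$ bound this yields a well-defined bounded operator. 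A secondary subtlety is verifying that the limit really is $\Gamma$-invariant: one shows $U_\delta (\varpi A) U_\delta^{-1} w = \varpi A\, w$ by noting that the sequence $(\gamma_n \delta)$ is an admissible sequence for the point $\delta^{-1} x$ witnessing \eqref{ConditionOnW}, so the two periodizations computed via $(\gamma_n)$ and $(\gamma_n\delta)$ coincide — this is exactly the ``sequence-independence'' already established, applied with a shift.
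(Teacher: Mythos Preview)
Your proposal is correct and follows essentially the same approach as the paper. The core formula $\varpi A\,w = U_\gamma^* \Pi_W^* A\, \Pi_W U_\gamma w$ for a suitable $\gamma$, the well-definedness via sliding $T_\eta$ through $A$ using quasi-equivariance, the norm bound $\|\varpi A\| \le \|A\|$, and the checks of $\Gamma$-equivariance and the $*$-homomorphism property are all handled in the paper in the same way; the only cosmetic difference is that you phrase the construction in terms of a sequence $(\gamma_n)$ eventually stabilizing, while the paper picks a single sufficiently deep $\gamma$ and proves independence of that choice directly---these are the same argument, and your added explicit uniqueness discussion is a nice touch the paper leaves implicit.
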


\begin{proof}
To define $\varpi$, start with $A \in \sQ_0(W, \Gamma)$, and let $R\geq 0$ be such that $(T_\gamma A - A T_\gamma)w = 0$ whenever the supports of $w$ and $T_\gamma w$ have at least distance $R$ from $\partial W$. Now by assumption  \eqref{ConditionOnW} on $W$, for any compactly supported function $w \in C_c(X)$, we can find $\gamma \in \Gamma$ such that $U_\gamma w$ is supported in $W$, with distance at least $R+S$ from $\partial W$, where $S$ is the propagation speed of $A$. 
For such a  $\gamma$, the required properties of $\varpi$ implies that we must necessarily have
\begin{equation} \label{PreliminaryDefinition}
  \varpi A w = U_\gamma^*\varpi A U_\gamma w = U_\gamma^* \Pi_W^* A \Pi_W U_\gamma w,
\end{equation}
which we raise to a definition.
We show that this definition does not depend on the choice of $\gamma$. Indeed, if $\gamma^\prime$ is another element such that $U_\gamma w$ is supported in $W$, with distance at least $R+S$ from $\partial W$, we can write $\gamma^\prime = \gamma \eta$ and get 
\begin{equation*}
\begin{aligned}
 U_{\gamma^\prime}^* \Pi_W^* A \Pi_W U_{\gamma^\prime} w &= U_{\gamma}^* U_\eta^* \Pi_W^* A \Pi_W U_\eta U_{\gamma} w \\
 &= U_{\gamma}^* U_\eta^* \Pi_W^* A T_\eta  \Pi_W U_{\gamma} w \\ 
  &= U_{\gamma}^* U_\eta^* \Pi_W^* T_\eta A   \Pi_W U_{\gamma} w \\ 
    &= U_{\gamma}^* U_\eta^* U_\eta \Pi_W^*  A   \Pi_W U_{\gamma} w \\ 
        &= U_{\gamma}^*  \Pi_W^*  A   \Pi_W U_{\gamma} w \\ 
\end{aligned}
\end{equation*}
Here in the second step, we used that both $U_\eta U_\gamma w = U_{\gamma^\prime} w$ and $U_\gamma w$ have support in $W$, hence $\Pi_W U_\eta U_{\gamma} w = T_\eta  \Pi_W U_{\gamma} w$. Then, since the supports of $\Pi_W U_\gamma w$ and $T_\eta \Pi_W U_\gamma w = \Pi_W U_{\gamma^\prime} w$ have distance at least $R+S$ from $\partial W$ (by choice of $\gamma$, $\gamma^\prime$), we have $(A T_\eta - T_\eta A) \Pi_W U_\gamma w = 0$. Finally, we claim that $\Pi_W^* T_\eta A   \Pi_W U_{\gamma} w = U_\eta \Pi_W^*  A   \Pi_W U_{\gamma} w$, for which we have to show that both $\Pi_W^* A \Pi_W U_\gamma w$ and $U_\eta \Pi_W^* A \Pi_W U_\gamma w$ are supported in $W$. To see this, first notice that because $A$ has propagation speed at most $S$ and $U_\gamma w$ has support with distance at least $R+S$ from the boundary, the support of $A \Pi_W U_\gamma w$ still has distance at least $R$ from the boundary. Similarly, the support of  $U_\eta \Pi_W^* A \Pi_W U_{\gamma} w$ is contained in 
\begin{equation*}
\eta \cdot B_S\bigl(\mathrm{supp}(U_\gamma w)\bigr) = B_S \bigl(\eta \cdot \mathrm{supp}(U_\gamma w)\bigr) = B_S( \mathrm{supp}(U_{\gamma^\prime} w)),
\end{equation*} hence is contained in $W$, with distance at least $R$ from the boundary. This proves the claim and finishes the proof that Eq.\ \eqref{PreliminaryDefinition} is independent of the choice of $\gamma$ for all operators $A$ as above and all $w \in C_c(X)$.

Since $U_\gamma$, $\Pi_W^*$ and $\Pi_W$ have operator norm one, we have the estimate 
\begin{equation} \label{NormEstimatePi}
\|\varpi A w\|_{L^2(X)} = \|U_\gamma^* \Pi_W^* A \Pi_W U_\gamma w\|_{L^2(X)} \leq \|A\| \|w\|_{L^2(W)}
\end{equation}
for all $w \in C_c(X)$.
Therefore, as $C_c(X)$ is dense in $L^2(X)$, the operator $\varpi A$ defined by Eq.\ \eqref{PreliminaryDefinition} extends by continuity to a bounded operator on all of $L^2(X)$. Finally, we see that the estimate Eq.\ \eqref{NormEstimatePi} also implies that the map $\varpi: \sQ_0(W, \Gamma) \rightarrow \cB(L^2(X))$ is bounded, hence again extends by continuity to all of $Q^*(W, \Gamma)$.

We need to show that $\varpi A$ is $\Gamma$-equivariant. It suffices to verify this for $A \in \sQ_0(W, \Gamma)$ and $w \in C_c(X)$. Let $\eta \in \Gamma$ be arbitrary and $\gamma$ as in Eq.\ \eqref{PreliminaryDefinition}. Then $U_{\eta^{-1}\gamma} U_\eta w = U_\gamma w$ is supported in $W$, with distance at least $R+S$ from $\partial W$, hence
\begin{equation*}
  \varpi A U_\eta w = U_{\eta^{-1}\gamma}^* \Pi_W^* A \Pi_W U_{\eta^{-1}\gamma} U_\eta w = U_\eta U_\gamma^*  \Pi_W^* A \Pi_W U_{\gamma}  w = U_\eta \varpi A w,
\end{equation*}
as required.

Finally, we have to show that $\varpi$ is an $*$-homomorphism, which can again be verified on $\sQ_0(W, \Gamma)$. By the equivariance of $\varpi A$ just verified, it suffices to check that $\varpi A \varpi B w = \varpi(AB)w$ and $(\varpi A)^*w = \varpi A^*w$ for $w$ with support in $W$, far away from $\partial W$. However, this case is trivial.
\end{proof}

The map $\varpi$ admits a section,
\begin{equation}
\sigma: C^*(X, \Gamma) \rightarrow Q^*(W, \Gamma), \qquad A \longmapsto \Pi_W A \Pi_W^*,\label{eqn:section.map}
\end{equation}
in other words, we have $\varpi \sigma = \mathrm{id}$. This follows directly from the formula Eq.\ \eqref{PreliminaryDefinition} for $\varpi$. We emphasise that this map is {\em not} an algebra homomorphism, as it is not multiplicative. The existence of $\sigma$ in particular shows that the periodization map $\varpi$ is surjective, and its kernel turns out to be the localised Roe algebra at $\partial W$ (defined below), hence we obtain a short exact sequence of $C^*$-algebras,
\begin{equation} \label{ShortExactSequence}
\begin{tikzcd}
 0 \ar[r] & C^*_W(\partial W) \ar[r] & Q^*(W, \Gamma) \ar[r, "\varpi"] &C^*(X, \Gamma) \ar[l, bend left=20, "\sigma", dashed] \ar[r] & 0.
\end{tikzcd}
\end{equation}

\begin{definition}[\cite{HRY}, \S9 of \cite{Roe-coarse-book}]
  Let $\sC_{W,0}(\partial W) \subset \cB(L^2(W))$ be the subset of operators $A$ that are locally compact, of finite propagation and {\em supported near} $\partial W$, meaning that there exists $R \geq 0$ such that $Aw = 0$ for all $w \in C_c(W)$ the  support of which has distance at least $R$ from $\partial W$. The {\em Roe algebra of $W$ localised at $\partial W$}, denoted by $C^*_W(\partial W)$,  is the closure  of $\sC_{W,0}(\partial W)$ in the operator norm.
  \end{definition}

\begin{lemma} \label{LemmaLocalisedRoeAlgebraKernel}
  $C^*_W(\partial W)$ is the kernel of $\varpi$.
\end{lemma}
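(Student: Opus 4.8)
The plan is to prove the two inclusions $C^*_W(\partial W)\subseteq\ker\varpi$ and $\ker\varpi\subseteq C^*_W(\partial W)$ separately; the first is elementary and the second is the substantive direction. For the inclusion $C^*_W(\partial W)\subseteq\ker\varpi$, I would first check that $\sC_{W,0}(\partial W)\subseteq\sQ_0(W,\Gamma)$, so that $C^*_W(\partial W)$ is genuinely a closed subalgebra of $Q^*(W,\Gamma)$ on which $\varpi$ is defined: if $A\in\sC_{W,0}(\partial W)$ kills every $w\in C_c(W)$ whose support has distance $\geq R$ from $\partial W$, then for any $\gamma$ both $T_\gamma Aw$ and $AT_\gamma w$ vanish as soon as the supports of $w$ and of $T_\gamma w$ have distance $\geq R$ from $\partial W$, so $A$ is quasi-equivariant with the same constant. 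For such an $A$ and arbitrary $w\in C_c(X)$, formula \eqref{PreliminaryDefinition} gives $\varpi A w=U_\gamma^*\Pi_W^* A\,\Pi_W U_\gamma w$ for a $\gamma$ chosen so that $\Pi_W U_\gamma w$ is supported far from $\partial W$; hence $A\,\Pi_W U_\gamma w=0$ and $\varpi A w=0$. Since $C_c(X)$ is dense, $\varpi A=0$, and by continuity of $\varpi$ and density of $\sC_{W,0}(\partial W)$ in $C^*_W(\partial W)$ we conclude $\varpi$ vanishes on $C^*_W(\partial W)$.

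For the reverse inclusion, the crux is the \emph{unconditional} claim that $A-\sigma(\varpi A)\in\sC_{W,0}(\partial W)$ for every $A\in\sQ_0(W,\Gamma)$. That this operator is supported near $\partial W$ follows directly from the defining property of $\varpi$: for $w\in C_c(W)$ with support at distance $\geq R$ from $\partial W$ one has $\varpi A\,\Pi_W^* w=\Pi_W^* A w$, and applying $\Pi_W$ and using $\Pi_W\Pi_W^*=\mathrm{id}_{L^2(W)}$ together with $Aw\in L^2(W)$ gives $\sigma(\varpi A)w=\Pi_W\varpi A\,\Pi_W^* w=Aw$, so $(A-\sigma(\varpi A))w=0$ for all such $w$.

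It remains to see that $A-\sigma(\varpi A)$ is locally compact of finite propagation; since $A$ is, it suffices to treat $\sigma(\varpi A)=\Pi_W(\varpi A)\Pi_W^*$, and for this one checks that $\varpi A$ itself is locally compact of finite propagation whenever $A\in\sQ_0(W,\Gamma)$. Finite propagation, with the same constant as $A$, is obtained by tracking supports through \eqref{PreliminaryDefinition}, using that the $U_\gamma$ act by isometries and that $\Pi_W,\Pi_W^*$ do not enlarge supports; local compactness follows by invoking \eqref{ConditionOnW} uniformly over a compact set to express $\varpi A$ locally as $U_\gamma^*\Pi_W^*(\,\cdot\,|_W)\,A\,(\,\cdot\,|_W)\Pi_W U_\gamma$ and reducing to local compactness of $A$. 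Compression by the contractions $\Pi_W,\Pi_W^*$ then transfers both properties to $\sigma(\varpi A)$. This step — that $\varpi$ sends the algebraic quasi-equivariant Roe algebra into the algebraic equivariant Roe algebra — is the one point that requires genuine (if routine) bookkeeping with supports, and I expect it to be the main obstacle.

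Granting the claim, take $A\in Q^*(W,\Gamma)$ with $\varpi A=0$ and choose $A_n\in\sQ_0(W,\Gamma)$ with $A_n\to A$. Then $B_n:=A_n-\sigma(\varpi A_n)\in\sC_{W,0}(\partial W)$, and since $\varpi$ is a $*$-homomorphism and $\Pi_W,\Pi_W^*$ are contractions,
\[
\|A-B_n\|\leq\|A-A_n\|+\|\sigma(\varpi A_n)\|\leq\|A-A_n\|+\|\varpi A_n\|=\|A-A_n\|+\|\varpi(A_n-A)\|\leq 2\|A-A_n\|
\]
which tends to $0$ as $n\to\infty$. Hence $A\in\overline{\sC_{W,0}(\partial W)}=C^*_W(\partial W)$, and combined with the first inclusion this proves $\ker\varpi=C^*_W(\partial W)$.
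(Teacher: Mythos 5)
Your proof is correct and follows essentially the same route as the paper: both directions rest on formula \eqref{PreliminaryDefinition} (showing $\varpi$ kills operators supported near $\partial W$) and on the identity $\sigma(\varpi A)w = Aw$ for $w$ supported far from $\partial W$, followed by approximating a general $A\in\ker\varpi$ by the corrected operators $A_n-\sigma\varpi A_n$. The only difference is organisational: the paper first characterises $\ker\varpi\cap\sQ_0(W,\Gamma)=\sC_{W,0}(\partial W)$ and then applies this to the approximants, whereas you prove the unconditional claim $A-\sigma\varpi A\in\sC_{W,0}(\partial W)$ directly; like the paper, you treat the verification that $\sigma\varpi A$ is locally compact of finite propagation as routine bookkeeping.
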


\begin{proof}
  It is clear from the definition Eq.\ \eqref{PreliminaryDefinition} that $\varpi A w = 0$ for all $A \in \sC_{W,0}(\partial W)$ and all $w \in C_c(X)$. By continuity, we also have $\varpi A w = 0$ for all $w \in L^2(X)$, hence $\sC_{W,0}(\partial W) \subseteq \ker(\varpi)$. Suppose, conversely, that $A \in \sQ_0(W, \Gamma) \cap \ker(\varpi)$. Then for all $w \in C_c(W)$, the support of which has distance at least $R+S$ from $\partial W$ (where $S$ is the propagation speed of $A$ and $R$ the constant from Def.~\ref{DefQuasiInvRoe}), we have 
  \begin{equation*}
    \Pi_W^* A w = \Pi_W \varpi A \Pi_W^* w = 0,
  \end{equation*}
  by formula Eq.\ \eqref{PreliminaryDefinition}, where we may choose $\gamma = 1$. Hence $A$ is supported near the boundary, so that $A \in \sC_{W,0}(\partial W)$.
  
  We have shown that $\sC_{W,0}(\partial W) = \ker(\varpi) \cap \sQ_0(W, \Gamma)$. Suppose now that $\varpi A = 0$ for a general $A \in Q^*(W, \Gamma)$, and let $A = \lim_n A_n$ with $A_n \in \sQ_0(W, \Gamma)$. Then $A^\prime_n := A_n - \sigma\varpi A_n \in \sQ_0(W, \Gamma)$ satisfies $\varpi A^\prime_n = 0$, hence $A^\prime_n \in \sC_{W,0}(\partial W)$, by our previous considerations. However, by continuity of $\varpi$ and $\sigma$, we have $\lim_n A^\prime_n = A - \sigma\varpi A = A$, hence $A$ is in the closure of $\sC_{W,0}(\partial W)$, which is $C^*_W(\partial W)$.
\end{proof}

\begin{example}[The Toeplitz extension]\label{ex:Toeplitz}
Consider $X = \RR$, with its canonical action of $\Gamma = \ZZ$. The interval $[0, 1]$ is a fundamental domain for the action, and we get \mbox{$C^*(X, \Gamma) \cong \mathcal{K}(L^2([0, 1])) \otimes C^*_r(\ZZ)$}, where $\mathcal{K}$ denotes the compact operators. The quasi-equivariant algebra for $W =\RR_+ \equiv [0, \infty)$ is $Q^*(W, \Gamma) = \mathcal{K}(L^2([0, 1])) \otimes \mathcal{T}$, where $\mathcal{T} = \{ \Pi_{\NN} A \Pi_{\NN}^* \mid A \in C^*_r(\ZZ)\} \subset \mathcal{B}(\ell^2(\NN))$ is the {\em Toeplitz algebra} obtained by compressing the reduced group $C^*$-algebra $C^*_r(\ZZ)\subset\mathcal{B}(\ell^2(\ZZ))$ to $\ell^2(\NN)$. The Roe algebra of $W$ localised at $\partial W$ is just $C^*_W(\partial W) = \cK(L^2(\RR_+)) \cong \cK(\ell^2(\NN)) \otimes \cK(L^2([0, 1]))$ in this case, and the quasi-equivariant short exact sequence Eq.\ \eqref{ShortExactSequence} is just the Toeplitz extension
\begin{equation*}
  \begin{tikzcd}
    0 \ar[r] & \cK\bigl(L^2(\NN)\bigr) \ar[r] & \mathcal{T} \ar[r] & C^*_r(\ZZ) \ar[r] & 0
  \end{tikzcd}
\end{equation*}
tensored with $\cK(L^2([0, 1]))$.
\end{example}

The short exact sequence Eq.\ \eqref{ShortExactSequence} yields the cyclic six-term exact sequence in $K$-theory
\begin{equation} \label{SixTermSequence}
\begin{tikzcd}
  K_0\bigl(C^*_W(\partial W)\bigr) \ar[r] & K_0\bigl( Q^*(W, \Gamma)\bigr) \ar[r, "\varpi_*"] & K_0\bigl(C^*(X, \Gamma)\bigr) \ar[d, "\mathrm{Exp}_W"]\\
  K_1\bigl(C^*(X, \Gamma)\bigr) \ar[u, "\mathrm{Ind}_W"] & K_1\bigl( Q^*(W, \Gamma)\bigr) \ar[l, "\varpi_*"] & \ar[l] K_1\bigl(C^*_W(\partial W)\bigr).
\end{tikzcd}
\end{equation}
For functorial operations, it is usual to assume that $L^2(X)$ is \emph{ample} (or \emph{adequate}, or \emph{standard}), i.e., the multiplication operator by $f\in C_0(X)$ is a compact operator in $\cB(L^2(X))$ only when $f=0$. This condition is always satisfied in the geometric setting of \S \ref{sec:gap.filling.phenomenon} onwards.

Notice that the algebra $C^*_W(\partial W)$ is the direct limit of its subalgebras of operators that are supported near the boundary, 
\begin{equation*}
C^*_W(\partial W) = \varinjlim C^*\bigl(B_R(\partial W) \cap W\bigr).
\end{equation*} 
Now since the inclusion map $\partial W \hookrightarrow B_R(\partial W) \cap W$ is a coarse equivalence for every $R \geq 0$, we have $C^*(B_R(\partial W) \cap W) \cong C^*(\partial W)$, the Roe algebra of $\partial W$ \cite[Thm.~2.7]{EwertMeyer}. While this isomorphism is non-canonical, one can choose it to be implemented by a unitary transformation of the underlying Hilbert space. Therefore, one obtains a {\em canonical isomorphism} of $K$-theory groups, cf.\ \S 5, Lemma 1 of \cite{HRY},
\begin{equation}
  K_*\bigl(C^*_W(\partial W)\bigr) \cong K_*\bigl(C^*(\partial W)\bigr). \label{eqn:localised.Roe.k.theory}
\end{equation}

One quick consequence of being able to ``thicken'' $\partial W$ is an invariance of the six-term sequence Eq.\ \eqref{SixTermSequence} under modifications of $\partial W$ within its thickening: 

\begin{proposition}
Let $W, W^\prime \subseteq X$ be two subsets satisfying the condition \eqref{ConditionOnW}, and assume that there exists $R \geq 0$ such that $\partial W^\prime \subset B_R(\partial W)$ and $\partial W \subseteq B_R(\partial W^\prime)$. Then there is a canonical isomorphism between the corresponding $K$-theory six-term sequences Eq.\ \eqref{SixTermSequence} that is the identity at the terms $K_i(C^*(X, \Gamma))$.
\end{proposition}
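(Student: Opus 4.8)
The plan is to construct the two vertical families of isomorphisms term by term, pinned down by the requirement that they be the identity on $K_i(C^*(X,\Gamma))$, and to keep track of the orientation data along the boundary. For the ideal terms this is straightforward: the hypothesis says $\partial W$ and $\partial W'$ lie within Hausdorff distance $R$, so each inclusion $\partial W\hookrightarrow B_R(\partial W)\hookleftarrow\partial W'$ realizes its source as coarsely dense in $B_R(\partial W)$ and is therefore a coarse equivalence; composing these gives a canonical coarse equivalence $\partial W\simeq\partial W'$, which through \eqref{eqn:localised.Roe.k.theory} (that is, $K_*(C^*_W(\partial W))\cong K_*(C^*(\partial W))$, and likewise for $W'$) produces a canonical isomorphism $\phi_\partial\colon K_*(C^*_W(\partial W))\xrightarrow{\sim}K_*(C^*_{W'}(\partial W'))$ --- with the caveat that the identification ultimately needed is the one compatible with the extension classes below, which may differ by a sign from the most naive choice, reflecting the orientation with which $\partial W$ bounds $W$.

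The crux is the middle terms. Here I would use the thickening description $C^*_W(\partial W)=\varinjlim_R C^*(B_R(\partial W)\cap W)$ together with the analogous filtration of $Q^*(W,\Gamma)$ by the constant $R$ appearing in Definition \ref{DefQuasiInvRoe}, i.e.\ by the subalgebras in which the failure of $\Gamma$-equivariance is confined to $B_R(\partial W)\cap W$. Because $\partial W'\subset B_R(\partial W)$ and $\partial W\subset B_R(\partial W')$, the systems of boundary collars $\{B_R(\partial W)\cap W\}$ and $\{B_R(\partial W')\cap W'\}$ are mutually cofinal, and on each level the coarse equivalence of collars can be implemented by a covering isometry; the essential thing to verify is that such an isometry can be chosen so as to leave the far-from-boundary behaviour untouched, so that it intertwines the periodization maps $\varpi$ and $\varpi'$. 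Passing to the limit then yields a morphism of the short exact sequences \eqref{ShortExactSequence} for $W$ and $W'$ that is $\phi_\partial$ on the ideals and the identity on $C^*(X,\Gamma)$, hence a map $\phi_Q\colon K_*(Q^*(W,\Gamma))\to K_*(Q^*(W',\Gamma))$ fitting into the ladder of six-term sequences \eqref{SixTermSequence}, which the five lemma then upgrades to an isomorphism. Alternatively one can argue in $KK$-theory: the six-term sequence \eqref{SixTermSequence} is determined by the class of \eqref{ShortExactSequence} in $KK^1(C^*(X,\Gamma),C^*_W(\partial W))$, and it suffices to check that the canonical $KK$-equivalence $C^*_W(\partial W)\sim C^*_{W'}(\partial W')$ sends one extension class to the other --- a coarse-invariance statement for a boundary index in the spirit of Roe's partitioned manifold index theorem, which one proves via an explicit invertible intertwiner supported near $\partial W\cup\partial W'$.

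The main obstacle is precisely this middle step, and what makes it nontrivial is that $W$ and $W'$ need not be coarsely equivalent --- only their boundaries are --- so there is in general no $*$-isomorphism $Q^*(W,\Gamma)\cong Q^*(W',\Gamma)$ inducing the identity on the quotient; already for $W$ a half-space and $W'$ the complementary half-space the two periodization maps differ by an orientation flip of $\partial W$. Thus the comparison must be carried out entirely within the boundary collar while simultaneously leaving the operators' behaviour deep inside $W$ and $W'$ undisturbed, and one must check that the resulting identification of localised Roe algebras is the sign-correct one. Everything else --- the identity on $K_i(C^*(X,\Gamma))$, commutativity of all the squares, and the concluding five-lemma argument --- is formal once the morphism of extensions is available.
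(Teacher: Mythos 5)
You have correctly located the crux --- the comparison of the middle terms $Q^*(W,\Gamma)$ and $Q^*(W',\Gamma)$ --- but you do not actually resolve it, and the mechanism you propose would not work as described. A covering isometry implementing the coarse equivalence $\partial W\simeq\partial W'$ acts between Hilbert spaces of boundary collars, whereas elements of $Q^*(W,\Gamma)$ are operators on all of $L^2(W)$; since, as you yourself observe, $W$ and $W'$ need not be coarsely equivalent (they may differ by an unbounded set), there is no isometry $L^2(W)\to L^2(W')$ that is ``the identity far from the boundary,'' and hence no way to pass from collar-level comparisons to a map of the full middle algebras intertwining $\varpi$ and $\varpi'$ by the route you sketch. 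Both your main argument and your $KK$-theoretic alternative defer exactly this verification (``the essential thing to verify\dots'', ``it suffices to check\dots''), so the proof is incomplete at its decisive step.

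The idea you are missing is the reduction to the nested case. One first replaces the pair $(W,W')$ by $(W, W\cap W')$ and $(W', W\cap W')$, so that one may assume $W'\subseteq W$. In that situation the extension-by-zero map $j\colon A\mapsto \Pi_W^* A\Pi_W$ is an honest injective $*$-homomorphism $\cB(L^2(W'))\to\cB(L^2(W))$ which carries $Q^*(W',\Gamma)$ into $Q^*(W,\Gamma)$ and $C^*_{W'}(\partial W')$ into $C^*_W(\partial W)$, and is compatible with the periodization maps; this is the morphism of extensions \eqref{ShortExactSequence} that your construction lacks. That $j_*$ is an isomorphism on $K_*$ of the ideals then follows not from a covering isometry but from a common-core argument: choosing $V\subseteq W'$ with $\partial W,\partial W'\subset B_R(V)$ and $V\subset B_R(\partial W')$, both localised algebras contain $C^*(V)$ via inclusions inducing $K$-theory isomorphisms, and these inclusions commute with $j$. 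The five lemma finishes the proof. Your orientation worry evaporates in the nested case, since $j$ is a concrete inclusion and the ladder of six-term sequences commutes on the nose; the complementary-half-space example you raise is precisely one that does not reduce to the nested configuration, and is outside the scope of the argument.
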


\begin{proof}
We may assume that $W^\prime \subseteq W$. Then the map $j: \cB(L^2(W^\prime)) \rightarrow \cB(L^2(W))$ given by sending $A$ to $\Pi_W^*A\Pi_W$ is an injective $*$-homomorphism, which sends $Q^*(W^\prime, \Gamma)$ to $Q^*(W, \Gamma)$ and $C^*_{W^\prime}(\partial W^\prime)$ to $C^*_W(\partial W)$. 

We claim that $j_*: K_i(C^*_{W^\prime}(\partial W^\prime)) \hookrightarrow K_i(C^*_W(\partial W))$ is an isomorphism. To this end, pick an open subset $V \subset W^\prime$ such that $\partial W^\prime, \partial W \subset B_R(V)$ for some $R\geq 0$ and such that $V \subset B_R(\partial W^\prime)$. By the choice of $V$, both $C^*_{W^\prime}(\partial W^\prime)$ and $C^*_W(\partial W)$ can be described as the closure of the space of locally compact, finite propagation operators on $W^\prime$ (respectively $W$) that are supported near $V$ instead of near $\partial W^\prime$ (respectively $\partial W$). By the considerations before, the inclusions $\iota_{W^\prime}: C^*(V) \hookrightarrow C^*_{W^\prime}(\partial W^\prime)$ and $\iota_{W}: C^*(V) \hookrightarrow C^*_W(\partial W)$ each induce isomorphisms in $K$-theory. On the other hand, we have $\iota_{W} = j \circ \iota_{W^\prime}$, hence $j$ must induce an isomorphism in $K$-theory as well.

The result now follows from the five lemma.
\end{proof}
For example, $W$ might be the standard half-plane $\RR_+\times\RR$ in the Euclidean plane, having straight boundary the vertical axis. Then we could modify $W\rightarrow W^\prime$ such that $\partial W^\prime$ remains within a vertical strip $[-R,R]\times\RR$ but is otherwise arbitrary. This encompasses the \emph{rough boundaries} considered in \cite{Prodan-rough} (see also \cite{Thiang-edge}) in the context of Euclidean lattice models. In \S \ref{sec:cobordism}, we will develop and exploit such ideas in greater generality.

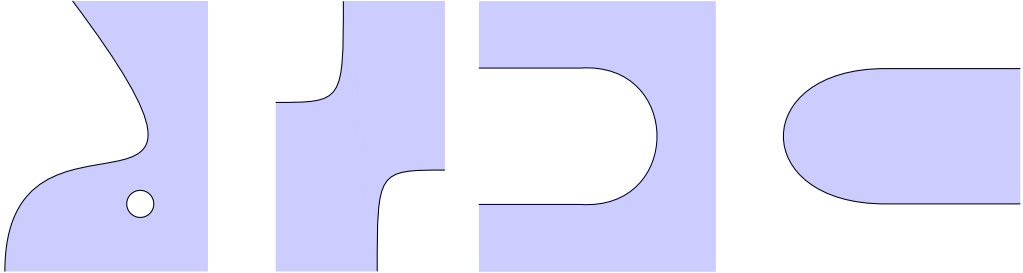
\begin{figure}
\begin{tikzpicture}[scale=0.9, every node/.style={scale=0.9}]

\draw[name path=A] (0,0) .. controls (0,3) and (4,0) .. (1,4);b
\draw[white, name path =B] (3,0) -- (3,4);
\tikzfillbetween[of=A and B]{blue, opacity=0.2}
\filldraw[white, draw=black] (2,1) circle (0.2cm);

\draw[name path=A] (4,2.5) .. controls (5,2.5) .. (5,4) ;
\draw[name path=B] (5.5,0) .. controls (5.5,1.5) .. (6.5,1.5);
\draw[white, name path=C] (4,0) -- (5.5,0);
\draw[white, name path=D] (5,4) -- (6.5,4);
\tikzfillbetween[of=D and B]{blue, opacity=0.2};
\tikzfillbetween[of=A and C]{blue, opacity=0.2};

\draw[thick, name path=A] (7,3) -- (8.5,3).. controls (10,3.1) and (10,0.9) ..(8.5,1) -- (7,1);
\draw[white, name path =B] (7,1) -- (7,3);
\filldraw[blue, opacity=0.2, draw=white] (7.01,0) -- (10.5,0) -- (10.5,4) -- (7.01,4);
\tikzfillbetween[of=A and B]{white};

\draw[name path=A] (15,3) -- (13,3).. controls (11,3) and (11,1) ..(13,1) -- (15,1);
\draw[white, name path =B] (15,1.5) -- (15,2.5);
\tikzfillbetween[of=A and B]{blue, opacity=0.2};

\end{tikzpicture}
\caption{The shaded domains in the first three diagrams show allowed half-spaces $W$ (in the sense of \S \ref{sec:gap.filling.phenomenon}) in the Euclidean plane $X=\RR^2$; they satisfy condition \eqref{ConditionOnW}. $W$ may be multiply-connected and/or have several unbounded boundary components. The last shaded figure fails condition \eqref{ConditionOnW}.}\label{fig:allowed.quasi.invariant}
\end{figure}

\section{Spectral gap filling phenomenon}\label{sec:gap.filling.phenomenon}

{\bf Conventions for the rest of the paper.} We will specialise to $X$ a complete connected Riemannian manifold with an effective, cocompact, properly discontinuous, isometric action of a discrete countable group $\Gamma$. Due to cocompactness of the action, the condition \eqref{ConditionOnW} on closed subsets $W \subset X$ (still with measure zero $\partial W$) becomes the condition that
\begin{equation} \label{ConditionOnW2}
  \text{\itshape{The function $d(x, X \setminus W)$ is unbounded.}}
\end{equation}
We will call such a subspace a \emph{half-subspace of $X$}, or simply a \emph{half-space}. 

\subsection{Functional calculus for Hamiltonians on a subspace}\label{sec:functional.calculus}

Let $H$ be a $\Gamma$-invariant elliptic differential operator on $X$, which is either of first order or of Laplace type (hence second order), with smooth coefficients.
We assume that $H$ is symmetric on the domain $C^\infty_c(X) \subset L^2(X)$; the general theory of such operators then asserts that it has a unique extension to an unbounded, self-adjoint operator on $L^2(X)$, which we denote by $H_X$. In the Laplace case, we assume additionally that $H_X$ is non-negative. More generally, we can consider $E$ a $\Gamma$-equivariant hermitian vector bundle over $X$, and a $\Gamma$-invariant Hamiltonian $H_X$ acting on its sections $L^2(X;E)$, but we will suppress the dependence on $E$ for ease of notation.

Let $W \subseteq X$ be a closed subset as above and consider $H$ on $C^\infty_c(W^\circ)$, where $W^\circ$ is the interior. We assume that we are given a self-adjoint extension $H_W$ of this operator determined by a local elliptic boundary condition. In the Laplace case, we assume that $H_W$ is still non-negative.

\begin{remark}
A typical example of such a boundary condition is the Dirichlet boundary condition $f|_{\partial W} = 0$, but there are usually many others (\cite{RS2} \S X). Let us mention that in the Laplace case, such self-adjoint extensions $H_W$ exist (see \cite{RS2} \S X.3), whereas $i\frac{d}{dx}$ on the half-line $W=[0,\infty)$ gives the classical first-order example with no self-adjoint extensions.
\end{remark}

\begin{theorem}\label{thm:functional.calculus} 
For each $\varphi \in C_0(\RR)$, we have $\varphi(H_X) \in C^*(X, \Gamma)$, $\varphi(H_W) \in Q^*(W, \Gamma)$, and $\varpi \varphi(H_W) = \varphi(H_X)$.
\end{theorem}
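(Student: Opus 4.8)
The strategy is to handle the three assertions in sequence, establishing membership in the respective Roe-type algebras via approximation of $\varphi$ by nicer functions, and then identifying the periodization. First I would reduce to a convenient dense subclass of $C_0(\RR)$. Since $C^*(X,\Gamma)$, $Q^*(W,\Gamma)$ are norm-closed and $\varpi$ is norm-continuous, and since $\varphi \mapsto \varphi(H_X)$, $\varphi\mapsto \varphi(H_W)$ are $*$-homomorphisms $C_0(\RR)\to \cB(L^2(X))$, $C_0(\RR)\to\cB(L^2(W))$ of norm $\le 1$, it suffices to prove all three claims for $\varphi$ ranging over a set whose $\mathbb{C}$-span is dense in $C_0(\RR)$. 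The natural choice is the family of resolvents $\varphi_z(t) = (t - z)^{-1}$ for $z \in \CC\setminus\RR$ (or, in the Laplace case where $H_X, H_W \ge 0$, it is cleaner to use the exponentials $\psi_s(t) = e^{-s t^2}$ in the first-order case and $e^{-st}$ in the Laplace case, $s>0$, whose span is dense in $C_0(\RR)$ resp.\ $C_0([0,\infty))$ by Stone--Weierstrass). I will describe the resolvent approach since it works uniformly in both cases.

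\textbf{Finite propagation and local compactness.} The key analytic input is finite propagation of the wave operators: for a first-order symmetric elliptic $H$, the unitary group $e^{\im t H_X}$ (and likewise $e^{\im t H_W}$ for a local boundary condition) has propagation speed bounded by the supremum of the symbol, so $e^{\im t H_X}$ has propagation $\le c|t|$; for the Laplace case one passes to $\sqrt{H_X}$, which is again of ``unit propagation speed'' type via the wave equation $\partial_t^2 u + H_X u = 0$ and finite speed of propagation for second-order hyperbolic equations with a local boundary condition on $W$. One then writes the resolvent, or better $\varphi(H_X)$ for $\varphi$ a compactly supported smooth function, as $\int \hat\varphi(t)\, e^{\im t H_X}\, \dd t$ (for first order) or via the analogous formula in $\sqrt{H_X}$; truncating the $t$-integral to $|t|\le L$ introduces an error small in norm (since $\hat\varphi \in L^1$ after the smoothing, using that $H$ has order $\le 2$ so $\varphi(H)$ makes sense), and the truncated integral is a norm-limit of finite-propagation operators, hence has finite propagation after closure. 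Local compactness of $\varphi(H_X)$ follows from elliptic regularity: $f\varphi(H_X)$ maps $L^2$ into a local Sobolev space and $f$ is compactly supported, so Rellich gives compactness; the same argument applies to $H_W$ using elliptic regularity up to the boundary for the chosen local elliptic boundary condition. Thus $\varphi(H_X)\in C^*(X)$ and $\varphi(H_W)\in C^*(W)$, and $\Gamma$-invariance of $\varphi(H_X)$ is immediate from $\Gamma$-invariance of $H_X$, giving $\varphi(H_X)\in C^*(X,\Gamma)$.

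\textbf{Quasi-equivariance and the periodization identity.} To see $\varphi(H_W)\in Q^*(W,\Gamma)$, I would verify the defining commutation condition of Definition \ref{DefQuasiInvRoe} for the truncated wave operators and pass to the limit. The point is that if $w\in C_c(W)$ and $T_\gamma w$ both have support at distance $\ge R$ from $\partial W$ with $R$ large compared to the truncation length $L$ (times the propagation speed), then on the relevant region $H_W$ and $U_\gamma^{-1} H_X U_\gamma$ restricted appropriately agree --- more precisely, finite speed of propagation confines $e^{\im t H_W} w$ to a collar-free region of $W$ on which it coincides with the restriction of the $\Gamma$-invariant flow $e^{\im t H_X}$, because $H$ is $\Gamma$-invariant and the boundary condition is never ``felt''. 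This yields $(T_\gamma \varphi(H_W) - \varphi(H_W) T_\gamma)w = 0$ for such $w$, i.e.\ the truncated approximants lie in $\sQ_0(W,\Gamma)$, so $\varphi(H_W)\in Q^*(W,\Gamma)$. The same localization argument shows that for $w$ supported far from $\partial W$, $\Pi_W^*\varphi(H_W) w = \varphi(H_X)\Pi_W^* w$ (both sides computed via the $t$-integral of the respective wave flows, which agree on the relevant region by finite propagation and $\Gamma$-invariance). By the uniqueness clause of the periodization Theorem, this identity forces $\varpi\varphi(H_W) = \varphi(H_X)$. Finally one removes the smoothness/compact-support hypothesis on $\varphi$ by density and norm-continuity of all maps involved.

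\textbf{Main obstacle.} The delicate point is the finite-propagation property for $\varphi(H_W)$ in the Laplace (second-order) case in the presence of a boundary: one needs finite speed of propagation for the wave equation $\partial_t^2 u = -H_W u$ with the chosen \emph{local elliptic} boundary condition, together with the fact that on regions not yet reached by the boundary's influence the solution agrees with the free ($\Gamma$-invariant) one. This requires a careful energy-estimate argument adapted to the boundary condition (for Dirichlet/Neumann it is classical; for general local elliptic conditions one invokes the standard hyperbolic theory with boundary, e.g.\ as in \cite{RS2} \S X or the literature on mixed problems). Making the ``influence of the boundary has not yet arrived'' statement precise --- and hence controlling the constant $R$ in Definition \ref{DefQuasiInvRoe} uniformly --- is the technical heart of the argument; everything else is a standard functional-calculus-by-density packaging.
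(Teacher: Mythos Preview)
Your approach is essentially the same as the paper's: both rely on the finite-propagation property of the wave group (resp.\ $\cos(s\sqrt{H})$ in the Laplace case), represent $\varphi(H)$ via the Fourier transform, and reduce to a dense subclass of $C_0(\RR)$. The local compactness argument via elliptic regularity and the identification of the ``main obstacle'' (finite speed of propagation for the boundary problem, together with agreement of the two flows away from the boundary) match the paper exactly.

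There is one tactical difference worth pointing out. You propose to verify membership $\varphi(H_W)\in Q^*(W,\Gamma)$ by checking the $T_\gamma$-commutation condition of Definition~\ref{DefQuasiInvRoe} directly, and then separately to verify the periodization identity via the uniqueness clause. The paper instead bundles these two steps into one: it shows that for $\varphi$ with compactly supported Fourier transform, the difference $\varphi(H_W)-\sigma\varphi(H_X)$ annihilates every $w$ supported sufficiently far from $\partial W$, i.e.\ lies in $\sC_{W,0}(\partial W)\subset C^*_W(\partial W)=\ker\varpi$. Since $\sigma\varphi(H_X)$ is manifestly in $Q^*(W,\Gamma)$, this yields both $\varphi(H_W)\in Q^*(W,\Gamma)$ and $\varpi\varphi(H_W)=\varpi\sigma\varphi(H_X)=\varphi(H_X)$ at once, without ever touching the $T_\gamma$-condition. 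This is slightly cleaner and sidesteps the bookkeeping of supports of $w$ versus $T_\gamma w$ that your route requires. Your proposal is correct but a little more roundabout on this point; also note that what you call the ``resolvent approach'' is never actually used---you (like the paper) work with the wave representation throughout.
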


\begin{proof}
We first discuss the case where $H$ is a Laplace type operator. Since $H_X$ is positive, we have $\psi(\sqrt{H_X}) = \varphi(H_X)$ with $\psi(x) = \varphi(x^2)$. Since $\psi$ is an even function, we have the Fourier transform formula
\begin{equation*}
  \varphi(H_X)w = \psi\bigl(\sqrt{H_X}\bigr)w= \frac{1}{\pi} \int_0^\infty \hat{\psi}(s) \cos\bigl(s \sqrt{H_X}\bigr)w \, \dd s,
\end{equation*}
where $\hat{\psi}$ is the Fourier transform of $\psi$. The wave operator $\cos\bigl( s \sqrt{H_X}\bigr)$ has finite propagation, hence if $\hat{\psi}$ is compactly supported, $\varphi(H_X)$ has finite propagation as well. Local compactness is a consequence of elliptic regularity \cite[Prop.~10.5.1]{HigsonRoeBook}. Since functions with compactly supported Fourier transform are dense in $C_0(\RR)$, the result follows.

Let now $W$ be a subset as above. By assumption, $H_W$ is still positive, hence we get again
\begin{equation*}
  \varphi(H_W)w = \frac{1}{\pi} \int_0^\infty \hat{\psi}(s) \cos\bigl( s \sqrt{H_W}\bigr)w \dd s.
\end{equation*}
Now in $W$, both $\cos( s \sqrt{H_W})w$ and $\cos(s \sqrt{H_X})w$ solve the wave equation $(\partial_s^2 + H)w_s = 0$ in $W$ with initial conditions $w_0 = w$ and $\partial_s w_0 = 0$. By uniqueness of solutions to the wave equation, we have 
\begin{equation*}
  \cos\bigl(2 \pi s \sqrt{H_W}\bigr) w = \cos\bigl(2 \pi s \sqrt{H_X}\bigr)w
\end{equation*}
for all times $s \leq R$, whenever $w \in C_c(W)$ with $d(\mathrm{supp}(w), \partial W) >R$. In particular, we have
\begin{equation*}
\bigl(\varphi(H_W) - \sigma\varphi(H_X)\bigr)w = 2 \int_R^\infty \hat{\varphi}(s) \Bigl(\cos\bigl(2 \pi s \sqrt{H_W}\bigr) - \Pi_W\cos\bigl(2 \pi s \sqrt{H_X}\bigr)\Pi_W^*\Bigr)w\, \dd s,
\end{equation*}
where $\Pi_W: \cB(L^2(X)) \rightarrow \cB(L^2(W))$ denotes the restriction operator with adjoint $\Pi_W^*:\cB(L^2(W))\rightarrow \cB(L^2(X))$ the extension-by-zero operator as before. Now if $\hat{\varphi}$ has compactly supported Fourier transform, the right hand side vanishes if $R$ is large enough. This means that $\bigl(\varphi(H_W) - \sigma\varphi(H_X)\bigr)w = 0$ whenever $w$ has distance larger than $R$ from the boundary. Hence $\varphi(H_W) - \sigma\varphi(H_X)$ is supported near $\partial W$, and the result in this case follows from Eq.\ \eqref{ShortExactSequence}. For the general case, use the fact that functions $\varphi$ with compactly supported Fourier transform are dense in $C_0(\RR)$.

If $H_X$ is a first order operator, a similar Fourier transform argument can be made using the fact that the wave semigroups $e^{i s H_X}$ and $e^{i s H_W}$ have finite propagation speed.
\end{proof}

\subsection{Exponential map in $K$-theory detects spectral gap filling}\label{sec:gap.filling}
Suppose we are given a compact subset $S\subset {\rm Spec}(H_X)$, which is separated from the rest of the spectrum by spectral gaps, say $(a, \inf S)$, $(\sup S,b) \subseteq \RR \setminus \mathrm{Spec}(H_X)$. Let $P_S \in \cB(L^2(X))$ be the orthogonal projection onto the spectral subspace $L^2_S(X)$ determined by $S$. Owing to the existence of spectral gaps, $P_S$ can be written as a smooth, compactly supported function of $H_X$, hence by Thm.~\ref{thm:functional.calculus}, we have $P_S \in C^*(X, \Gamma)$ so that we obtain a class $[P_S] \in K_0(C^*(X,\Gamma))$; compare also \cite{Ludewig-Thiang}.

With $H_W$ as in \S\ref{sec:functional.calculus}, we note the following easy consequence of Theorem \ref{thm:functional.calculus}. 
\begin{corollary}\label{cor:specu.contains.specx}
The spectrum of $H_W$ contains that of $H_X$.
\end{corollary}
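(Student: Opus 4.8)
The plan is to deduce the spectral inclusion directly from the functional calculus statement $\varpi\varphi(H_W) = \varphi(H_X)$ in Theorem~\ref{thm:functional.calculus}, together with the fact that $\varpi$ is a $*$-homomorphism between $C^*$-algebras. First I would recall the standard characterisation of the spectrum in terms of the continuous functional calculus: a real number $\lambda$ lies in $\mathrm{Spec}(H_X)$ if and only if $\varphi(H_X)\neq 0$ for every $\varphi\in C_0(\RR)$ that does not vanish near $\lambda$; equivalently, $\lambda\notin\mathrm{Spec}(H_X)$ iff there exists $\varphi\in C_0(\RR)$ with $\varphi\equiv 1$ near $\lambda$ and $\varphi(H_X)=0$. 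The same statement holds verbatim with $H_W$ in place of $H_X$. So it suffices to show: if $\varphi(H_W)=0$ for some $\varphi\in C_0(\RR)$, then $\varphi(H_X)=0$ as well --- which is immediate, since $\varphi(H_X)=\varpi\varphi(H_W)=\varpi(0)=0$.

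Spelling this out: suppose $\lambda\in\RR$ is \emph{not} in $\mathrm{Spec}(H_W)$. Then the resolvent set is open, so $\lambda$ has an open neighbourhood $U$ disjoint from $\mathrm{Spec}(H_W)$, and we may pick $\varphi\in C_0(\RR)$ (even $C_c(\RR)$) with $\varphi$ supported in $U$ and $\varphi(\lambda)=1$. Because $\varphi$ vanishes on $\mathrm{Spec}(H_W)$, the spectral theorem gives $\varphi(H_W)=0$. Applying the periodization homomorphism and Theorem~\ref{thm:functional.calculus}, $\varphi(H_X)=\varpi\varphi(H_W)=0$. Since $\varphi(\lambda)=1$ and $\varphi$ is continuous, this forces $\lambda\notin\mathrm{Spec}(H_X)$: otherwise the spectral projection of $H_X$ on a small interval around $\lambda$ would be nonzero and be left fixed (up to scaling) by multiplication by $\varphi$. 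Taking contrapositives, $\mathrm{Spec}(H_X)\subseteq\mathrm{Spec}(H_W)$.

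There is essentially no obstacle here; the only thing to be mildly careful about is that the functional calculus identity was proved for $\varphi\in C_0(\RR)$, and we only ever need such $\varphi$, so no density or extension argument beyond what is already in Theorem~\ref{thm:functional.calculus} is required. (If one preferred, one could phrase the whole argument algebraically: $\varphi\mapsto\varphi(H_X)$ and $\varphi\mapsto\varphi(H_W)$ are $*$-homomorphisms $C_0(\RR)\to\cB$, and $\varpi$ intertwines them, so $\ker(\varphi\mapsto\varphi(H_W))\subseteq\ker(\varphi\mapsto\varphi(H_X))$; the two kernels are the ideals of functions vanishing on the respective spectra, and the inclusion of ideals is exactly the reverse inclusion of closed sets.) The statement is genuinely an ``easy consequence'' as advertised, and I would keep the written proof to just a few lines along the lines of the second paragraph.
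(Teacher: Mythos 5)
Your argument is correct and is essentially the paper's own proof: both reduce the claim to the implication $\varphi(H_W)=0\Rightarrow\varphi(H_X)=\varpi\varphi(H_W)=0$ for a bump function $\varphi$ supported in the resolvent set of $H_W$ but not vanishing at a putative point of $\mathrm{Spec}(H_X)$. The only difference is presentational (you argue pointwise by contrapositive, the paper by contradiction on a subinterval), so nothing further is needed.
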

\begin{proof}
Suppose otherwise, that the resolvent $\rho(H_W)$ has some intersection with $\mathrm{Spec}(H_X)$. Pick a bounded open subinterval $V\subset \rho(H_W)$ such that $V\cap \mathrm{Spec}(H_X)\neq \emptyset$, and let $\tilde{\varphi}$ be a continuous bump function supported in $V$ which is not zero on $\mathrm{Spec}(H_X)$. This means that $\tilde{\varphi}(H_W)=0$ but $\tilde{\varphi}(H_X)\neq 0$. But Theorem \ref{thm:functional.calculus} would then give $0\neq \tilde{\varphi}(H_X)=\varpi\tilde{\varphi}(H_W)=\varpi(0)=0$.
\end{proof}
So in the passage from $H_X$ to $H_W$, a spectral gap of $H_X$ may become partially filled with new spectra of $H_W$. We are interested in whether the gap persists at all, or whether it instead gets completely filled. The following theorem gives a criterion for this.

\begin{theorem}[Spectral gap filling]\label{thm:gap.filling}
If ${\rm Exp}_W[P_S]\neq 0$ in the 6-term exact sequence Eq.\ \eqref{SixTermSequence} associated to $W\subset X$, then either $(a, \inf S)$ or $(\sup S,b)$ is in ${\rm Spec}(H_W)$.
\end{theorem}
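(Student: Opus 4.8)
The plan is to argue by contrapositive: assume that \emph{neither} $(a,\inf S)$ nor $(\sup S, b)$ lies in $\mathrm{Spec}(H_W)$, and deduce that $\mathrm{Exp}_W[P_S] = 0$. Under this assumption, there is a gap in $\mathrm{Spec}(H_W)$ on each side of $S$, so we can choose a continuous function $\chi \in C_0(\RR)$ (even a smooth compactly supported one) with $\chi \equiv 1$ on a neighbourhood of $S$, supported inside $(a,b)$, and such that $\chi$ is \emph{locally constant} near $\mathrm{Spec}(H_W)$ — concretely, $\chi$ takes the value $1$ on the portion of $\mathrm{Spec}(H_W)$ near $S$ and the value $0$ elsewhere on $\mathrm{Spec}(H_W)$. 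Then $\chi(H_W)$ is an \emph{idempotent} (indeed a genuine orthogonal projection, since $\chi$ is real-valued), because $\chi^2$ agrees with $\chi$ on $\mathrm{Spec}(H_W)$; moreover $\chi(H_W) \in Q^*(W,\Gamma)$ by Theorem~\ref{thm:functional.calculus}. At the same time $\chi(H_X) = P_S$ since $\chi \equiv 1$ on $S$ and $\chi$ vanishes on the rest of $\mathrm{Spec}(H_X)$ (which is possible because $S$ is spectrally isolated in $H_X$), and by the same theorem $\varpi \chi(H_W) = \chi(H_X) = P_S$.

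The second step is to recall the standard description of the exponential map. Given a projection $p \in C^*(X,\Gamma)$ representing a class in $K_0$, one lifts $p$ to a \emph{self-adjoint} element $\tilde{p} \in Q^*(W,\Gamma)$ with $\varpi \tilde p = p$, and then $\mathrm{Exp}_W[p]$ is represented by the unitary $e^{2\pi i \tilde p} \in (C^*_W(\partial W))^+$ (it is a unitary in the unitisation because $\varpi(e^{2\pi i \tilde p}) = e^{2\pi i p} = 1$, so $e^{2\pi i \tilde p} - 1 \in C^*_W(\partial W) = \ker \varpi$ by Lemma~\ref{LemmaLocalisedRoeAlgebraKernel}). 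Now take $\tilde p := \chi(H_W)$, which is a legitimate self-adjoint lift of $P_S$. Since $\chi(H_W)$ is itself a projection (an idempotent), we have $e^{2\pi i \chi(H_W)} = 1 + (e^{2\pi i} - 1)\chi(H_W) = 1$. Hence the unitary representing $\mathrm{Exp}_W[P_S]$ is the identity, so $\mathrm{Exp}_W[P_S] = 0$ in $K_1(C^*_W(\partial W))$. Taking the contrapositive gives the theorem.

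The main point requiring care — and the place where the hypotheses are genuinely used — is the construction of the function $\chi$: one needs simultaneously that $\chi(H_X) = P_S$ (which uses that $S$ is separated from $\mathrm{Spec}(H_X)$ by the gaps $(a,\inf S)$ and $(\sup S, b)$) and that $\chi^2 = \chi$ on $\mathrm{Spec}(H_W)$ (which uses precisely the \emph{assumed} absence of the gaps-filling spectrum, i.e.\ that $\mathrm{Spec}(H_W)$ misses the intervals $(a,\inf S)$ and $(\sup S, b)$ where $\chi$ would otherwise have to interpolate between $0$ and $1$). Indeed Corollary~\ref{cor:specu.contains.specx} tells us $\mathrm{Spec}(H_W) \supseteq \mathrm{Spec}(H_X)$, so $\mathrm{Spec}(H_W) \cap (a,b)$ consists of the part of $\mathrm{Spec}(H_X)$ inside $S$ together with whatever new spectrum has appeared in the two gap intervals; ruling out the latter is exactly what lets $\chi$ be chosen locally constant on $\mathrm{Spec}(H_W)$. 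A secondary technical point is justifying that $\chi$ can be taken with compactly supported Fourier transform (or approximated by such), so that Theorem~\ref{thm:functional.calculus} applies to put $\chi(H_W)$ in $Q^*(W,\Gamma)$; this is routine since such functions are dense and being a projection is stable, but one should phrase the argument so that the idempotent $\chi(H_W)$ — not merely a norm-approximant — lies in the algebra.
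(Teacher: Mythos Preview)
Your proposal is correct and follows essentially the same route as the paper's proof: argue by contrapositive, choose a compactly supported smooth function $\varphi_S$ (your $\chi$) equal to $1$ on $S$ and supported in $(a,b)$, with the transitions from $0$ to $1$ placed in the sub-gaps of $\mathrm{Spec}(H_W)$ guaranteed by the contrapositive hypothesis; then $\varphi_S(H_W)\in Q^*(W,\Gamma)$ is a projection lifting $P_S$, so $\exp(-2\pi i\,\varphi_S(H_W))=1$ and $\mathrm{Exp}_W[P_S]=0$.

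One small correction: your ``secondary technical point'' about compactly supported Fourier transform is unnecessary. Theorem~\ref{thm:functional.calculus} is stated and proved for \emph{all} $\varphi\in C_0(\RR)$; the functions with compactly supported Fourier transform appear only as a dense subclass inside that proof. So $\chi(H_W)\in Q^*(W,\Gamma)$ holds directly, with no approximation argument needed.
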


\begin{proof}
Let $\varphi_S$ be any compactly supported smooth function such that $\varphi_S \equiv 1$ on $S$ and $\varphi_S = 0$ on $\mathrm{Spec}(H_X) \setminus S$. Then $P_S = \varphi_S(H_X)$. By Thm.~\ref{thm:functional.calculus}, we have $\varphi_S(H_X) \in C^*(X, \Gamma)$, $\varphi_S(H_W) \in Q^*(W, \Gamma)$ and $\varpi \varphi_S(H_W) = \varphi_S(H_X)$. Hence by definition of the exponential map, we have
\begin{equation*}
  \Exp_W([P_S]) = \bigl[\exp \bigl( - 2 \pi i \varphi_S(H_W)\bigr)\bigr].
\end{equation*}
Now suppose that there exist open sets $(c, d) \subseteq (a, \inf S)$, $(e, f) \subseteq (\sup S, b)$ not contained in $\mathrm{Spec}(H_W)$. Then we can choose the above function $\varphi_S$ in such a way that $\varphi_S \equiv 1$ on $(d, e)$ and $\varphi_S \equiv 0$ on $(-\infty, c] \cup [f, \infty)$. Since this function $\varphi_S$ takes the values zero and one on $\mathrm{Spec}(H_W)$, $\varphi_S(H_W)$ is a projection. Hence $\exp(-2\pi i \varphi_S(H_W))$ is the identity, hence $\Exp([P_S]) = 0$.

This shows that if $\Exp_W([P_S]) \neq 0$, the spectrum of $H_W$ cannot contain non-empty open subsets of both $[a, \inf S]$ and $[\sup S, b]$. Since $\mathrm{Spec}(H_W)$ is a closed set, this implies that one of the sets $[a, \inf S]$ or $[\sup S, b]$ must be contained in $\mathrm{Spec}(H_W)$.
\end{proof}

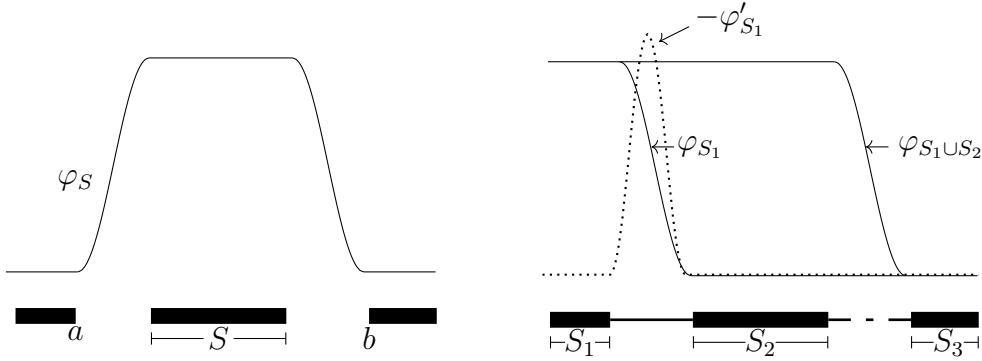
\begin{figure}[h]
\begin{minipage}{.45\linewidth}
\begin{tikzpicture}
       \begin{axis}[axis lines=none,yscale=0.6]
         \addplot[samples=20, smooth, domain=0.5:1]
           plot (\x, {0.5*(1+ sin(deg(2*pi*(\x-0.75))))});
           \addplot[samples=20, smooth, domain=2:2.5]
           plot (\x, {0.5*(1- sin(deg(2*pi*(\x-2.25))))});
            \addplot[samples=3, smooth, domain=0:0.5]
           plot (\x, {0});
            \addplot[samples=3, smooth, domain=1:2]
           plot (\x, {1});
           \addplot[samples=3, smooth, domain=2.5:3]
           plot (\x, {0});
      
       \end{axis}
            \draw[line width=6pt] (0.7,-0.3) -- (1.50,-0.3);
             \draw[line width=6pt] (2.5,-0.3) -- (4.3,-0.3);
              \draw[|-] (2.5,-0.6) -- (3.2,-0.6);
              \node at (3.4,-0.6) {$S$};
              \draw[-|] (3.6,-0.6) -- (4.3,-0.6);
              \draw[line width=6pt] (5.4,-0.3) -- (6.3,-0.3);
              \node at (1.5,1.5) {$\varphi_S$}; 
               \node[below] at (1.5,-0.3) {$a$}; 
                \node[below] at (5.4,-0.3) {$b$}; 
                \node at (3,3.8) {};
     \end{tikzpicture}
\end{minipage}
\hspace{.05\linewidth}
\begin{minipage}{.45\linewidth}
\begin{tikzpicture}
       \begin{axis}[axis lines=none,yscale=0.6]
         \addplot[samples=20, smooth, domain=0.5:1]
           plot (\x, {0.5*(1- sin(deg(2*pi*(\x-0.75))))});
           \addplot[samples=20, smooth, domain=2:2.5]
           plot (\x, {0.5*(1- sin(deg(2*pi*(\x-2.25))))});
            \addplot[samples=3, smooth, domain=0:2]
           plot (\x, {1});
           \addplot[samples=3, smooth, domain=1:3]
           plot (\x, {0});
      
       \end{axis}
       \draw[thick,dotted] (0.5,0.3) -- (1.4,0.3) .. controls (1.6,0.4) and (1.7,3.4) .. (1.9,3.5) .. controls (2.1,3.4) and (2.2,0.4) .. (2.4,0.3) -- (6.3,0.3);
       
            \draw[line width=6pt] (0.6,-0.3) -- (1.4,-0.3);
             \draw[line width=6pt] (2.5,-0.3) -- (4.3,-0.3);
             \draw[|-] (0.6,-0.6) -- (0.8,-0.6);
              \node at (1,-0.6) {$S_1$};
              \draw[-|] (1.2,-0.6) -- (1.4,-0.6);
              \draw[|-] (2.5,-0.6) -- (3.2,-0.6);
              \node at (3.4,-0.6) {$S_2$};
              \draw[-|] (3.6,-0.6) -- (4.3,-0.6);
              \draw[|-] (5.4,-0.6) -- (5.7,-0.6);
               \node at (5.9,-0.6) {$S_3$};
              \draw[-|] (6.1,-0.6) -- (6.3,-0.6);
              \draw[line width=6pt] (5.4,-0.3) -- (6.3,-0.3);
              \draw[line width=1pt] (1.4,-0.3) -- (2.5,-0.3);
               \draw[line width=1pt] (4.3,-0.3) -- (4.6,-0.3);
                \draw[line width=1pt] (4.8,-0.3) -- (4.9,-0.3);
                 \draw[line width=1pt] (5.1,-0.3) -- (5.4,-0.3);
              \node at (5.8,2) {$\varphi_{S_1\cup S_2}$}; 
              \node at (2.6,2) {$\varphi_{S_1}$}; 
               \node at (3,3.7) {$-\varphi_{S_1}^\prime$}; 
              \draw[->] (2.25,2)--(1.95,2); 
              \draw[->] (5.1,2)--(4.8,2); 
              \draw[->] (2.45,3.6)--(2.05,3.4); 
     \end{tikzpicture}
\end{minipage}
\caption{(L) Thick lines indicate the spectrum of $H_X$ as a subset of the real horizontal axis. A compact separated part $S$ of the spectrum has spectral projection $\varphi_S(H_X)$ for some smooth function $\varphi_S$ which is 1 on $S$ and $0$ elsewhere in the spectrum. (R) Suppose $H_X$ is bounded below, and $S_1, S_2, S_3$ are the first three separated parts of its spectrum, with ${\rm Exp}_W[\varphi_{S_1\cup S_2}(H_X)]=0$ but ${\rm Exp}_W[\varphi_{S_1}]\neq 0$. Then as indicated by the thinner horizontal lines, the spectrum of $H_W$ will include the entire gap between $S_1$ and $S_2$, but not necessarily the gap between $S_2$ and $S_3$. The dotted curve denotes $-\varphi_{S_1}^\prime$.}\label{fig:spectral.projection}
\end{figure}

\begin{remark}\label{rem:choice.of.phi}
If $H_W$ is bounded below, we may consider $S$ to be the spectrum of $H_X$ lying below some resolvent value (the \emph{Fermi level} is one example of physical interest). If  ${\rm Exp}_W([P_S])\neq 0$ so that gap-filling occurs, it must be the bounded spectral gap \emph{above} $S$ which is filled in the passage from $H_X$ to $H_W$, rather the unbounded gap below $S$. Another way to see this is to choose $\varphi_S$ to be 1 on $(-\infty, {\rm sup}\, S]$ and 0 on $[b,\infty)$, see Fig.\ \ref{fig:spectral.projection}.
\end{remark}

\section{Cobordism invariance of gap-filling}\label{sec:cobordism}

Guided by the constructions used in the partitioned manifold index theorem in \cite{Roe-coarse-book} \S4, we shall 
construct an index map $K_1(C^*_W(\partial W))\rightarrow \ZZ$ associated to a partition of $W$. To avoid confusion in what follows, we mention that in writing $\partial W$ for $\partial_X W$ (the boundary of $W$ in $X$), we had been keeping the background $X$ implicit by convention.

\vspace{1em}
\noindent {\bf Partitioning a space.} Given a subset $Z$ of a topological space $W$, the \emph{regular complement} is defined\footnote{Here, the symbol ${}^\circ$ denotes taking the interior, while $\overline{(\cdot)}$ denotes closure (both taken in $W$). } as $Z^\perp:=W\setminus Z^\circ=\overline{W\setminus Z}$, and it is easy to see that $(Z^\perp)^\perp=\overline{Z^\circ}$. We say that $Z$ is \emph{regular closed} if $Z=\overline{Z^\circ}$ ($=(Z^\perp)^\perp$). 

Let $W_+$ be a regular closed subset of $W$, then $W_-:=W_+^\perp$ is also regular closed. The interiors of $W_+$ and $W_-$ are disjoint, and the remaining subset
\begin{equation*}
N := W\setminus (W_+^\circ\cup W_-^\circ)=W_+^\perp\cap W_-^\perp=W_-\cap W_+,
\end{equation*}
is just their intersection. 
Note that
\begin{equation*}
\partial_W W_+=\overline{W_+}\cap\overline{W\setminus W_+}=\underbrace{W_+ \cap W_-}_{N}=\overline{W\setminus W_-}\cap\overline{W_-}=\partial_W W_-,
\end{equation*}
so that $N$ is simultaneously the boundary (inside $W$) of $W_+$ and of $W_-$. Thus, specifying a regular closed subset $W_+\subset W$ gives a sensible notion of partitioning $W$, and swapping $+$ and $-$ just switches the two ``sides'' of $N$.

\begin{example}\label{ex:standard.halfplane.example}
The standard example is $W$ the closed right-half Euclidean plane, and $W_+$ the closed upper-right quadrant. Then $W_-$ is the closed lower-right quadrant, while the partitioning set $N$ is the positive $x$-axis. Other examples are illustrated in Fig. \ref{fig:partition.modification}.
\end{example}

Since we are primarily interested in partitioning spaces $W$ that themselves arise as half-spaces inside $X$, we make the following restriction to avoid pathological partitions on $W$.

\begin{definition}[Admissible partition]\label{defn:admissible.subset}
Let $W\subset X$ be a half-subspace (as defined in \S \ref{sec:gap.filling.phenomenon}). An \emph{admissible} subset $W_+\subset W$ is a regular closed subset of $W$, with the following properties:
\begin{enumerate}
\item[(i)] For each $R\geq 0$, there exists $S\geq 0$ such that \mbox{$B_R(W_+)\cap B_R(W_-)\subset B_S(N)$}, or equivalently,
\mbox{$B_R(W_+)\cap B_R(W_-)\setminus B_S(N)=\emptyset$}. Here $N:=W_+\cap W_-$.
\item[(ii)] For each $R\geq 0$, the set $Q^{W;W_+}_R := B_R(X\setminus W)\cap B_R(N)$ is bounded.
\item[(iii)] $N$ has measure zero.
\end{enumerate}

\end{definition}
Note that $W_+$ is admissible iff its regular complement $W_-$ is admissible. 

\begin{remark}
If $W=X$, condition (i) is the precisely the notion of a \emph{coarsely excisive decomposition} of $W$ \cite{HRY}. Condition (ii) is a transversality condition between $N$ and $\partial W$ (the boundary of $W$ in $X$).
\end{remark}

\begin{example}\label{ex:nonadmissible.example}
Two examples of \emph{inadmissible} partitions of half-spaces $W$ in the Euclidean plane are illustrated below.

\begin{center}
\begin{tikzpicture}[scale=0.8, every node/.style={scale=0.8}]

\draw[thick, name path=A] (0,3) -- (2,3).. controls (3,3) and (3,1) ..(2,1) -- (0,1);
\draw[white, name path =B] (0,1) -- (0,3);
\filldraw[blue, opacity=0.4, draw=white] (0.01,0) -- (4,0) -- (4,4) -- (0.01,4);
\tikzfillbetween[of=A and B]{white};
\filldraw[white, opacity=0.4] (0,0)--(4,0)--(4,2)--(0,2);
\draw (2.75,2)--(4,2);
\draw (0,3) -- (2,3).. controls (3,3) and (3,1) ..(2,1) -- (0,1);
\node at (3,3.5) {$W_+$};
\node at (3,0.5) {$W_-$};
\node at (3.5,2.2) {$N$};
\node at (1,2.8) {$\partial W$};

\filldraw[blue, opacity=0.4, draw=white] (7,0) -- (10,0) -- (10,4) -- (7,4);
\filldraw[white, opacity=0.4, draw=white] (7,0) -- (8,0) -- (8,2) -- (7,2);
\draw (7,0)--(7,4);
\draw(7,2)--(8,2)--(8,0);
\node at (9,3.5) {$W_+$};
\node at (7.5,0.5) {$W_-$};
\node at (8.25,1.8) {$N$};
\node at (6.6,3) {$\partial W$};

\end{tikzpicture}
\end{center}
\end{example}

\begin{lemma}\label{lem:compact.commutator}
Let $\Pi_{W_+}$ be the multiplication operator on $L^2(W)$ by the characteristic function of $W_+$. For any admissible subset $W_+\subset W$ and any operator $A\in C^*_W(\partial W)$, the commutator $[\Pi_{W_+},A]$ is compact.
\end{lemma}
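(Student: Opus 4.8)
The plan is to reduce the statement to the case of operators that are supported near $\partial W$ (an honest dense subalgebra), and there estimate the commutator directly by exploiting finite propagation together with admissibility condition (ii). First I would note that it suffices to prove compactness of $[\Pi_{W_+},A]$ for $A$ in the dense subalgebra $\sC_{W,0}(\partial W)$ of locally compact, finite propagation operators supported near $\partial W$, since $\Pi_{W_+}$ has norm one so $A\mapsto [\Pi_{W_+},A]$ is norm-continuous and the compact operators form a closed ideal. So fix such an $A$ with propagation $S$ and supported in $B_R(\partial W)\cap W$ for some $R$.

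The key geometric observation is that $[\Pi_{W_+},A]$ is ``doubly supported'': on the one hand, because $A$ is supported within distance $R$ of $\partial W = \partial_X W$, both $A$ and $\Pi_{W_+}A$ vanish on functions supported far from $\partial W$, and by the propagation-$S$ bound the range of $[\Pi_{W_+},A]$ lies in $B_{R+S}(\partial W)\cap W$. On the other hand, $\Pi_{W_+}$ commutes with any multiplication operator, so $[\Pi_{W_+},A]w=0$ whenever $w$ is supported in $W_+^\circ$ or in $W_-^\circ$ at distance $>S$ from $N$ (there the operator $A$ does not see the jump of $\chi_{W_+}$); more precisely, writing out $[\Pi_{W_+},A] = \Pi_{W_+}A(1-\Pi_{B_S(N)\cap W}) + \cdots$, one sees that $[\Pi_{W_+},A]$ factors through multiplication by (the characteristic function of) $B_S(N)\cap W$ on both sides up to terms that vanish. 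Hence $[\Pi_{W_+},A] = f\,[\Pi_{W_+},A]\,g$ where $f$ is the characteristic function of $B_{R+S}(\partial W)\cap W$ and $g$ that of $B_S(N)\cap W$ — this requires a short but careful support bookkeeping, choosing $f$ to also absorb $B_S(N)$ from the left.

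Now admissibility condition (ii) enters: the set $Q^{W;W_+}_{R+S} = B_{R+S}(X\setminus W)\cap B_{R+S}(N)$ is bounded, hence — since $W$ is a proper metric space — relatively compact. One checks that the common support region forced above, namely $\big(B_{R+S}(\partial W)\cap W\big)\cap\big(B_S(N)\cap W\big)$ (and more generally the product of the two cutoffs) is contained in such a bounded set $Q^{W;W_+}_{R'}$ for $R'=\max(R+S,S)$, so both cutoff functions $f,g$ can be taken compactly supported. Since $A$ is locally compact, $fA$ (or $Ag$) is then compact, and therefore $[\Pi_{W_+},A] = f[\Pi_{W_+},A]g$ is compact as a product with a compact operator. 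The main obstacle — and the place deserving genuine care rather than hand-waving — is the support bookkeeping of the second paragraph: one must verify that after commuting $\Pi_{W_+}$ past $A$, the ``error'' away from $B_S(N)$ truly cancels (using that $\chi_{W_+}$ is locally constant off $N$ and $A$ has propagation $S$), so that the commutator really is sandwiched between the two characteristic functions whose overlap condition (ii) controls. Once that localisation is established, condition (ii) plus local compactness of $A$ finishes it immediately; conditions (i) and (iii) are not needed for this particular lemma.
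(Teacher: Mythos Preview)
Your overall strategy---reduce to the dense subalgebra $\sC_{W,0}(\partial W)$, show the commutator has bounded support, and invoke local compactness---matches the paper's. However, your claim that condition~(i) is not needed is a genuine gap, and the specific localisation you assert is false.

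You write that $[\Pi_{W_+},A]w=0$ whenever $w$ is supported in $W_+^\circ$ (or $W_-^\circ$) at distance $>S$ from $N$, on the grounds that $\chi_{W_+}$ is ``locally constant off $N$''. But local constancy of $\chi_{W_+}$ on $W\setminus N$ does not control the commutator with a propagation-$S$ operator unless the two pieces $W_+\setminus N$ and $W_-\setminus N$ stay at distance $>S$ from each other. That is exactly what can fail without condition~(i). For a concrete counterexample, take $W$ to be the union of two parallel half-planes in $\RR^2$ separated by a thin strip, $W_+$ the upper one, $W_-$ the lower one, so $N=\emptyset$. Here conditions~(ii) and~(iii) hold trivially, yet for $A$ with propagation exceeding the strip width, $[\Pi_{W_+},A]w$ is typically nonzero for $w$ supported anywhere along the lower boundary line---so the commutator is not compact, and in particular is certainly not supported in $B_S(N)=\emptyset$.

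What finite propagation \emph{does} give you is that $[\Pi_{W_+},A]$ vanishes on functions supported at distance $>S$ from $W_-$ (and symmetrically from $W_+$); equivalently, the commutator is supported in $B_S(W_+)\cap B_S(W_-)$. Condition~(i) is precisely the hypothesis that converts this into $B_{S'}(N)$ for some $S'$, after which your argument (intersecting with the $\partial W$-support of $A$ and applying condition~(ii)) goes through. This is exactly how the paper's proof proceeds. So the fix is small but essential: replace ``distance $>S$ from $N$'' by ``distance $>S$ from $W_-$ (resp.\ $W_+$)'' in your localisation step, and then invoke condition~(i) to pass to a neighbourhood of $N$.
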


\begin{proof}
Let $A \in \cB(L^2(W))$ have finite propagation strictly less than $R>0$. Then for $f \in C_0(W)$ with support in $W_+$ of distance at least $R$ from $W_-$, $Af$ is still supported in $W_+$. Hence $[\Pi_{W_+}, A]f = \Pi_{W_+} Af - Af = 0$. Moreover, for any $g \in C_0(W)$, we have $f[\Pi_{W_+}, A]g = fAg - fA\Pi_{W_+}g = fA\Pi_{W_-} g$. However, $A \Pi_{W_-} g$ has support in $B_R(W_-)$, hence $fA\Pi_{W_-} g = 0$.

Similarly, if $f$ has support in $W_-$ with distance at least $R$ from $W_+$, then $Af$ is still supported in $W_-$ and $[\Pi_{W_+}, A]f = \Pi_{W_+}Af = 0$. Moreover, for any $g$, we have $f[\Pi_{W_+}, A] g = - fA\Pi_{W_+} g = 0$, since $A \Pi_{W_+} g$ is supported in $B_R(W_+)$. Together with the observations from the previous paragraph, this shows that $f[\Pi_{W_+}, A] g = 0$ whenever one of $f, g$ has support of distance at least $R$ from either $W_+$ or $W_-$. From the admissibility criterion (i) in Definition \ref{defn:admissible.subset}, $f[\Pi_{W_+}, A] g = 0$ whenever one of $f, g$ has support of distance at least $S$ from $N$.

Suppose additionally that $A$ is locally compact and such that $f Ag = 0$ whenever one of $f, g$ has support with at least distance $R'$ from $\partial W$. Now if one of $f, g$ has support at least distance $R'$ from $\partial W$, so has $\Pi_{W_+}f$ respectively $\Pi_{W_+}g$, hence $g[\Pi_{W_+}, A] f = 0$. Together with the argument before, this shows that $f[\Pi_{W_+}, A] g = 0$ whenever one of $f, g$ has support outside the relatively compact subset $Q^{W;W_+}_{{\rm max}\{R',S\}}$ in Definition \ref{defn:admissible.subset}. Choosing a compactly supported function $\chi \in C(W)$ with $\chi \equiv 1$ on $W\cap Q^{W;W_+}_{{\rm max}\{R',S\}}$, we therefore obtain that $[\Pi_{W_+}, A] = \chi[\Pi_{W_+}, A]\chi$, and the assumption that $A$ (hence also $[\Pi_{W_+}, A]$) is locally compact implies compactness of $[\Pi_{W_+}, A]$.

We have now proven the lemma for all operators $A \in \sC_{W,0}(\partial W)$. A general $A \in C^*_W(\partial W)$ can be written as $A = \lim_n A_n$, where $A_n \in \sC_{W,0}(\partial W)$ and the limit is in the operator norm. Therefore, $[\Pi_{W_+}, A] = \lim_n [\Pi_{W_+}, A_n]$ is a norm limit of compact operators, hence compact.
\end{proof}

It follows from Lemma \ref{lem:compact.commutator}, extended in the obvious way to direct sums, that for any invertible $A \in M_n(C^*_W(\partial W)^+)$, the compression $T_A := \Pi_{W_+} A \Pi_{W_+}^* \in \cB(L^2(W_+)^n)$ is invertible modulo compact operators, hence Fredholm (here $\Pi_{W_+}$ acts diagonally on $L^2(W)^n$). 
\begin{definition}\label{defn:index.hom}
Associated to an admissible subset $W_+$ of $W\subseteq X$ is the map
\begin{equation}
\theta_{W_+} \equiv\theta_{W; W_+}:K_1(C^*_W(\partial W))\rightarrow \ZZ,\qquad [u]\mapsto {\rm Index}\,T_u\label{eqn:Fredholm.index}
\end{equation}
where $u\in M_n(C^*_W(\partial W)^+)$ is a representative unitary.
\end{definition}
The extra subscript $W$ in $\theta_{W; W_+}$ will only be included when the role of $W$ needs to be emphasised. One easily checks that $\theta_{W_+}$ is well-defined and additive.

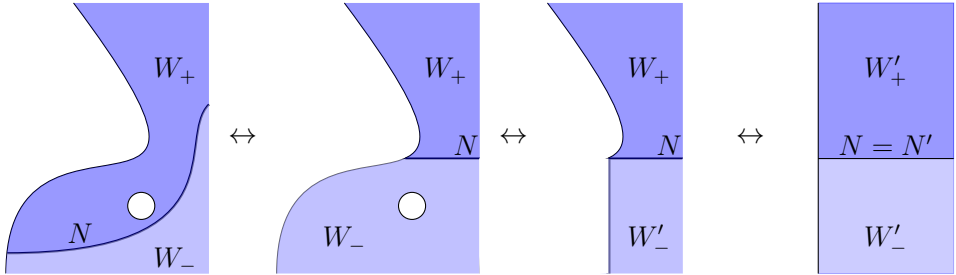
\begin{figure}
\begin{tikzpicture}[scale=0.9, every node/.style={scale=0.9}]

\draw[name path=A] (0,0) .. controls (0,3) and (4,0) .. (1,4);b
\draw[white, name path =B] (3,0) -- (3,4);
\draw[thick, name path=C] (0.02,0.3) .. controls (3.5,0.3) and (2.5,2) .. (3,2.5); 
\draw[white, name path=D] (0,0) -- (3,0);
\tikzfillbetween[of=A and B]{blue, opacity=0.4}
\tikzfillbetween[of=D and C]{white, opacity=0.4}
\filldraw[white, draw=black] (2,1) circle (0.2cm);
\node at (2.5,3) {$W_+$};
\node at (2.5,0.2) {$W_-$};
\node at (1.1,0.6) {$N$};

\node at (3.5,2) {$\leftrightarrow$};

\draw[name path=A] (4,0) .. controls (4,3) and (8,0) .. (5,4);
\draw[white, name path =B] (7,0) -- (7,4);
\draw[thick, name path=C] (5.9,1.7)--(7,1.7);
\draw[white, name path=D] (3.9,0)--(3.9,1.7);
\draw[white, name path =E] (7,0) -- (7,1.7);
\tikzfillbetween[of=A and B]{blue, opacity=0.4}
\tikzfillbetween[of=D and E]{white, opacity=0.4}
\filldraw[white, draw=black] (6,1) circle (0.2cm);
\node at (6.5,3) {$W_+$};
\node at (5,0.5) {$W_-$};
\node at (6.8,1.9) {$N$};

\node at (7.5,2) {$\leftrightarrow$};

\draw[name path=A] (7,0) .. controls (7,3) and (11,0) .. (8,4);
\draw[white, name path =B] (10,0) -- (10,4);
\draw[thick, name path=C] (8.9,1.7)--(10,1.7);
\draw[ultra thick, name path=E] (8.9,1.7)--(8.9,0);
\draw[white, name path=F] (7,1.7)--(7,-0.1);
\draw[white, name path =D] (10,0) -- (10,1.7);
\tikzfillbetween[of=A and B]{blue, opacity=0.4}
\tikzfillbetween[of=D and E]{white, opacity=0.4}
\tikzfillbetween[of=F and E]{white, opacity=1}
\node at (9.5,3) {$W_+$};
\node at (9.5,0.5) {$W_-^\prime$};
\node at (9.8,1.9) {$N$};

\node at (11,2) {$\leftrightarrow$};

\filldraw[blue, opacity=0.4] (12,1.7)--(14,1.7)--(14,4)--(12,4);
\filldraw[blue, opacity=0.2] (12,1.7)--(14,1.7)--(14,0)--(12,0);
\draw (12,0)--(12,4);
\draw (12,1.7)--(14,1.7);
\node at (13,1.9) {$N=N^\prime$};
\node at (13,3) {$W_+^\prime$};
\node at (13,0.5) {$W_-^\prime$};

\end{tikzpicture}
\caption{The first two diagrams show two admissible partitions of the same non-simply-connected half-space $W$, which are bordant. The second to fourth diagrams illustrate Theorem \ref{thm:can.change.W} : while keeping $N$ fixed, we can modify $W_-\rightarrow W_-^\prime$ and then $W_+\rightarrow W_+^\prime$ to arrive at the standard partition of the half-plane $W^\prime$ in the last diagram, without changing the map $\theta_{W;W_+}\circ\Exp_W:K_0(C^*(X,\Gamma))\rightarrow\ZZ$.}\label{fig:partition.modification}
\end{figure}

We proceed to show that $\theta_{W_+}$ is somewhat insensitive to the choice of admissible $W_+\subset W$.

\begin{definition}\label{defn:bordant.partition}
Let $W_+$, $W_+^\prime$ be two admissible subsets of $W\subset X$, and let $W_+ \Delta W_+^\prime$ denote their {\em symmetric difference}, i.e.\ the set of $x \in W$ that are contained in exactly one of $W_+$, $W_+^\prime$. We say that $W_+$ and $W_+^\prime$ are \emph{bordant} if $W_+\cap W_+^\prime$ is another admissible subset of $W$, and the set $B_R(X\setminus W) \cap (W_+ \Delta W_+^\prime)$ is bounded for each $R\geq 0$. 
\end{definition}

Fig.\ \ref{fig:partition.modification} and Fig.\ \ref{fig:multiple.boundaries} show some examples of partitions by bordant and non-bordant admissible subsets.

\begin{proposition}\label{prop:cobordism.partition}
If $W_+$, $W_+^\prime$ are bordant admissible subsets of $W\subset X$, then $\theta_{W_+} = \theta_{W_+^\prime}$.
\end{proposition}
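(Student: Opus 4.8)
The plan is to show that $\theta_{W_+}$ and $\theta_{W_+'}$ agree by exhibiting a homotopy (through Fredholm operators) between the compressions $T_u = \Pi_{W_+} u \Pi_{W_+}^*$ and $T'_u = \Pi_{W_+'} u \Pi_{W_+'}^*$ for a fixed unitary representative $u \in M_n(C^*_W(\partial W)^+)$, or, what amounts to essentially the same thing, to show that $T_u$ and $T'_u$ differ by a compact operator. The latter is the cleaner route: since Fredholm index is invariant under compact perturbation, $\operatorname{Index} T_u = \operatorname{Index} T'_u$ would follow at once, and hence $\theta_{W_+} = \theta_{W_+'}$. So the technical heart is the claim: \emph{for any $A \in C^*_W(\partial W)$, the difference $\Pi_{W_+} A \Pi_{W_+}^* - \Pi_{W_+'} A \Pi_{W_+'}^*$ (suitably interpreted, e.g.\ as operators on $L^2(W)$ after extending by zero, or via a fixed unitary identification of $L^2(W_+)$ with $L^2(W_+')$) is compact.}

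First I would reduce, exactly as in the proof of Lemma \ref{lem:compact.commutator}, to the dense subalgebra $\sC_{W,0}(\partial W)$: a general $A \in C^*_W(\partial W)$ is a norm limit of such operators, and compactness is closed under norm limits. So fix $A \in \sC_{W,0}(\partial W)$ with finite propagation $R$ and supported within distance $R'$ of $\partial W$. Next I would rewrite the difference of compressions in terms of the multiplication operators $\Pi_{W_+}, \Pi_{W_+'}$ acting on $L^2(W)$: using $\Pi_{W_+}^* \Pi_{W_+} = $ multiplication by $\mathbbm{1}_{W_+}$, the operator $\Pi_{W_+}^* T_u \Pi_{W_+} = \Pi_{W_+} A \Pi_{W_+}$ (mild abuse: here $\Pi_{W_+}$ denotes the multiplication operator as in Lemma \ref{lem:compact.commutator}), so the difference becomes $\Pi_{W_+} A \Pi_{W_+} - \Pi_{W_+'} A \Pi_{W_+'}$, which I would split as $\Pi_{W_+}[\Pi_{W_+}-\Pi_{W_+'}, A]$ plus $[\Pi_{W_+}, A]\Pi_{W_+'} - [\Pi_{W_+}-\Pi_{W_+'},A]\Pi_{W_+'}$ — more simply, write it as $(\Pi_{W_+}-\Pi_{W_+'}) A \Pi_{W_+} + \Pi_{W_+'} A (\Pi_{W_+} - \Pi_{W_+'})$. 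Now $\Pi_{W_+} - \Pi_{W_+'}$ is the multiplication operator by a function supported on the symmetric difference $W_+ \Delta W_+'$. The key point: because $A$ has propagation $R$ and is supported within $R'$ of $\partial W$, each term vanishes unless one is working within distance $R$ of $W_+ \Delta W_+'$ \emph{and} within distance $R'$ of $\partial W$; that is, each term equals $\chi (\cdots) \chi$ for a compactly supported $\chi$ that is $1$ on $B_{R'}(\partial W) \cap B_R(W_+ \Delta W_+')$. This set is bounded precisely by the bordance hypothesis $B_R(X\setminus W) \cap (W_+\Delta W_+')$ bounded (using that $B_{R'}(\partial W) \subset B_{R'+c}(X\setminus W)$ for the relevant constant, or working directly with $\partial W$). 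Local compactness of $A$ then forces the difference to be compact.

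There is one bookkeeping subtlety I would have to handle carefully: the operators $T_u$ and $T'_u$ act on the different Hilbert spaces $L^2(W_+)^n$ and $L^2(W_+')^n$, so ``differ by a compact operator'' must be made precise. The clean fix is to work throughout on $L^2(W)^n$ with the multiplication operators $\Pi_{W_+}, \Pi_{W_+'}$, noting that $\Pi_{W_+} u \Pi_{W_+} + (\mathbbm 1 - \Pi_{W_+})$ is Fredholm on $L^2(W)^n$ with the same index as $T_u$, and then comparing these two augmented operators; their difference is $(\Pi_{W_+}u\Pi_{W_+} - \Pi_{W_+'}u\Pi_{W_+'}) - (\Pi_{W_+} - \Pi_{W_+'})$, where the first bracket is compact by the above and the second is multiplication by a bounded function — not compact in general! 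So this naive augmentation does not work, and I would instead argue via a norm-continuous path: set $W_+(t)$ to interpolate, or more robustly, use that $W_+ \cap W_+'$ is itself admissible (part of the bordance hypothesis) and prove $\theta_{W_+} = \theta_{W_+ \cap W_+'} = \theta_{W_+'}$ in two steps, each step being the case where one partition contains the other. In the containment case $W_+' \subseteq W_+$, the inclusion $L^2(W_+') \hookrightarrow L^2(W_+)$ lets one write $T_u$ as a $2\times 2$ block operator with respect to $L^2(W_+) = L^2(W_+') \oplus L^2(W_+ \setminus W_+')$, and the off-diagonal blocks and the defect of the diagonal blocks from $T'_u \oplus (\text{Fredholm of index }0)$ are compact by the propagation/support estimate above, whence the indices agree. \textbf{The main obstacle} I anticipate is precisely this Hilbert-space matching: getting a clean statement that survives the fact that $\Pi_{W_+} - \Pi_{W_+'}$ is a genuinely non-compact multiplication operator, which is why the two-step reduction through the admissible intersection $W_+ \cap W_+'$ (rather than a direct comparison) is the right structural move, and the compactness estimate itself — essentially a rerun of Lemma \ref{lem:compact.commutator} with $\Pi_{W_+} - \Pi_{W_+'}$ playing a localizing role — is routine once set up correctly.
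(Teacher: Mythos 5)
Your final route --- reduce by density to $\sC_{W,0}(\partial W)$, pass through the admissible intersection $W_+\cap W_+'$, block-decompose $T_u$ with respect to $L^2(W_+)=L^2(W_+\cap W_+')\oplus L^2(W_+\setminus W_+')$, and show the off-diagonal and defect blocks are compact (compactly supported via the boundedness of $B_R(X\setminus W)\cap(W_+\Delta W_+')$ together with finite propagation and support near $\partial W$, then locally compact, hence compact) --- is exactly the paper's proof, including the concluding appeal to invariance of the index under compact perturbation. Your correct diagnosis that the naive direct comparison fails because $\Pi_{W_+}-\Pi_{W_+'}$ is not compact is precisely why the paper also routes the argument through $W_+\cap W_+'$, so the proposal is correct and essentially identical in approach.
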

\begin{proof}
By a density argument, it suffices to check $\theta_{W_+}, \theta_{W_+^\prime}$ on a unitary $u=1+A\in C^*_W(\partial W)^+$ with $A\in \sC_{W,0}(\partial W)$. Write $L^2(W_+) = L^2(W_+\setminus W_+^\prime) \oplus L^2(W_+\cap W_+^\prime)$. With respect to this splitting, $T_u=\Pi_{W_+}u$ has the matrix representation
\begin{equation*}
  T_u = 
  \begin{pmatrix} 
  1+S & R_0 \\
  R_1  & \tilde{T}_u
  \end{pmatrix},\qquad \text{where} \quad \begin{cases} R_0 &= \Pi_{W_+\setminus W_+^\prime}A \Pi_{W_+\cap W_+^\prime}^*, \\ R_1 &= \Pi_{W_+\cap W_+^\prime}A \Pi_{W_+\setminus W_+^\prime}^*, \\ S &= \Pi_{W_+\setminus W_+^\prime}A \Pi_{W_+\setminus W_+^\prime}^*, \end{cases}
\end{equation*}
and $\tilde{T}_u=\Pi_{W_+\cap W_+^\prime}u\Pi_{W_+\cap W_+^\prime}^*$ is the Fredholm operator obtained by compressing $u$ to $W_+\cap W_+^\prime$. Since $B_R(\partial W) \cap (W_+\setminus W_+^\prime)\subset B_R(X\setminus W) \cap (W_+\Delta W_+^\prime)$ is bounded for any $R$ (thus relatively compact), and $A$ has finite propagation with support near $\partial W$, the operators $R_0, R_1, S$ have compact support. They are also locally compact, as $A$ is, hence compact.  Since the index is invariant under compact perturbations, we obtain
\begin{equation*}
  \mathrm{Index}(T_u) = \mathrm{Index}
  \begin{pmatrix} 
  1 & 0 \\
  0  & \tilde{T}_u
  \end{pmatrix} =  \mathrm{Index}(\tilde{T}_u).
  \end{equation*}
  Switching the roles of $W_+$ and $W_+^\prime$, for $T_u^\prime=\Pi_{W_+^\prime}u$, we also have
\begin{equation*}
\mathrm{Index}(T_u^\prime)=\mathrm{Index}(\tilde{T}_u).
\end{equation*}
Passing to direct sums, we obtain $\theta_{W_+}[u]=\theta_{W_+^\prime}[u]$ for any class in $K_1(C^*_W(\partial W))$.
\end{proof}

The remainder of this section is devoted to demonstrating that the map $\theta_{W;W_+}\circ\Exp_W:K_0(C^*(X,\Gamma))\rightarrow \ZZ$ depends only on the partitioning subset $N$ and so we can modify $W$ significantly (Theorem \ref{thm:can.change.W}).

\begin{proposition} \label{PropHalfCoincide}
Let $W, W^\prime \subset X$ be two half-spaces, and suppose $W_+ \subset W \cap W^\prime$ is admissible for both $W$ and $W^\prime$, and has the same boundary $N$ in $W$ and in $W^\prime$. Then the maps $\theta_{W; W_+}\circ\Exp_W$ and $\theta_{W^\prime; W_+}\circ\Exp_{W^\prime}$ coincide as homomorphisms $K_0(C^*(X, \Gamma))\rightarrow \ZZ$.
\end{proposition}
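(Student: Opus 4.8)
The plan is to compare the two Fredholm index constructions directly, using the fact that both $\Exp_W$ and $\Exp_{W'}$ can be represented by the same functional calculus expression on $L^2(W\cap W')$-type data. Concretely, given a class $[P_S]\in K_0(C^*(X,\Gamma))$ represented by $P_S=\varphi_S(H_X)$, Theorem \ref{thm:gap.filling} (its proof) tells us that $\Exp_W[P_S]=[\exp(-2\pi i\,\varphi_S(H_W))]\in K_1(C^*_W(\partial W))$, and similarly $\Exp_{W'}[P_S]=[\exp(-2\pi i\,\varphi_S(H_{W'}))]$. I would first observe that the value $\theta_{W;W_+}\circ\Exp_W[P_S]$ is the Fredholm index of $\Pi_{W_+}\exp(-2\pi i\,\varphi_S(H_W))\Pi_{W_+}^*$ acting on $L^2(W_+)$ (extended to matrix algebras over the unitalisation as needed). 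So the goal becomes: show that $\Pi_{W_+}\exp(-2\pi i\,\varphi_S(H_W))\Pi_{W_+}^*$ and $\Pi_{W_+}\exp(-2\pi i\,\varphi_S(H_{W'}))\Pi_{W_+}^*$ have the same index.

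The key point is that both compressed unitaries live on the \emph{same} Hilbert space $L^2(W_+)$, since $W_+\subset W\cap W'$. I would compare $\exp(-2\pi i\,\varphi_S(H_W))$ and $\exp(-2\pi i\,\varphi_S(H_{W'}))$ after compressing to $W_+$: the claim is that their difference, once compressed by $\Pi_{W_+}$, is a \emph{compact} operator, from which index-invariance under compact perturbation finishes the argument. To see compactness, I would use the finite-propagation wave-equation argument from the proof of Theorem \ref{thm:functional.calculus}: writing $\varphi_S(H_W)$ and $\varphi_S(H_{W'})$ as integrals of wave operators $\cos(2\pi s\sqrt{H_W})$ and $\cos(2\pi s\sqrt{H_{W'}})$ against a compactly supported $\hat\varphi_S$, uniqueness of solutions to the wave equation shows these agree on functions supported at distance $>R$ from $\partial W$ and $\partial W'$ respectively (for fixed $R$ determined by $\mathrm{supp}\,\hat\varphi_S$). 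Hence $\varphi_S(H_W)-\Pi_W\varphi_S(H_{W'})\Pi_W^*$, restricted suitably, is supported near $\partial W\cup\partial W'$. Now I compress to $W_+$ and cut down by $\Pi_{W_+}$: because $W_+$ has the \emph{same boundary} $N$ inside $W$ and inside $W'$, the ``boundary region'' of $W$ and of $W'$ that is actually seen inside $W_+$ is — up to the transversality bound — contained in $B_{R'}(\partial W)\cap W_+ \subset B_{R'}(N)$ (and similarly for $W'$), and then admissibility condition (ii), $Q^{W;W_+}_{R'}=B_{R'}(X\setminus W)\cap B_{R'}(N)$ bounded, forces the relevant support to be relatively compact; combined with local compactness this yields compactness.

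I expect the main obstacle to be the bookkeeping around exactly where the operator $\Pi_{W_+}\bigl(\varphi_S(H_W)-\varphi_S(H_{W'})\bigr)\Pi_{W_+}^*$ is supported. One has to be careful that ``agree far from $\partial W$'' and ``agree far from $\partial W'$'' together only give agreement far from $\partial W\cup\partial W'$, and then argue that the portion of $\partial W\cup\partial W'$ lying within a bounded neighbourhood of $W_+$ — equivalently within a bounded neighbourhood of $N$, since $N=\partial_W W_+=\partial_{W'}W_+$ — meets $W_+$ only in a relatively compact set, which is precisely what conditions (i) and (ii) of admissibility (applied for both $W$ and $W'$) are designed to guarantee. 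Once compactness of the compressed difference is established, the equality $\theta_{W;W_+}\circ\Exp_W=\theta_{W';W_+}\circ\Exp_{W'}$ on each generator $[P_S]$ follows immediately, and since such classes are general enough (and both sides are homomorphisms), one concludes equality of homomorphisms $K_0(C^*(X,\Gamma))\to\ZZ$. A density argument over general $\varphi\in C_0(\RR)$ with compactly supported Fourier transform, as in Theorem \ref{thm:functional.calculus}, handles an arbitrary representative.
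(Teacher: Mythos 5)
Your overall strategy --- realise both index computations on the common Hilbert space $L^2(W_+)$ and show the two compressed unitaries differ by a compact operator --- is the right one, and is what the paper does in Step~3 of its proof. But there are two genuine gaps.

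First, the proposition concerns arbitrary classes in $K_0(C^*(X,\Gamma))$, whereas you only treat classes of the form $[\varphi_S(H_X)]$ for a Hamiltonian $H_X$, asserting that ``such classes are general enough.'' That is not justified: not every projection in $M_n(C^*(X,\Gamma))$ is a spectral projection of a $\Gamma$-invariant elliptic operator. The paper instead takes an arbitrary projection $p\in M_n(C^*(X,\Gamma))$ and uses the canonical self-adjoint lifts $q=\sigma p=\Pi_W p\,\Pi_W^*$ and $q^\prime=\sigma^\prime p=\Pi_{W^\prime}p\,\Pi_{W^\prime}^*$ coming from the section maps \eqref{eqn:section.map}, so that $\Exp_W[p]=[\exp(-2\pi i q)]$ without any reference to a Hamiltonian.

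Second, and more seriously, the compactness step fails as written. Your claim that the relevant support is contained in $B_{R^\prime}(\partial W)\cap W_+\subset B_{R^\prime}(N)$ is false: in the standard example $W=\RR_+\times\RR$, $W_+=\RR_+\times\RR_+$, $N=\RR_+\times\{0\}$, the set $B_{R^\prime}(\partial W)\cap W_+=[0,R^\prime]\times\RR_+$ is an unbounded vertical half-strip escaping every neighbourhood of $N$, so admissibility condition (ii) gives you nothing there. The wave-equation argument only tells you that $\varphi_S(H_W)$ and $\varphi_S(H_{W^\prime})$ agree \emph{away} from $\partial W\cup\partial W^\prime$; a neighbourhood of $\partial W\cup\partial W^\prime$ meets $W_+$ in an unbounded set (the ``outer'' boundary of $W_+$), and local compactness alone does not make the compressed difference compact there. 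What rescues the argument in the paper is that $q$ and $q^\prime$ are compressions of the \emph{same} globally defined operator $p$, hence coincide \emph{exactly} on $Y_1=W\cap W^\prime$ (no wave-equation estimate needed); the difference $q-q^\prime$ is therefore supported near $Y_2=W_-\,\Delta\,W_-^\prime$, and the paper's Step~2 shows $B_R(Y_2)\cap B_R(W_+)$ is bounded, which via Lemma \ref{lem:PropHalfCoincideLemma} yields compactness of $\Pi_{W_+}(q^k-(q^\prime)^k)\Pi_{W_+}$ and hence of the difference of the compressed exponentials. With your Hamiltonian lifts the difference does not vanish on the overlap near the common outer boundary of $W_+$ (the two operators carry boundary conditions on different sets), so no such bounded-support argument is available, and the compact-perturbation conclusion does not follow.
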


We will need the following lemma.

\begin{lemma}\label{lem:PropHalfCoincideLemma}
Let $W \subset X$ be closed, and let $Z_1, Z_2 \subset W$ be two subsets such that $B_R(Z_1) \cap B_R(Z_2)$ is bounded for each $R\geq 0$. Then $\Pi_{Z_1} A \Pi_{Z_2}$ is a compact operator on $L^2(W)$ for each element of $C^*(W)$.
\end{lemma}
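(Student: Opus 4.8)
The plan is to reduce to the generating case and exploit finite propagation. First I would observe that it suffices to prove the statement for $A \in \sB_0(W)$ a locally compact, finite-propagation operator, since such operators are dense in $C^*(W)$ and compact operators form a closed ideal: if $A = \lim_n A_n$ with each $\Pi_{Z_1} A_n \Pi_{Z_2}$ compact, then $\Pi_{Z_1} A \Pi_{Z_2} = \lim_n \Pi_{Z_1} A_n \Pi_{Z_2}$ is a norm-limit of compacts, hence compact. (Here $\Pi_{Z_i}$ denotes the multiplication operator by the characteristic function of $Z_i$, which has norm $\le 1$.)

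Next, fix such an $A$ with propagation at most $R$. The key point is a support-squeezing argument analogous to the one in the proof of Lemma \ref{lem:compact.commutator}. For any $f, g \in C_c(W)$, the operator $\Pi_{Z_1} A \Pi_{Z_2}$ satisfies $f \Pi_{Z_1} A \Pi_{Z_2} g = 0$ unless $\mathrm{supp}(f)$ meets $Z_1$ and $\mathrm{supp}(g)$ meets $Z_2$; more precisely, because $A$ has propagation $\le R$, the operator $\Pi_{Z_1} A \Pi_{Z_2}$ vanishes when tested against functions supported outside $B_R(Z_1) \cap Z_1$ on the left or outside $B_R(Z_2) \cap Z_2$ on the right. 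In fact, writing $\Pi_{Z_1} A \Pi_{Z_2} = \Pi_{Z_1}(\Pi_{B_R(Z_2)} A \Pi_{Z_2})$ and using that $\mathrm{ran}(\Pi_{Z_1} A \Pi_{Z_2})$ is contained in functions supported in $Z_1 \cap B_R(Z_2)$ while the operator kills anything supported away from $Z_2 \cap B_R(Z_1)$, one sees that $\Pi_{Z_1} A \Pi_{Z_2} = \chi_1\, (\Pi_{Z_1} A \Pi_{Z_2})\, \chi_2$ where $\chi_i \in C_c(W)$ is any function equal to $1$ on the set $K := B_R(Z_1) \cap B_R(Z_2)$, which is bounded by hypothesis, hence relatively compact since $W$ is proper. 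Then $\chi_1, \chi_2 \in C_c(W)$ and local compactness of $A$ gives that $\chi_1 A \chi_2$ is compact, so $\Pi_{Z_1} A \Pi_{Z_2} = \chi_1 (\Pi_{Z_1} A \Pi_{Z_2}) \chi_2$ is compact as well (a product of a bounded operator with compacts on both sides).

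The step I expect to require the most care is the bookkeeping in the previous paragraph: verifying cleanly that $\Pi_{Z_1} A \Pi_{Z_2}$ is ``localised'' in the compact set $K = B_R(Z_1) \cap B_R(Z_2)$ in the precise sense that sandwiching by cutoffs $\chi_i$ equal to $1$ on $K$ leaves it unchanged. This uses that $A(\mathrm{supp}\, g)$-supported vectors live within distance $R$ of $\mathrm{supp}(g) \subseteq Z_2$, combined symmetrically on the left with the adjoint viewpoint $A^* $ having propagation $\le R$ as well, so that the range of $\Pi_{Z_1} A \Pi_{Z_2}$ consists of functions on $Z_1$ within distance $R$ of $Z_2$, i.e.\ supported in $Z_1 \cap B_R(Z_2) \subseteq K$; and dually its ``co-support'' lies in $Z_2 \cap B_R(Z_1) \subseteq K$. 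Everything else is routine.
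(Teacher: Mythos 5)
Your proof is correct and follows essentially the same route as the paper: reduce by density to a locally compact, finite-propagation $A$, observe that finite propagation together with the boundedness of $B_R(Z_1)\cap B_R(Z_2)$ forces $\Pi_{Z_1}A\Pi_{Z_2}$ to have bounded (hence relatively compact) support, and then invoke local compactness via compactly supported cutoffs. The paper states this in two lines; your version merely spells out the cutoff bookkeeping.
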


\begin{proof}
If $A$ has finite propagation, then $\Pi_{Z_1} A\Pi_{Z_2}$ has bounded support by the assumption on $Z_1, Z_2$. It is moreover locally compact since $A$ is, hence compact. For general $A \in C^*(W)$, write $A$ as a norm limit over a sequence $(A_n)_{n \in \NN}$ of finite propagation operators. Then $\Pi_{Z_1} A \Pi_{Z_2}$ is the norm limit of the sequence $(\Pi_{Z_1} A_n\Pi_{Z_2})_{n\in \NN}$ of compact operators, hence compact. 
\end{proof}

\begin{proof}[Proof of Proposition \ref{PropHalfCoincide}]
\noindent {\em Step 1:} We show that $W_+$ remains admissible as a subset of $V=W \cup W^\prime$. Denote by $V_-$ the regular complement of $W_+$ in $V$, and notice that 
\begin{align*}
 V_- = \overline{(W \cup W^\prime) \setminus W_+} = \overline{(W \setminus W_+) \cup (W^\prime \setminus W_+)} &= \overline{(W \setminus W_+)} \cup \overline{(W^\prime \setminus W_+)} \\
 &= W_- \cup W_-^\prime.
\end{align*}
Here, $W, W^\prime, V\subset X$ are closed, so the above closures can be taken inside $X$. It follows that $W_+$ is regular closed in $V$, with boundary in $V$ being $W_+\cap V_-=(W_+\cap W_-)\cup(W_+\cap W_-^\prime)=N$. Thus condition (iii) is satisfied.

Condition (i) also follows: for any $R \geq 0$, there is a $S\geq 0$ such that
\begin{align*}
B_R(W_+)\cap B_R(V_-) &= \bigl(B_R(W_+)\cap B_R(W_-)\bigr)\cup\bigl(B_R(W_+)\cap B_R(W_-^\prime)\bigr)\\
&\subset B_S(N)
\end{align*}
by admissibility of $W_+$ in both $W$ and $W^\prime$.

Since $B_R(X\setminus V)\subset B_R(X\setminus W)\cup B_R(X\setminus W^\prime)$, taking intersection with $B_R(N)$ gives $Q^{V;W_+}_R\subset Q^{W;W_+}_R\cup Q^{W^\prime;W_+}_R$ which is bounded by assumption. So condition (ii) is satisfied.

\noindent {\em Step 2:} We show that $W_- \Delta W_-^\prime$ satisfy that $B_R(W_- \Delta W_-^\prime) \cap B_R(W_+)$ is bounded for each $R\geq 0 $. Given $R\geq 0$, pick $S\geq 0$ according to condition (i) to ensure that 
\begin{equation*}
B_R(W_-)\cap B_R(W_+)\setminus B_S(N)=\emptyset=B_R(W_-^\prime)\cap B_R(W_+)\setminus B_S(N).
\end{equation*}
Then $B_R(W_-\Delta W_-^\prime)\cap B_R(W_+)\setminus B_S(N)=\emptyset$ as well. Thus
\begin{align*}
B_R(W_- \Delta W_-^\prime)\cap B_R(W_+) &= B_R(W_- \Delta W_-^\prime)\cap B_R(W_+)\cap B_S(N) \\
&\subset B_R(W_- \Delta W_-^\prime)\cap B_S(N).
\end{align*}
Now note that a point in $W_-\Delta W_-^\prime$ is in $X\setminus W_+$, and either in $(X\setminus W_-)$ or $(X\setminus W_-^\prime)$, i.e.\
\begin{equation*}
W_-\Delta W_-^\prime\subset \bigl(X\setminus (W_+\cup W_-)\bigr)\cup \bigl(X\setminus (W_+\cup W_-^\prime)\bigr)=(X\setminus W)\cup (X\setminus W^\prime).
\end{equation*}
It follows that 
\begin{align*}
B_R(W_- \Delta W_-^\prime)\cap B_R(W_+)&\subset B_R\bigl((X\setminus W)\cup (X\setminus W^\prime)\bigr)\cap B_S(N)\\
&\subset \bigl(B_{R+S}(X\setminus W)\cap B_{R+S}(N)\bigr)\\
&\qquad\qquad \cup \bigl(B_{R+S}(X\setminus W^\prime)\cap B_{R+S}(N)\bigr)\\
&= Q_{R+S}^{W;W_+}\cup Q_{R+S}^{W^\prime;W_+},
\end{align*}
which is bounded by condition (ii).

\noindent {\em Step 3:} Let $p \in M_n(C^*(X, \Gamma))$ be a projection. Let $q = \sigma p \in M_n(Q^*(W, \Gamma))$, $q^\prime = \sigma^\prime p \in M_n(Q^*(W^\prime, \Gamma))$ be their canonical lifts by the section maps \eqref{eqn:section.map}. Extension by zero gives $L^2(W), L^2(W^\prime) \subset L^2(V)$, and we may view $q$, $q^\prime$ as elements of $C^*(V)$, the Roe algebra of $V$. We note that in passing from $W,W^\prime$ to $V$, the exponentials ${\rm exp}(-2\pi iq)$, ${\rm exp}(-2\pi iq^\prime)$ are merely modified by an identity operator on a complementary Hilbert space, so the indices of 
$T_{{\rm exp}(-2\pi i q)}$ and $T_{{\rm exp}(-2\pi i q^\prime)}$ are not affected. 

Let $Y_1 = W \cap W^\prime$ and $Y_2 = W_- \Delta W_-^\prime$ so that $V = Y_1 \cup Y_2$. Since $q$ and $q^\prime$ are obtained from the operator $p$, they coincide on $Y_1$. We therefore have
\begin{equation} \label{DecompositionDifference}
  q - q^\prime = \Pi_{Y_2} (q-q^\prime) \Pi_{Y_2} + \Pi_{Y_1}( q-q^\prime) \Pi_{Y_2} + \Pi_{Y_2} (q-q^\prime) \Pi_{Y_1}.
\end{equation}
Now write 
\begin{equation*}
  q^{k} - (q^\prime)^k = \sum_{j=1}^k (-1)^{j+1} q^{k-j} (q-q^\prime) (q^\prime)^{j-1}.
\end{equation*}
Using Eq.\ \eqref{DecompositionDifference}, we have 
\begin{equation*}
\begin{aligned}
  \Pi_{W_+} \bigl(q^{k} - (q^\prime)^k\bigr) \Pi_{W_+} = \sum_{j=1}^k (-1)^{j+1}& \Bigl(\underbrace{\Pi_{W_+}q^{k-j} \Pi_{Y_2}(q-q^\prime) \Pi_{Y_2}}_{\text{compact}}(q^\prime)^{j-1}\Pi_{W_+}\\
  &\!\!\!\!+ \underbrace{\Pi_{W_+}q^{k-j} \Pi_{Y_1}(q-q^\prime) \Pi_{Y_2}}_{\text{compact}}(q^\prime)^{j-1}\Pi_{W_+}\\
  &\!\!\!\!+\Pi_{W_+}q^{k-j} \underbrace{\Pi_{Y_2}(q-q^\prime) \Pi_{Y_1}(q^\prime)^{j-1}\Pi_{W_+}}_{\text{compact}}\Bigr),
\end{aligned}
\end{equation*}
where the indicated operators are compact as $W_+$, $Y_2\subset V$ satisfy the assumptions of Lemma \ref{lem:PropHalfCoincideLemma}, due to Step 2. We therefore get that the operator $\Pi_{W_+} \bigl(q^{k} - (q^\prime)^k\bigr) \Pi_{W_+}$ is compact for each $k$. In total,
\begin{equation*}
  T_{{\rm exp}(-2\pi i q)} - T_{{\rm exp}(-2\pi i q^\prime)} = \sum_{k=0}^\infty \frac{(-2\pi i)^k}{k!} \Pi_{W_+}(q^k - (q^\prime)^k)\Pi_{W_+},
\end{equation*}
where the sum converges in norm. Since each term in the sum is compact, the result is a compact operator. So $T_{{\rm exp}(-2\pi i q)}=\theta_{W;W_+}(\Exp_W[p])$ and $T_{{\rm exp}(-2\pi i q^\prime)}=\theta_{W;W_+^\prime}(\Exp_{W^\prime}[p])$ have the same indices.
 \end{proof}

\begin{lemma}\label{LemmaReverseOrientation}
Let $W_+\subset W$ be admissible, and $W_-$ its (admissible) regular complement. Then $\theta_{W_+} = - \theta_{W_-}$.
\end{lemma}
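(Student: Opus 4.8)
The plan is to exploit the matrix structure of the index defining $\theta_{W_+}$ and $\theta_{W_-}$ with respect to the orthogonal decomposition $L^2(W) = L^2(W_+) \oplus L^2(W_-)$ (valid up to the measure-zero set $N$), combined with the elementary fact that a unitary operator, viewed as a Fredholm operator on the full space, has index zero. More precisely, for a unitary $u \in M_n(C^*_W(\partial W)^+)$ write $\Pi_{W_+}$ and $\Pi_{W_-}$ for the compressions; then $u$ has a $2\times 2$ block decomposition
\begin{equation*}
u = \begin{pmatrix} \Pi_{W_+} u \Pi_{W_+}^* & \Pi_{W_+} u \Pi_{W_-}^* \\ \Pi_{W_-} u \Pi_{W_+}^* & \Pi_{W_-} u \Pi_{W_-}^* \end{pmatrix} = \begin{pmatrix} T_u & B \\ C & T_u' \end{pmatrix},
\end{equation*}
where $T_u = \Pi_{W_+} u \Pi_{W_+}^*$ computes $\theta_{W_+}[u]$ and $T_u' = \Pi_{W_-} u \Pi_{W_-}^*$ computes $\theta_{W_-}[u]$.

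First I would argue that the off-diagonal corners $B = \Pi_{W_+} u \Pi_{W_-}^*$ and $C = \Pi_{W_-} u \Pi_{W_+}^*$ are compact. It suffices to check this for $u = 1 + A$ with $A \in \sC_{W,0}(\partial W)$; since the identity does not contribute to the off-diagonal corners, these are $\Pi_{W_+} A \Pi_{W_-}^*$ and $\Pi_{W_-} A \Pi_{W_+}^*$. By admissibility criterion (i), $B_R(W_+) \cap B_R(W_-) \subset B_S(N)$, and since $A$ is supported near $\partial W$, the relevant support is contained in $B_S(N) \cap B_{R'}(\partial W)$, which is the relatively compact set $Q^{W;W_+}_{\max\{R',S\}}$ from Definition \ref{defn:admissible.subset}; local compactness of $A$ then gives compactness of the corners. (This is essentially the computation already carried out in the proof of Lemma \ref{lem:compact.commutator}, applied to $[\Pi_{W_+},A]$ whose off-diagonal blocks are exactly $\pm$ these corners.)

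Next, since $u$ is unitary and hence Fredholm of index zero, and since $B, C$ are compact, the operator $\begin{pmatrix} T_u & 0 \\ 0 & T_u' \end{pmatrix}$ differs from $u$ by a compact operator, so it is Fredholm with $\mathrm{Index}(T_u) + \mathrm{Index}(T_u') = \mathrm{Index}(u) = 0$. Passing to direct sums over the matrix algebras, this gives $\theta_{W_+}[u] = -\theta_{W_-}[u]$ for all classes in $K_1(C^*_W(\partial W))$, which is the claim. One should note that $N$ having measure zero (criterion (iii)) is what makes $L^2(W) = L^2(W_+) \oplus L^2(W_-)$ an honest orthogonal decomposition, so that the block matrix manipulations are legitimate.

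I do not expect a serious obstacle here; the only point requiring a little care is confirming compactness of the off-diagonal corners, and this is handled verbatim by the admissibility hypotheses exactly as in Lemma \ref{lem:compact.commutator}. A minor bookkeeping subtlety is that a general element of $C^*_W(\partial W)^+$ is a norm limit of the finite-propagation ones, so one first establishes compactness of the corners on the dense subalgebra and then passes to the limit — but since compact operators form a closed ideal and the index is norm-continuous (locally constant) on Fredholm operators, this is routine.
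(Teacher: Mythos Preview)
Your proposal is correct and follows essentially the same route as the paper: both use the orthogonal decomposition $L^2(W)=L^2(W_+)\oplus L^2(W_-)$ (from measure-zero $N$), show the off-diagonal blocks of a unitary $u=1+A$ with $A\in\sC_{W,0}(\partial W)$ are compact via the admissibility conditions, and conclude $\mathrm{Index}(T_u)+\mathrm{Index}(T_u')=\mathrm{Index}(u)=0$. Your write-up is slightly more explicit about invoking Lemma~\ref{lem:compact.commutator} and the density/limit passage, but the argument is the same.
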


\begin{proof}
Let $u \in M_n(C^*_W(\partial W))^+$ represent a class $[u]$ in $K_1(C^*_W(\partial W))$. Since $\sC_{W,0}(\partial W) \subset C^*_W(\partial W)$ is dense, we may assume that $u$ has finite propagation and that $u-1$ is supported within finite distance of $\partial W$. Since $N$ has measure zero, we have the direct sum decomposition $L^2(W) = L^2(W_+) \oplus L^2(W_-)$. With respect to this,
\begin{equation*}
  u = \begin{pmatrix} \Pi_{W_+}u & K_1 \\ K_2 & \Pi_{W_-}u \end{pmatrix}.
\end{equation*}
Since $u$ has finite propagation, $K_1 = \Pi_{W_+} u \Pi_{W_-}$ and $K_2 = \Pi_{W_-} u \Pi_{W_+}$ have compact support as $u \in \sC_{W,0}(\partial W)$. Since $u-1$ is locally compact, $K_1$ and $K_2$ are compact. We obtain 
\begin{equation*}
\theta_{W_+}[u]=\mathrm{Index}(\Pi_{W_+} u) = - \mathrm{Index}(\Pi_{W_-} u)=-\theta_{W_-}[u],
\end{equation*}
 since $u$ is invertible.
\end{proof}

\begin{theorem}\label{thm:can.change.W}
Let $W_+$ and $W_+^\prime$ be admissible subsets of the half-spaces $W,W^\prime\subset X$ respectively, such that $N=W_+\cap W_-=W_+\cap W_-^\prime=W_+^\prime\cap W_-^\prime = W_+^\prime\cap W_-$. Suppose further that $W_+$ remains admissible for the modified half-space $W_1:=W_+\cup W_-^\prime$. Then 
\begin{equation*}
  \theta_{W;W_+}\circ{\rm Exp}_W=\theta_{W^\prime,W_+^\prime}\circ{\rm Exp}_{W^\prime}.
\end{equation*} 

\end{theorem}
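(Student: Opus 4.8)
\textbf{Proof proposal for Theorem \ref{thm:can.change.W}.}

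The plan is to interpolate between the two sides through a sequence of three intermediate configurations, applying at each step one of the tools just developed: Proposition \ref{prop:cobordism.partition} (bordism invariance of $\theta$ on a fixed half-space), Proposition \ref{PropHalfCoincide} (independence of $\theta\circ\Exp$ on the ambient half-space when the partitioning set $N$ is fixed), and Lemma \ref{LemmaReverseOrientation} (orientation-reversal). Concretely, I would first move from $(W,W_+)$ to the auxiliary half-space $W_1 = W_+\cup W_-'$ carrying the \emph{same} admissible subset $W_+$, whose boundary in $W_1$ is still $N$ by hypothesis. Since $W_+$ is admissible for both $W$ and $W_1$ and has the same boundary $N$ in each, Proposition \ref{PropHalfCoincide} gives
\begin{equation*}
  \theta_{W;W_+}\circ\Exp_W = \theta_{W_1;W_+}\circ\Exp_{W_1}.
\end{equation*}

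Next, inside $W_1$ I would compare the partition $W_+$ with the partition $W_+' \cap W_1$ (equivalently, exploit that $W_+^\perp$ in $W_1$ is $W_-'$, whose regular complement is $W_+$). The point is that $W_+$ and $W_+'$ — viewed as subsets of $W_1$ — share the boundary $N$, so their regular complements in $W_1$ are $W_-'$ in both cases; hence these partitions are literally the same partition of $W_1$, or differ only by a bordism whose symmetric difference meets every $B_R(X\setminus W_1)$ in a bounded set (this is where one checks the hypotheses of Definition \ref{defn:bordant.partition}, using admissibility of $W_+$ for $W_1$ and the transversality condition (ii)). Proposition \ref{prop:cobordism.partition} then yields $\theta_{W_1;W_+} = \theta_{W_1;W_+'}$ (with $W_+'$ understood inside $W_1$, where it has regular complement $W_-'$). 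Finally I would pass from $W_1$ to $W'$: again $W_+'$ is admissible for both $W_1$ and $W'$ with common boundary $N$, so Proposition \ref{PropHalfCoincide} gives $\theta_{W_1;W_+'}\circ\Exp_{W_1} = \theta_{W';W_+'}\circ\Exp_{W'}$. Stringing the three equalities together produces the claim.

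The main obstacle I anticipate is bookkeeping the regular-closure and boundary identities carefully enough to legitimately invoke Proposition \ref{PropHalfCoincide} at each stage — in particular verifying that $W_+$ really does have boundary $N$ inside $W_1 = W_+\cup W_-'$ (one must show $W_+\cap \overline{W_1\setminus W_+} = W_+\cap W_-' = N$, using the chain of equalities among the four sets $W_+\cap W_-$, $W_+\cap W_-'$, $W_+'\cap W_-'$, $W_+'\cap W_-$ in the hypothesis), and symmetrically that $W_+'$ has boundary $N$ in $W_1$. A secondary point is confirming that the various sets are half-spaces in the sense of \eqref{ConditionOnW2}: $W_1$ contains $W_+$ which reaches arbitrarily far from $N$, so $d(x,X\setminus W_1)$ is unbounded, but this should be spelled out. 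Once these set-theoretic verifications are in place, each of the three reductions is a direct citation of an already-proven proposition, so no further analysis is needed.
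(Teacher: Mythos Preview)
Your overall architecture --- interpolate through the auxiliary half-space $W_1 = W_+ \cup W_-'$ --- is exactly the paper's, and your first step (Proposition \ref{PropHalfCoincide} applied to $W_+$ sitting in both $W$ and $W_1$) is correct. The gap is in your steps 2 and 3.

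The problem is that $W_+'$ need \emph{not} be contained in $W_1$. Recall $W_1 = W_+ \cup W_-'$, while $W_+' \subseteq W' = W_+' \cup W_-'$; nothing in the hypotheses forces $W_+' \subseteq W_+ \cup W_-'$. (In the picture of Fig.~\ref{fig:partition.modification}, the final upper-right quadrant $W_+'$ is visibly not contained in the half-space of the third diagram.) Consequently you cannot invoke Proposition \ref{PropHalfCoincide} for $W_+'$ in $W_1$ and $W'$, since that proposition requires $W_+' \subset W_1 \cap W'$. Your step 2 is correspondingly confused: $W_+' \cap W_1 = (W_+' \cap W_+) \cup N$, which is generally neither equal to $W_+$ nor a bordant admissible subset of $W_1$, so Proposition \ref{prop:cobordism.partition} does not apply either.

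The repair is to track the \emph{complementary} piece $W_-'$ through $W_1$ instead: by construction $W_-' \subseteq W_1$ and $W_-' \subseteq W'$, and $W_-'$ is the regular complement of $W_+$ in $W_1$ (hence admissible there with boundary $N$) as well as the regular complement of $W_+'$ in $W'$. Now Lemma \ref{LemmaReverseOrientation} gives $\theta_{W_1;W_+} = -\theta_{W_1;W_-'}$, Proposition \ref{PropHalfCoincide} (applied to $W_-'$ in $W_1$ and $W'$) gives $\theta_{W_1;W_-'}\circ\Exp_{W_1} = \theta_{W';W_-'}\circ\Exp_{W'}$, and a second use of Lemma \ref{LemmaReverseOrientation} in $W'$ converts $-\theta_{W';W_-'}$ into $\theta_{W';W_+'}$. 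Chained together these give the theorem. So the orientation-reversal lemma --- which you listed but never actually deployed --- is the missing ingredient, and Proposition \ref{prop:cobordism.partition} is not needed at all.
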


\begin{proof}
With these assumptions, we may verify that $W_-^\prime$ is the regular complement of $W_+$ inside $W_1$, with common boundary $N=W_+\cap W_-^\prime$. So $W_+$ is admissible for both $W$ and $W_1$, while $W_-^\prime$ is admissible for both $W_1$ and $W^\prime$, with boundary being $N$ in all four cases. Using Lemma~\ref{LemmaReverseOrientation} and Proposition~\ref{PropHalfCoincide}, we deduce that
\begin{align*}
\theta_{W, W_+}\circ{\rm Exp}_W &= \theta_{W_1, W_+}\circ{\rm Exp}_{W_1}\\
&=-\theta_{W_1, W_-^\prime}\circ{\rm Exp}_{W_1}\\
&=-\theta_{W^\prime, W_-^\prime}\circ{\rm Exp}_{W^\prime}.
\end{align*}
The theorem follows.
\end{proof}

\section{Computations for $X$ the Euclidean plane}\label{sec:Euclidean.computation}
In this section, we study the Euclidean plane example, $X=\RR^2$, with $\Gamma=\ZZ^2$ the standard lattice of translations acting freely with fundamental domain $\cF=[0,1]\times[0,1]$.
The standard half-plane $\RR_+\times\RR$ is denoted $\cW$, and the standard quarter-plane $\RR_+\times\RR_+$ is denoted $\cW_+$.

In this case, the equivariant Roe algebra and reduced group $C^*$-algebra are related (see \S5.1.4 of \cite{Roe-coarse-book}),
\begin{equation*}
K_0\bigl(C^*(X,\Gamma)\bigr)\cong K_0\bigl(C^*_r(\Gamma)\otimes\cK\bigl(L^2(\cF)\bigr)\bigr)\cong K_0\bigl(C^*_r(\Gamma)\bigr)=K_0\bigl(C^*_r(\ZZ^2)\bigr),
\end{equation*}
where $\cK$ denotes the compact operators. Via a Fourier transform $C^*_r(\ZZ^2)\cong C(\TT^2)$ and Chern character map, it is easy to see that the RHS is $K_0(C^*_r(\ZZ^2))\cong \ZZ\oplus\ZZ$, where the two generators can be taken to be represented by the trivial projection and the Bott projection $\mathfrak{b}$. The Bott projection corresponds under the Serre--Swan theorem to a line bundle with first Chern class generating $H^2(\TT^2,\ZZ)\cong\ZZ$. Under the above isomorphism $K_0(C^*_r(\ZZ^2))\cong K_0(C^*(X,\Gamma))$, we will also think of $\mathfrak{b}$ as representing a generator of the latter.

\subsection{Coarse index and edge-travelling operator in $C^*_W(\partial W)$}

A fairly general class of half-spaces $W\subset X=\RR^2$ will have $\partial W$ coarsely equivalent, or even quasi-isometric to $\RR$, so that $K_1(C^*_W(\partial W))\cong K_1(C^*(\RR))$ in view of Eq.\ \eqref{eqn:localised.Roe.k.theory}. Because of this, it is instructive to recall and understand the result $K_1(C^*(\RR))\cong\ZZ$. 

\medskip

\noindent {\bf $K$-theory of the Roe algebra of the line.} It is known that $K_1(C^*(\RR))\cong\ZZ$ is generated by the so-called \emph{coarse index} ${\rm Ind}_c(D)$ of the Dirac operator $D= -i\frac{d}{dx}$ on $\RR$, see e.g.\ pp.\ 33 of \cite{Roe-coarse-book} and \cite{Roe-partition}. 
We give a more concrete \emph{hopping operator} $v\in (C^*(\RR))^+$ which also represents the generator of $K_1(C^*(\RR))$.

Pick any smooth $\psi\in L^2(\RR)$ which is supported in $[0,1]$, as illustrated in Fig.\ \ref{fig:hopping}. Then the translates $\gamma^*\psi, \gamma\in\ZZ$ provide an orthonormal basis for a copy of $\ell^2_{\rm reg}(\ZZ)\subset L^2(\RR)$. Let $v\in (C^*(\RR))^+$ be the unitary operator taking $\gamma^*\psi\mapsto (\gamma+1)^*\psi$ and acting as the identity on the orthogonal complement of $\ell^2_{\rm reg}(\ZZ)$ in $L^2(\RR)$.
Let $\Pi$ be the multiplication operator on $L^2(\RR)$ by the characteristic function on $\RR_+$ (the right half-line), which we use to compress an operator $A$ on $L^2(\RR)$ to an operator $T_A$ on $L^2(\RR_+)$. In much the same way that we took to construct Definition \ref{defn:index.hom}, there is a well-defined homomorphism (details can be found in pp.\ 28-29 of \cite{Roe-coarse-book}),
\begin{equation*}
\zeta:K_1(C^*(\RR))\rightarrow\ZZ,\qquad [u]\mapsto {\rm Index}(T_u).
\end{equation*}
The truncated hopping operator $T_v$ is essentially the unilateral right shift on $\ell^2(\NN)$ (direct summed with an identity operator), so its index is $-1$. So $\zeta$ is an isomorphism and $[v]$ indeed generates $K_1(C^*(\RR))$.

\begin{figure}[ht]
\begin{center}
\begin{tikzpicture}[scale=0.8, every node/.style={scale=0.8}]
\node at (0,-2) {};
\draw[thick,->] (-2.5,0) -- (8.5,0) node[anchor=north west] {};
\foreach \x in {-1,0,1,2,3,4}
   \draw (2*\x cm,1pt) -- (2*\x cm,-1pt) node[anchor=north] {$\x$};
   \draw[dotted] (0,0)--(0,1.6) {};
\foreach \y in {-1,0,1,2,3}   
	\draw (2*\y+0.2,0) .. controls (2*\y+0.6,0.1) and (2*\y+0.8,1.4) .. (2*\y+1,1.5) .. controls (2*\y+1.2,1.4) and (2*\y+1.4,0.1) .. (2*\y+1.8,0) ;
\node at (1.5,1) {$\psi$};
\node (a) at (-1,1.6) {};
\node (b) at (1,1.6) {};
\node (c) at (3,1.6) {};
\node (d) at (5,1.6) {};
\node (e) at (7,1.6) {};
\path
    (a) edge[bend left, ->-] node [right] {} (b)
    (b) edge[bend left, ->-] node [right] {} (c)
    (c) edge[bend left, ->-] node [right] {} (d)
    (d) edge[bend left, ->-] node [right] {} (e);
\node at (6.5,2) {$v$};
\end{tikzpicture}
\hspace{1em}
\begin{tikzpicture}[scale=0.7, every node/.style={scale=0.7}]
\draw[step=2cm,gray,dotted] (-2.1,-2.1) grid (6,6);
\filldraw[opacity=0.05] (0,-2.1)--(6.1,-2.1)--(6.1,0)--(0,0);
\filldraw[opacity=0.1] (0,0)--(6.1,0)--(6.1,6.1)--(0,6.1);
\draw[thick,->] (-2.2,0) -- (6.5,0) node[anchor=north west] {};
\foreach \x in {-1,1,2,3}
   \draw (2*\x cm,1pt) -- (2*\x cm,-1pt) node[anchor=north] {$\x$};
\draw[thick,->] (0,-2.1) -- (0,6.3) node[anchor=north west] {};
   \foreach \z in {-1,1,2,3}
   \draw (1pt,2*\z cm) -- (-1pt,2*\z cm) node[anchor=east] {$\z$};
   \draw[dotted] (0,0)--(0,1.6) {};
\node at (-0.3,-0.3) {$0$};
\foreach \y in {-1,0,1,2}   
	\draw (0.2,2*\y+0.2) .. controls (0.6,2*\y+0.3) and (0.8,2*\y+1.6) .. (1,2*\y+1.7) .. controls (1.2,2*\y+1.6) and (1.4,2*\y+0.3) .. (+1.8,2*\y+0.2) ;
\node at (0.5,1) {$\phi$};
\node (a) at (1.8,7) {};
\node (b) at (1.8,5) {};
\node (c) at (1.8,3) {};
\node (d) at (1.8,1) {};
\node (e) at (1.8,-1) {};
\path
    (a) edge[bend left, ->-] node [right] {} (b)
    (b) edge[bend left, ->-] node [right] {} (c)
    (c) edge[bend left, ->-] node [right] {} (d)
    (d) edge[bend left, ->-] node [right] {} (e);
\node at (2.3,5.8) {$w$};
\end{tikzpicture}
\end{center}
\caption{(L) Hopping operator on a line. (R) Edge-travelling operator on a half-plane.}\label{fig:hopping}
\end{figure}

\medskip

\noindent {\bf Edge-travelling operator on standard half-plane $\cW$.}
With $\cW=\RR_+\times\RR$, so that $\partial \cW=\{0\}\times \RR$, the localisation principle Eq.\ \eqref{eqn:localised.Roe.k.theory} gives 
\begin{equation*}
\ZZ\cong K_1(C^*(\partial \cW))\cong K_1(C^*([0,1]\times\partial \cW))\cong K_1(C^*_\cW(\partial \cW)).
\end{equation*}
A construction similar to that of $v$ above, will therefore yield a representative generator of $K_1(C^*_\cW(\partial \cW))\cong\ZZ$ (see Fig.\ \ref{fig:hopping}). Namely, pick a smooth $\phi\in L^2(X)$ supported in $[0,1]\times[0,1]$, so that its translates by $\gamma\in \ZZ^2$ (resp.\ $\gamma\in \NN\times\ZZ$) provide an orthonormal basis for a copy of $\ell^2_{\rm reg}(\ZZ^2)$ inside $L^2(X)$ (resp.\ $\ell^2(\NN\times\ZZ)$ inside $L^2(\cW)$). On $L^2(\cW)$, let $w$ denote the ``edge-travelling operator'' which acts on the ``edge-subspace'' $\ell^2(\{0\}\times\ZZ)$ by downward translation $(0,n)^*\phi\mapsto (0,n-1)^*\phi$, and is the identity operator on the orthogonal complement. Then $w$ is a unitary operator in $(C^*_\cW(\partial\cW))^+$ representing a generator of $K_1(C^*_\cW(\partial\cW))\cong\ZZ$.

\begin{proposition}\label{prop:surjective.exponential}
With $\cW$ the standard half-plane in $X=\RR^2$, the $K$-theory exponential map for
\begin{equation}
0\rightarrow C^*_\cW(\partial \cW)\rightarrow Q^*(\cW,\Gamma)\rightarrow C^*(X,\Gamma)\rightarrow 0\label{eqn:half-plane.SES}
\end{equation}
is surjective. It maps the Bott projection class $[\mathfrak{b}]\in K_0(C^*(X,\Gamma))$ to a generator of $K_1(C^*_\cW(\partial\cW))\cong\ZZ$ (the class of the edge-travelling operator $w$ described above). 
\end{proposition}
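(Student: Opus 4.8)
The plan is to establish surjectivity of the exponential map first, then to identify the image of $[\mathfrak b]$ explicitly with the class $[w]$ of the edge-travelling operator. For surjectivity, I would examine the six-term sequence \eqref{SixTermSequence} for the half-plane extension \eqref{eqn:half-plane.SES}. The exponential map $\Exp_\cW\colon K_0(C^*(X,\Gamma))\to K_1(C^*_\cW(\partial\cW))$ is surjective precisely when the connecting map $\varpi_*\colon K_1(Q^*(\cW,\Gamma))\to K_1(C^*(X,\Gamma))$ is injective, equivalently when $\mathrm{Ind}_\cW\colon K_1(C^*(X,\Gamma))\to K_0(C^*_\cW(\partial\cW))$ is zero, equivalently when $\varpi_*\colon K_0(Q^*(\cW,\Gamma))\to K_0(C^*(X,\Gamma))$ is surjective. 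The last statement is easy: the section map $\sigma$ of \eqref{eqn:section.map} satisfies $\varpi\sigma=\mathrm{id}$, so $\varpi_*$ is split surjective on all $K$-groups. Hence $\mathrm{Ind}_\cW=0$ and, chasing exactness, $\Exp_\cW$ is surjective — in fact, because $\varpi_*$ is split surjective, the six-term sequence breaks into short exact sequences $0\to K_{i}(C^*_\cW(\partial\cW))\to K_i(Q^*(\cW,\Gamma))\xrightarrow{\varpi_*} K_i(C^*(X,\Gamma))\to 0$, and $\Exp_\cW$ is the zero map. Wait — that would contradict the claim. Let me reconsider: with a splitting, the connecting maps $\Exp$ and $\mathrm{Ind}$ both vanish, so surjectivity of $\Exp_\cW$ would force $K_1(C^*_\cW(\partial\cW))=0$, which is false. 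So the section $\sigma$, while multiplicative-failing, must still split the sequence as $C^*$-algebras up to... no. The resolution is that $\sigma$ is \emph{not} a $*$-homomorphism, so it does \emph{not} induce a splitting on $K$-theory. Thus $\varpi_*$ need not be injective on $K_1$, and the exponential map genuinely can be nonzero. The right approach to surjectivity is therefore direct: exhibit a preimage.

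Concretely, I would directly compute $\Exp_\cW[\mathfrak b]$ and show it equals $[w]$, which simultaneously gives surjectivity since $[w]$ generates $K_1(C^*_\cW(\partial\cW))\cong\ZZ$. By the recipe used in the proof of Theorem~\ref{thm:gap.filling}, if $p\in M_n(C^*(X,\Gamma))$ is a projection and $q\in M_n(Q^*(\cW,\Gamma)^+)$ is any self-adjoint lift of $p$ (e.g.\ $q=\sigma p=\Pi_\cW p\Pi_\cW^*$), then $\Exp_\cW[p]=[\exp(-2\pi i q)]\in K_1(C^*_\cW(\partial\cW)^+)$. So I must compute $\exp(-2\pi i\,\Pi_\cW\mathfrak b\Pi_\cW^*)$ modulo $C^*_\cW(\partial\cW)$, or rather identify its class in $K_1$. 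The natural strategy is to recognize that $\mathfrak b$, being the Bott projection for the magnetic translations / the line bundle of Chern number $1$ on $\TT^2$, has a spectral-flow or winding interpretation: the compression $\Pi_\cW \mathfrak b \Pi_\cW^*$ differs from a genuine projection by an operator supported near $\partial\cW$, and the "defect" winds once around the edge. More usefully, I would invoke the already-developed machinery: the Toeplitz-type extension \eqref{eqn:half-plane.SES} for $\cW=\RR_+\times\RR$ factors (via separation of variables in the $\RR$-direction, using $\ZZ^2 = \ZZ\times\ZZ$) as a tensor product of the standard Toeplitz extension of Example~\ref{ex:Toeplitz} in the $\RR_+$-direction with the algebra $C^*_r(\ZZ)\otimes\cK(L^2([0,1]))$ in the $\RR$-direction, twisted by the magnetic cocycle defining $\mathfrak b$. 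Under this identification, $\Exp_\cW$ becomes the classical Toeplitz exponential (index map) tensored with the identity, and $[\mathfrak b]$ corresponds under the Künneth/Pimsner–Voiculescu picture to the Bott generator, whose image under the Toeplitz boundary map is exactly the generator of $K_1$ of the boundary algebra, represented by the shift — which is precisely $w$.

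The main obstacle is the explicit bookkeeping in the last step: carefully setting up the tensor-product (or at least "edge-restriction") decomposition of $Q^*(\cW,\Gamma)$ compatible with the Fourier picture $C^*_r(\ZZ^2)\cong C(\TT^2)$ and tracking the magnetic twist, so that one can genuinely read off that $\Exp_\cW[\mathfrak b]$ is the shift class $[w]$ rather than $0$ or a multiple. An alternative that avoids the tensor decomposition is a purely geometric/Fredholm argument: show that $\theta_{\cW;\cW_+}(\Exp_\cW[\mathfrak b])=1$ using the index homomorphism of Definition~\ref{defn:index.hom} with $\cW_+$ the standard quarter-plane, by relating the compressed exponential $T_{\exp(-2\pi i\,\sigma\mathfrak b)}$ to the Noether–Toeplitz operator whose symbol has winding number equal to the Chern number of $\mathfrak b$; since $\theta_{\cW;\cW_+}$ is an isomorphism $K_1(C^*_\cW(\partial\cW))\xrightarrow{\sim}\ZZ$ sending $[w]\mapsto\pm1$, getting $\theta_{\cW;\cW_+}(\Exp_\cW[\mathfrak b])=\pm1$ pins down $\Exp_\cW[\mathfrak b]=\pm[w]$ and gives surjectivity at once. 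I would present the argument via whichever of these is cleanest given the conventions already fixed, most likely the $\theta_{\cW;\cW_+}$-pairing route since that map is introduced in Section~\ref{sec:cobordism} and will be reused for the cobordism-invariance computations.
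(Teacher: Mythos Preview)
Your two proposed routes are both plausible in outline, but the paper takes a slicker, more indirect path that sidesteps the very ``bookkeeping obstacle'' you flagged. Rather than computing $\Exp_\cW[\mathfrak b]$ directly, the paper works backwards from the target: it shows that the edge-travelling class $[w]$ dies under the inclusion $K_1(C^*_\cW(\partial\cW))\to K_1(Q^*(\cW,\Gamma))$. This is done by passing to the discrete model (the semigroup Toeplitz extension $0\to\cK(\ell^2(\NN))\otimes C^*_r(\ZZ)\to C^*_r(\NN\times\ZZ)\to C^*_r(\ZZ^2)\to 0$), where the restriction $w|$ of $w$ is visibly the generator of $K_1$ of the ideal, and then invoking the Toeplitz six-term sequence together with K\"unneth to see that $[w|]=0$ in $K_1(C^*_r(\NN\times\ZZ))$. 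Exactness then forces $[w]$ to lie in the image of $\Exp_\cW$, giving surjectivity. Finally, since $K_0(C^*(X,\Gamma))$ is generated by $[1]$ and $[\mathfrak b]$, and $\Exp_\cW[1]=0$ always, one concludes $\Exp_\cW[\mathfrak b]=\pm[w]$ with no further computation.

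Compared to your first route, the paper's argument uses the same ingredients (Toeplitz, K\"unneth, the discrete model) but in the reverse direction---showing a class is \emph{hit} rather than computing what \emph{hits} it---which means one never has to track $\mathfrak b$ explicitly through the tensor decomposition. The price is that the sign is left undetermined. Your second route via $\theta_{\cW;\cW_+}$ is genuinely different and would work, but note that the paper uses the present proposition as \emph{input} to evaluate $\theta_{\cW;\cW_+}(\Exp_\cW[\mathfrak b])$ in Eq.~\eqref{eqn:edge.index.computation}, not the other way around; going your way would require an independent index computation for the quarter-plane compression of $\exp(-2\pi i\,\sigma\mathfrak b)$, which is essentially a bulk-edge/Chern-number identity that the paper never proves directly.
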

\begin{proof}
The discrete version of Eq.\ \eqref{eqn:half-plane.SES} is
\begin{equation}
0\rightarrow \cK(\ell^2(\NN))\otimes C^*_r(\ZZ)\rightarrow C^*_r(\NN\times\ZZ)\rightarrow C^*_r(\ZZ^2)\rightarrow 0\label{eqn:semigroup.Toeplitz}
\end{equation}
where $C^*_r(\NN\times\ZZ)$ is the (reduced) semigroup $C^*$-algebra for $\NN\times \ZZ$. Let $w|$ be the restriction of $w$ to $\ell^2(\NN\times\ZZ)\subset L^2(\cW)$. Then $w|$ is a unitary operator in $(\cK(\ell^2(\NN))\otimes C^*_r(\ZZ))^+\subset \cB(\ell^2(\NN\times\ZZ))$ which effects ``downward translation along the first column'' and does nothing elsewhere. It is clear that $w|$ represents the generator of $K_1(\cK\otimes C^*_r(\ZZ^2))\cong \ZZ$. However, when $w|$ is regarded as an element in the larger algebra $C^*_r(\NN\times\ZZ)\subset \cB(\ell^2(\NN\times\ZZ))$, a combination of the Toeplitz extension's (Example \ref{ex:Toeplitz}) $K$-theory sequence and the K\"{u}nneth theorem shows that $[w|]$ trivialises in $K_1(C^*_r(\NN\times\ZZ))$, see \S 2.2.3 of \cite{Thiang-edge}. Since $w\in C^*_\cW(\partial\cW)$ is just the extension of $w|$ by the identity operator on an orthogonal subspace, it also represents the trivial class when regarded as an element of the larger algebra $Q^*(\cW,\Gamma)$. Exactness of the long exact sequence for Eq.\ \eqref{eqn:half-plane.SES} means that ${\rm Exp}_\cW:K_0(C^*(X,\Gamma))\rightarrow K_1(C^*_\cW(\partial\cW))$ must be surjective. The $K$-theory exponential map is trivial on identity projections $[1]$, so we must have ${\rm Exp}_\cW([\mathfrak{b}])=[w]$ (up to a sign).
\end{proof}

\begin{remark}
The term \emph{edge-travelling operator} was coined in \cite{Thiang-edge}, in an investigation of gap-filling by ``edge-following topological states'' in lattice models of so-called \emph{Chern insulators} arising in physics. Eq.\ \eqref{eqn:semigroup.Toeplitz} is an example of a \emph{semigroup Toeplitz extension}, for the case $\NN\times\ZZ\subset\ZZ^2$.
\end{remark}

\subsection{Application to Chern insulators and Landau Hamiltonian}
Quite generally, a \emph{Chern insulator} can be defined as a Hamiltonian $H_X=H_{{\rm Chern},X}$, which has some spectral projection $P_S=\varphi_S(H_{{\rm Chern},X})$ having $K$-theory class $k[1]\oplus  j[\mathfrak{b}]\in K_0(C^*(X,\Gamma))$ with $j\neq 0$. Usually, $S$ is taken to be the subset of the spectrum below some prescribed \emph{Fermi energy} $E_F\not\in {\rm Spec}(H_{{\rm Chern},X})$. Such a ``topological projection'' $P_S$ is said to have \emph{Chern number $j$}.

The non-vanishing abstract homotopy invariant of $P_S$ has dramatic consequences. First, taking the standard half-plane $\cW$ as domain, Proposition \ref{prop:surjective.exponential} says that $\Exp_\cW[P_S]=j[w]\neq 0$. Upon passing to $H_{\rm Chern,\cW}$, Theorem \ref{thm:gap.filling} guarantees that the spectral gap above $S$ is completely filled up. In particular, the Fermi energy $E_F$ is in the spectrum of $H_{\rm Chern,\cW}$; physicists call this property of $H_{\rm Chern,\cW}$ \emph{gaplessness} (at $E_F$).

\begin{example}
Let $A= x\,dy$ be a connection 1-form on $X=\RR^2$ with curvature $ dx\wedge dy$, corresponding to a uniform magnetic field applied perpendicularly to the plane, with unit flux per unit area.
The magnetic Laplacian, or \emph{Landau Hamiltonian},
\begin{equation*}
H_{{\rm Lan},X}=\frac{1}{2}(d-iA)^*(d-iA)
\end{equation*} 
is self-adjoint and has the harmonic oscillator spectrum $\frac{1}{2}+\mathbb{N}_0$, with each eigenvalue (\emph{Landau level}) being infinitely degenerate.

For each $\gamma=(a,b)\in\RR^2$, define the \emph{magnetic translations} $U_\gamma$ on $L^2(\RR^2)$ by $(U_{\gamma}f)(x,y)=f(x-a,y-b)e^{iay}$, then $H_{{\rm Lan},X}$ commutes with each $U_{\gamma}$. While $\gamma\mapsto U_\gamma$ only gives a \emph{projective} unitary representation of the group $\RR^2$, when we restrict to those $\gamma$ in the lattice $\Gamma=(\sqrt{2\pi}\ZZ)^2\subset\RR^2$, we do get a genuine unitary representation of $\Gamma\cong\ZZ^2$ on $L^2(\RR^2)$, and $H_{{\rm Lan},X}$ is of course $\Gamma$-invariant.

For each $j\geq 1$, consider the spectral projection onto the first $j$ Landau levels $\frac{1}{2}, \ldots, \frac{2j-1}{2}$, which can be written as $\varphi_j(H_{{\rm Lan},X})\in C^*(X,\Gamma)\cong C^*_r(\Gamma)\otimes\cK$ for a suitable function $\varphi_j$. This projection defines an element 
\begin{equation*}
\bigl[\varphi_j(H_{{\rm Lan},X})\bigr]\in K_0\bigl(C^*(\RR^2,\ZZ^2)\bigr)\cong K_0\bigl(C^*_r(\ZZ^2)\bigr)\cong K^0(\TT^2).
\end{equation*}
It is known that after taking the Chern character, $[\varphi_j(H_{{\rm Lan},X})]$ has Chern class being $j$ times the generator of $H^2(\TT^2,\ZZ)$, see \cite{BES} Lemma 5, \cite{Kunz} Eq.\ 3.55, \cite{DeNittis} \S3.7. In other words, the $H_{\rm Lan,X}$ is a Chern insulator and the projection $\varphi_j(H_{{\rm Lan},X})$ has Chern number $j$.

We deduce that each spectral gap $(\frac{2j-1}{2}, \frac{2j+1}{2})$  of $H_{{\rm Lan},X}$ will be filled up with new spectra of $H_{{\rm Lan},\cW}$. This deduction is corroborated by an exact calculation of the spectrum of the half-plane Dirichlet $H_{{\rm Lan},\cW}$ as an unbroken half-line $[\frac{1}{2},\infty)$, see \cite{Pule}. The half-plane  Neumann Laplacian $H_{{\rm Lan},\cW}$ has similar features, see \cite{BMR} and references therein.

\end{example}

The utility of Theorem \ref{thm:can.change.W} is that we can proceed to deduce the same gap-filling phenomenon for $H_{{\rm Chern},W}$ on \emph{generic} domains $W\subset X=\RR^2$, \emph{without having to solve the extremely difficult spectral problem for $H_{{\rm Chern},W}$}! Even if we modify $\cW$ quite drastically into another half-space domain $W$ (within the assumptions of Theorem \ref{thm:can.change.W}) we still have, for a spectral projection $P_S=\varphi_S(H_{{\rm Chern},X})$ with Chern number $j\neq 0$, that
\begin{align}
\theta_{W;W_+}({\rm Exp}_W[P_S])&=\theta_{\cW;\cW_+}({\rm Exp}_\cW[P_S])\nonumber\\
&=\theta_{\cW;\cW_+}(j\cdot [w])\nonumber\\
&=j\cdot{\rm Index}\,(T_w)=j\cdot{\rm Index}\,({\rm Shift}) =j\neq 0\label{eqn:edge.index.computation}.
\end{align}
In the last line, we used the observation that for the edge-travelling operator $w$, its compression $T_w$ to the upper-right quadrant is essentially the unilateral downward-shift operator on $\ell^2(\{0\}\times \NN)$ which has Fredholm index 1. Then ${\rm Exp}_W[\varphi_S(H_{{\rm Chern},X})]\neq 0$ and Theorem \ref{thm:gap.filling} implies filling of the spectral gap above $S$ when passing to $H_{{\rm Chern},W}$.

\begin{remark}
In particular, for the Landau Hamiltonian, this means that there are no gaps in the spectrum of $H_{{\rm Lan},W}$ above the lowest Landau level $\frac{1}{2}$.
\end{remark}

\begin{remark}
We point out that the existence of extended but boundary-localised nature of the new states of $H_W$ that appear somewhere in spectral gaps of $H_X$, was deduced in \cite{FGW} for a large class of magnetic Laplace-type operators and fairly general domains $W\subset\RR^2$.
\end{remark}

\subsection{Domains with multiple boundary components}

Consider a half-space $W$ with $\partial W=Y_1\amalg Y_2$ comprising the two components of a hyperbola. To compute $K_1(C^*_W(\partial W))\cong K_1(C^*(\partial W))$, we note that $\partial W$ is coarsely equivalent to the cross $\mathlarger{\mathlarger{+}}$ formed by the asymptotes. Furthermore, $\mathlarger{\lrcorner}$ and $\mathlarger{\ulcorner}$ (which are each quasi-isometric to a Euclidean line) form a coarsely excisive decomposition \cite{HRY} of the cross with intersection a single point. It follows from the coarse Mayer--Vietoris sequence (\S5 of \cite{HRY}) that
\begin{align*}
K_1(C^*_W(\partial W)) &\cong K_1(C^*(\partial W))\\
&\cong K_1(C^*(\lrcorner))\oplus K_1(C^*(\ulcorner))\oplus K_0(C^*({\rm pt}))\cong\ZZ^3.
\end{align*}
As verified below, representative generators for $K_1(C^*_W(\partial W))\cong\ZZ^3$ can be taken to be the edge-travelling operators $w_{Y_1}, w_{Y_2}$ hopping along the boundary components $Y_1$ and $Y_2$ respectively, together with the operator $w_Z$ hopping rightwards along the horizontal asymptote.

\begin{figure}
\begin{center}
\begin{tikzpicture}[scale=0.8, every node/.style={scale=0.9}]

\draw (2,5)--(2,9);
\draw (2,7)--(0.5,7);
\filldraw[blue,opacity=0.4] (0.5,7)--(2,7)--(2,9)--(0.5,9);
\filldraw[blue,opacity=0.2] (0.5,7)--(2,7)--(2,5)--(0.5,5);
\draw[->-] (2,6)--(2,6.1) {};

\node at (1,4.5) {$\updownarrow$};

\draw[name path=A] (0,2.5) .. controls (1,2.5) .. (1,4) ;
\draw[name path=B] (1.5,0) .. controls (1.5,1.5) .. (2.5,1.5);
\draw[white, name path=C] (0,0) -- (1.5,0);
\draw[white, name path=D] (1,4) -- (2.5,4);
\draw[thick, name path=E ] (0,0.8)--(1.6,1.2);
\tikzfillbetween[of=D and B]{blue, opacity=0.4};
\tikzfillbetween[of=A and C]{blue, opacity=0.4};
\filldraw[white, opacity=0.4] (0,0)--(2.5,0)--(2.5,1.2)--(1.6,1.2)--(0,0.8)--(0,0);
\node at (1,1.35) {$N_1$};
\node at (0.5,2.8) {$Y_1$};
\node at (2,1.8) {$Y_2$};
\draw[->-] (1.5,0.5)--(1.5,0.6) {};
\draw[->-] (0.95,3)--(0.95,2.9) {};

\node at (3.5,2) {$\not\leftrightarrow$};

\draw[name path=A] (4,2.5) .. controls (5,2.5) .. (5,4) ;
\draw[name path=B] (5.5,0) .. controls (5.5,1.5) .. (6.5,1.5);
\draw[white, name path=C] (4,0) -- (5.5,0);
\draw[white, name path=D] (5,4) -- (6.5,4);
\draw[thick, name path=E] (4.9,2.7)--(6.5,3.2);
\tikzfillbetween[of=D and B]{blue, opacity=0.4};
\tikzfillbetween[of=A and C]{blue, opacity=0.4};
\filldraw[white, opacity=0.4] (4,0)--(6.5,0)--(6.5,3.2)--(4.9,2.7)--(4,2.7)--(4,0);
\node at (5.9,3.4) {$N_2$};
\draw[->-] (5.5,0.5)--(5.5,0.6) {};
\draw[->-] (4.95,3)--(4.95,2.9) {};

\node at (5,4.5) {$\updownarrow$};

\node at (7.5,2) {$\not\leftrightarrow$};

\draw (4.5,5)--(4.5,9);
\draw (4.5,7)--(6,7);
\filldraw[blue,opacity=0.4] (4.5,7)--(6,7)--(6,9)--(4.5,9);
\filldraw[blue,opacity=0.2] (4.5,7)--(6,7)--(6,5)--(4.5,5);
\draw[->-] (4.5,8)--(4.5,7.9) {};

\draw[name path=A] (8,2.5) .. controls (9,2.5) .. (9,4) ;
\draw[name path=B] (9.5,0) .. controls (9.5,1.5) .. (10.5,1.5);
\draw[white, name path=C] (8,0) -- (9.5,0);
\draw[white, name path=D] (9,4) -- (10.5,4);
\draw[name path=E] (8,0) -- (10.5,4);
\tikzfillbetween[of=D and B]{blue, opacity=0.4};
\tikzfillbetween[of=A and C]{blue, opacity=0.4};
\filldraw[white, opacity=0.4] (8,0)--(10.5,0)--(10.5,4);
\node at (9.8,3.5) {$N_3$};
\draw[->-] (9.5,0.5)--(9.5,0.6) {};
\draw[->-] (8.95,3)--(8.95,2.9) {};

\node at (9,4.5) {$\updownarrow$};

\filldraw[blue,opacity=0.4] (8,5)--(10.5,5)--(10.5,9)--(8,9);
\filldraw[white,opacity=0.4] (8,5)--(10.5,5)--(10.5,9);
\draw (8,5)--(10.5,9);

\end{tikzpicture}
\caption{Lower row of diagrams: the shaded domain $W\subset\RR^2$ bounded by the two components $Y_1, Y_2$ of a hyperbola, is partitioned in three mutually non-bordant ways according to $N_i, i=1,2,3$. The darkly (resp.\ lightly) shaded region is $W_{+,i}$ (resp.\ $W_{-,i}$). Each lower diagram can be transformed into the one above it, for the purposes of computing $\theta_{W;W_{+,i}}\circ\Exp_W$.
}\label{fig:multiple.boundaries}
\end{center}
\end{figure}
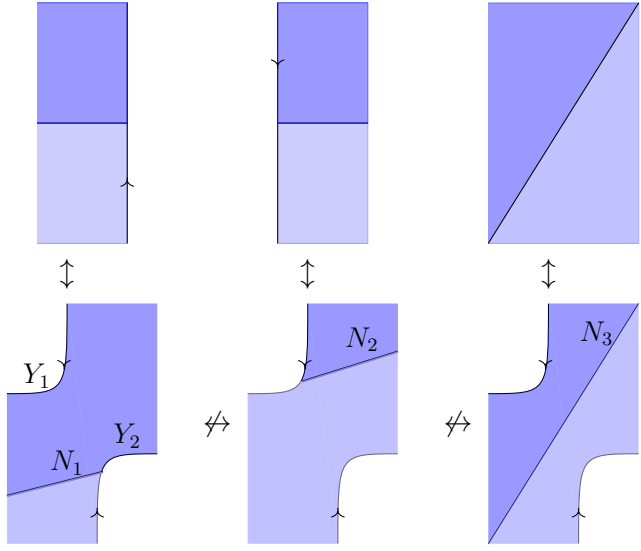

Pick three non-cobordant partitions $N_1, N_2, N_3$ of $W$, as illustrated in the lower row of Fig.\ \ref{fig:multiple.boundaries}. These give rise to homorphisms $\theta_{W;W_{+,i}}:K_1(C^*_W(\partial W))\rightarrow\ZZ$. For $N_2$, we deduce that $\theta_{W;W_{+,2}}\circ\Exp_W([\mathfrak{b}])=1$ by deforming to the standard partition $\cW_+\cup\cW_-$ of the right half-plane $\cW$, and using the same invariance argument as in Eq. \eqref{eqn:edge.index.computation}. Similarly, for $N_1$, we deduce that $\theta_{W;W_{+,1}}\circ\Exp_W([\mathfrak{b}])=-1$ by deforming to a standard partition of the \emph{left} half-plane, and noticing that the latter problem is just a rotation of the standard problem on the right half-plane, with $\cW_+$ and $\cW_-$ exchanged. Finally,  $\theta_{W;W_{+,3}}\circ\Exp_W([\mathfrak{b}])=0$ by deforming to the case where $W^\prime$ is the entire plane, so $\partial W^\prime=\emptyset$ and $\Exp_{W^\prime}=0$.

By observing how $w_{Y_1}$ ``flows'' across $N_i$, it follows immediately that 
\begin{equation*}
\theta_{W;W_{+,1}}[w_{Y_1}]=0=\theta_{W;W_{+,3}}[w_{Y_1}]
\end{equation*}
 and $\theta_{W;W_{+,2}}[w_{Y_1}]=1$. Similarly, 
 \begin{equation*}
 \theta_{W;W_{+,2}}[w_{Y_2}]=0=\theta_{W;W_{+,3}}[w_{Y_2}]
 \end{equation*}
  and $\theta_{W;W_{+,1}}[w_{Y_2}]=-1$, while 
  \begin{equation*}
  \theta_{W;W_{+,1}}[w_Z]=0=\theta_{W;W_{+,2}}[w_Z]
  \end{equation*}
   and $\theta_{W;W_{+,3}}[w_Z]=1$. So we can think of $(\theta_{W;W_{+,1}}, \theta_{W;W_{+,2}}, \theta_{W;W_{+,3}})$ as a surjective $\ZZ$-linear map from the $\ZZ$-submodule of $K_1(C^*_W(\partial W))$ spanned by $[w_{Y_1}], [w_{Y_2}], [w_Z]$ onto the free $\ZZ$-module $\ZZ^3$. Then it follows that $[w_{Y_1}]$, $[w_{Y_2}]$, $[w_Z]$ span $K_1(C^*_W(\partial W))\cong\ZZ^3$. Comparing with $(\theta_{W;W_{+,1}}, \theta_{W;W_{+,2}}, \theta_{W;W_{+,3}})$ applied to $\Exp_W([\mathfrak{b}])$, we deduce that $\Exp_W([\mathfrak{b}])=[w_{Y_1}]+[w_{Y_2}]$ is represented by the sum of edge-travelling operators along each boundary component. We conclude that the gap-filling phenomenon persists, for Chern insulator Hamiltonians $H_{{\rm Chern}, W}$ acting on this domain $W$ with more than one boundary component.

\section{Quantised boundary currents}\label{sec:cyclic.cocycles}

Recall the conventions for $W\subset X$ from \S \ref{sec:gap.filling.phenomenon}: Let $X$ be a complete, connected Riemannian manifold with an effective, cocompact, properly discontinuous, isometric action of a discrete countable group $\Gamma$. Also given is a half-space $W\subset X$ --- a closed subset with measure zero $\partial W$, such that $d(x,X\setminus W)$ is unbounded. 

Let $S$ be the compact separated part of ${\rm Spec}(H_X)$ lying below some resolvent value of $H_X$ (Remark \ref{rem:choice.of.phi}), and let $\Delta$ denote the (bounded) spectral gap of $H_X$ lying immediately above $S$. The spectral projection $P_S$ for $H_X$ can be obtained as $\varphi(H_X)$ with $\varphi\in \mathcal{S}(\RR)$ chosen to be a \emph{Schwartz function}, not just a $C_0(\RR)$ function. Furthermore, we arrange for $\varphi$ to be 1 on $(-\infty,{\rm sup}\, S]\cap {\rm Spec}(H_W)$, not just on $S=(-\infty,{\rm sup}\, S]\cap {\rm Spec}(H_X)$. Then $-\varphi^\prime\in\cS(\RR)$ as a function of ${\rm Spec}(H_W)$ is nonzero only inside $\Delta$, and we further arrange for $-\varphi^\prime$ to be positive (Fig.\ \ref{fig:spectral.projection}). In the previous section, we gave examples of spectral projections $P_S=\varphi(H_X)\in C^*(X,\Gamma)$ and partitions $W_+$ of $W$, such that $[P_S]\mapsto \theta_{W_+}(\Exp_W[P_S])$ is a nontrivial homomorphism indicating that $\Delta\subset {\rm Spec}(H_W)$. In the remaining subsections, we will derive the following \emph{numerical} (i.e.\ not \emph{a priori} quantised) formula for this homomorphism:
\begin{theorem}\label{thm:main.current.theorem}
Assume that $\Gamma$ has polynomial growth. With $P_S=\varphi(H_X)$ a spectral projection as in the above paragraph, and $W_+$ an admissible partition of $W$, 
we have
\begin{equation}
\theta_{W_+}(\Exp_W[P_S])\equiv\theta_{W_+}(\Exp_W[\varphi(H_X)])=-2\pi\, {\rm Tr}(-\varphi^\prime(H_W)\, i[H_{W,\Delta},\Pi]),\label{eqn:current.formula}
\end{equation}
where $H_{W,\Delta}$ denotes the restriction of $H_W$ to its spectral subspace for $\Delta$.
\end{theorem}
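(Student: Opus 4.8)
The plan is to pass from the $K$-theoretic index $\theta_{W_+}(\Exp_W[P_S])$ to a trace formula by pairing the relevant class in $K_1(C^*_W(\partial W))$ with a suitable cyclic $1$-cocycle on a dense subalgebra, and then identifying the resulting number with the right-hand side of Eq.~\eqref{eqn:current.formula}. The key fact underlying this is that $\theta_{W_+}$ is the Fredholm-index map of the compression $T_u=\Pi_{W_+}u\Pi_{W_+}^*$ of a unitary $u$ representing $\Exp_W[P_S]$, and by the proof of Theorem~\ref{thm:gap.filling} one may take $u=\exp(-2\pi i\,\varphi(H_W))$. The Fredholm index of $T_u$ can be computed by a Connes-type cyclic-cocycle formula: for a $1$-summable cyclic $1$-cocycle $\tau$ built from the grading operator $\Pi=\Pi_{W_+}$ on $L^2(W)$, one has $\mathrm{Index}(T_u)=c\,\langle \tau, [u]\rangle$ for an explicit constant $c$, provided $u-1$ lies in a smooth dense subalgebra on which $\tau$ is defined and continuous. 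This is where the polynomial growth hypothesis on $\Gamma$ enters: it guarantees the existence of a dense holomorphically closed Fr\'echet subalgebra of $C^*_W(\partial W)$ (a Schwartz-type algebra of rapidly-decaying finite-propagation kernels) on which the relevant traces and commutators converge; this is exactly the mechanism used in Roe's partitioned-manifold index theorem and in the Connes--Moscovici-style arguments.

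The steps I would carry out, in order, are: (1) Fix a Fr\'echet completion $\mathcal{A}\subset C^*_W(\partial W)$ of $\sC_{W,0}(\partial W)$ consisting of locally compact finite-propagation-type operators with Schwartz decay in the boundary direction, closed under holomorphic functional calculus; verify using polynomial growth of $\Gamma$ and the wave-equation estimates of Theorem~\ref{thm:functional.calculus} that $\varphi(H_W)-\sigma\varphi(H_X)$ (which is the error term supported near $\partial W$) and $\exp(-2\pi i\varphi(H_W))-1$ lie in $\mathcal{A}$. (2) Write down the cyclic $1$-cocycle $\tau(a_0,a_1)=\mathrm{Tr}(a_0[\Pi,a_1])$ on $\mathcal{A}$ and check it is well-defined (trace-class by Lemma~\ref{lem:compact.commutator} upgraded to $1$-summability on $\mathcal{A}$) and a cyclic cocycle. (3) Invoke the standard index-pairing formula to get $\theta_{W_+}(\Exp_W[P_S])=-\frac{1}{2\pi i}\langle\tau,[u]\rangle$ with $u=\exp(-2\pi i\varphi(H_W))$, i.e.\ $\theta_{W_+}(\Exp_W[P_S])=-\frac{1}{2\pi i}\mathrm{Tr}\bigl(u^{-1}[\Pi,u]\bigr)$. (4) Expand $u=\exp(-2\pi i\varphi(H_W))$ and compute the commutator: using the Duhamel/Dyson expansion $[\Pi,\exp(-2\pi i\varphi(H_W))]=-2\pi i\int_0^1 \exp(-2\pi i s\varphi(H_W))\,[\Pi,\varphi(H_W)]\,\exp(-2\pi i(1-s)\varphi(H_W))\,ds$, and then using the trace property and $u^{-1}=\exp(2\pi i\varphi(H_W))$ to collapse the integral, one arrives at $\mathrm{Tr}(u^{-1}[\Pi,u])=-2\pi i\,\mathrm{Tr}([\Pi,\varphi(H_W)])$ — but this naive form vanishes by the trace property unless one is careful that $[\Pi,\varphi(H_W)]$ is trace-class while $\Pi\varphi(H_W)$ and $\varphi(H_W)\Pi$ separately are not; the correct manipulation yields $\theta_{W_+}(\Exp_W[P_S])=-2\pi\,\mathrm{Tr}\bigl(\varphi'(H_W)\,i[H_W,\Pi]\bigr)$ after replacing $\varphi$-functional calculus commutators by $\varphi'$ times the Hamiltonian commutator (Helffer--Sj\"ostrand / almost-analytic extension, or a direct Duhamel argument for $e^{itH_W}$). (5) Finally restrict to the spectral subspace for $\Delta$: since $-\varphi'$ is supported in $\Delta\cap\mathrm{Spec}(H_W)$, the operator $\varphi'(H_W)$ factors through the spectral projection onto $\Delta$, so $\varphi'(H_W)\,i[H_W,\Pi]=\varphi'(H_W)\,i[H_{W,\Delta},\Pi]$ and we obtain Eq.~\eqref{eqn:current.formula}; the sign $-2\pi$ and the appearance of $-\varphi'$ (which is positive) then match.

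The main obstacle I expect is Step (4) together with the trace-class bookkeeping underlying Steps (1)--(2): one has to justify rigorously that all the individual operators appearing in the Duhamel expansion, after compression by $\Pi$ or commutation with $\Pi$, are trace-class, and that the formal manipulations with the trace (cyclicity, interchange of integral and trace) are legitimate. This is precisely where polynomial growth of $\Gamma$ is essential — it is what makes the ``boundary-localised'' Schwartz algebra $\mathcal{A}$ large enough to contain $\exp(-2\pi i\varphi(H_W))-1$ and its derivatives while still being $1$-summable against $\tau$. A secondary subtlety is the passage from the commutator of $\varphi(H_W)$ with $\Pi$ to $\varphi'(H_W)$ times the commutator of $H_W$ with $\Pi$: since $H_W$ is unbounded and $[H_W,\Pi]$ is only a distributional/first-order object a priori, this should be handled via the finite-propagation wave representation $\varphi(H_W)=\int\hat\varphi(s)e^{isH_W}\,ds$ and the identity $[\Pi,e^{isH_W}]=i\int_0^s e^{itH_W}[H_W,\Pi]e^{i(s-t)H_W}\,dt$, with the spatial localisation of $[H_W,\Pi]$ (a first-order operator supported near $\partial W$, meeting $W_+$-versus-$W_-$ only near $N$) ensuring everything is trace-class. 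The remaining verifications — well-definedness of $\theta_{W_+}$, the section identity $\varpi\sigma=\mathrm{id}$, and the index-pairing formula — are standard and quotable.
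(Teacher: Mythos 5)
Your overall architecture is the right one and matches the paper's: realise $\Exp_W[P_S]$ by $u=\exp(-2\pi i\varphi(H_W))$, show $u-1$ lies in a boundary-localised smooth-kernel subalgebra of $C^*_W(\partial W)$ (this is where polynomial growth enters, exactly as you say), compute $\theta_{W_+}([u])$ as a trace of $u$ against $[\Pi,\cdot]$, and then convert to the current formula. However, Step (4) contains a genuine gap, and you have in fact located it yourself: your Duhamel computation, after cyclic permutation, yields ${\rm Tr}(u^{-1}[\Pi,u])=-2\pi i\,{\rm Tr}([\Pi,\varphi(H_W)])=0$, which is false, and the reason is that $[\Pi,\varphi(H_W)]$ is \emph{not} trace class (its support is spread along the unbounded partitioning set $N$, not near the compact set $Q_R^{W;W_+}$), so the cyclic permutation inside the integral is illegitimate. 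Saying ``the correct manipulation yields'' the answer is not a proof; the entire content of the theorem is in how one avoids exactly this collapse. The paper's mechanism is to never separate the boundary-localised factor from the commutator: one works with $-{\rm Tr}\bigl((u^*-1)[\Pi,u]\bigr)$, expands the exponential as a power series, and uses cyclicity only on products of the form (trace class)$\times$(bounded), where the trace-class factor always carries $(u^*-1)\in\sC_W(\partial W)$.

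Two further steps that your outline asserts but does not supply: (i) the conversion $[\Pi,\varphi(H_W)]\rightsquigarrow\varphi'(H_W)[\Pi,H_{W,\Delta}]$ cannot be done on the unbounded $H_W$ via Helffer--Sj\"ostrand or the wave group, because $[\Pi,H_W]$ is not a bounded operator and $\Pi$ does not preserve ${\rm dom}(H_W)$; one must first split off the spectral projection $Q_-$ for $(-\infty,\sup S]$ (showing its contribution has vanishing trace) and restrict to the \emph{bounded} operator $H_{W,\Delta}$, where a $C^1$ polynomial approximation of $\varphi$ on $\overline{\Delta}$ does the job. (ii) Even after this, one arrives at $2\pi i\,{\rm Tr}\bigl((1-u)\varphi'(H_W)[\Pi,H_{W,\Delta}]\bigr)$, and removing the factor $(1-u)$ is a nontrivial step: it uses $\theta_{W_+}(\Exp_W[k\varphi(H_X)])=k\,\theta_{W_+}(\Exp_W[\varphi(H_X)])$ for all $k\neq 0$, summation against the Fourier coefficients of a function $\phi$ on $[0,1]$ vanishing at the endpoints, and a $\sigma$-weak limit $\phi\to\chi_{(0,1)}$ (the Kellendonk--Schulz-Baldes argument). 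This step is entirely absent from your proposal, and without it the two traces are simply different numbers. Finally, a minor point: the index pairing in your Step (3) should read ${\rm Index}(T_u)=-{\rm Tr}(u^{-1}[\Pi,u])$ with no $\tfrac{1}{2\pi i}$ prefactor (it comes from the Calder\'on formula ${\rm Index}(T_u)={\rm Tr}(T_uT_{u^*}-T_{u^*}T_u)$), otherwise your final constant will be off by $2\pi i$.
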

Regarding the polynomial growth condition, see the next subsection for details. Our proof combines ideas originating in \cite{Roe-partition} and \cite{KSB,PSB}, as well as some technical results involving smooth integral kernel operators in \cite{Ludewig-Thiang}. 

\medskip

\begin{remark}[{\bf Physical significance of Theorem \ref{thm:main.current.theorem}}]\label{rem:physical.significance} When considering the boundary states of $H_W$ with energies lying in $\Delta$, the term $i[H_{W,\Delta},\Pi]$ is the time-derivative of the observable $\Pi$ of being in $W_+$, by Heisenberg's equation of motion. With $-\varphi^\prime> 0$, we interpret $-\varphi^\prime(H_W)$ as a statistical ensemble of generalised eigenstates of $H_W$ with energies within $\Delta$ (see Fig.\ \ref{fig:spectral.projection}), ``normalised'' by the condition $\int_\Delta -\varphi'=1$. Furthermore, by Theorem \ref{thm:functional.calculus}, we have
$\varpi(-\varphi'(H_W))=-\varphi'(H_X)=0$, so $-\varphi'(H_W)\in C^*_W(\partial W)$ is localised near $\partial W$. Thus, ${\rm Tr}(-\varphi^\prime(H_W)\, i[H_{W,\Delta},\Pi])$ on the right-hand-side of Eq.\ \eqref{eqn:current.formula} is the expected rate of change of probability to be inside $W_+$, in the statistical ensemble $-\varphi^\prime(H_W)$ of boundary localised states. Because this is equal to $\frac{1}{2\pi}$ of some integer Fredholm index by Eq.\ \eqref{eqn:current.formula}, we deduce, \emph{a posteriori}, that the $\Delta$-filling boundary states of $H_W$ constitute a quantised current channel flowing across $N$ from $W_+$ into $W_-$. 
\end{remark}

\begin{example}
Applying Eq. \eqref{eqn:current.formula} to the Landau Hamiltonian, we obtain from Eq.\ \eqref{eqn:edge.index.computation} that $H_{{\rm Lan},W}$ has $j$ quantised edge current channels in the $j$-th spectral gap of $H_{{\rm Lan},X}$ --- this rule-of-thumb is frequently invoked in the physics literature. Our analysis generalises the existing rigorous results obtained for $W=\cW$ a half-plane, such as \S 7.1 of \cite{PSB}, \cite{KSB} and Fig.\ 1\ of \cite{Pule}. 
 \end{example}

\subsection{Subalgebras of smooth kernel operators}

We establish a refinement of \S \ref{sec:functional.calculus}, applicable under the {\em polynomial volume growth} hypothesis on $X$, which means that
  \begin{equation} \label{PolynomialVolumeGrowth}
   V_\mu := \int_X \bigl(1+d(x, y)\bigr)^{-\mu} \dd y < \infty
  \end{equation}
for some $\mu > 0$ and some $x \in X$. By $\Gamma$-invariance, a similar estimate then holds for any $x \in X$. By cocompactness of the action and the Milnor--\v{S}varc Lemma, this is equivalent to requiring that $\Gamma$ be of polynomial volume growth with respect to the word metric. In other words, this condition turns out to be a condition on the group $\Gamma$ alone. Typical examples of groups that satisfy this are crystallographic groups.

\begin{definition}
We define the following subsets of the algebras $C^*(X, \Gamma)$, $C^*_W(\partial W)$ and $Q^*(W, \Gamma)$.
\begin{enumerate}
\item[(1)] A smooth kernel $a\in C^\infty (X\times X)$ has {\em rapid decay away from the diagonal} if for any $\nu \in \RR$, there exists a constant $C_\nu >0$ such that
\begin{equation} \label{RapidDecayEstimate}
 |a(x, y)| \leq C_\nu \bigl(1 + d(x, y)\bigr)^{-\nu}, \qquad x,y,\in X.
\end{equation}
The set of integral operators in $\cB(L^2(X))$ with smooth $\Gamma$-invariant kernels $a \in C^\infty(X \times X)$ with {rapid decay away from the diagonal}, is denoted $\sC(X,\Gamma)$.
\item[(2)] Let $\sC_W(\partial W)$ be the space of integral operators in $\cB(L^2(W))$ with smooth kernels that decay rapidly away from the diagonal, and additionally have the property that for each $\nu \geq 0$, there exists a constant $C_\nu >0$ such that
\begin{equation*}
  |a(x, y)| \leq C_\nu\bigl[\bigl(1+ d_{\partial W}(x)\bigr)^{-\nu} + \bigl(1+ d_{\partial W}(y)\bigr)^{-\nu}\bigr],\qquad x,y,\in W.
\end{equation*}
\item[(3)] $\sQ(W, \Gamma) \subset Q^*(W, \Gamma)$ is the subspace of smooth kernel operators $a$  that can be written as the sum $a = \sigma a^\prime + b$, where $b \in \sC_W(\partial W)$ and $a^\prime \in \sC(X, \Gamma)$, with restriction $\sigma a$ to $W$.
\end{enumerate}
\end{definition}

It is straightforward to show that both $\sC(X, \Gamma) \subset C^*(X, \Gamma)$ and $\sQ(W, \Gamma) \subset Q^*(W, \Gamma)$ are subalgebras, and that the periodization map $\varpi$ takes $\sQ(W, \Gamma)$ to $\sC(X, \Gamma)$. Moreover, the kernel of $\varpi$ restricted to $\sQ(W, \Gamma)$ is precisely $\sC_W(\partial W) \subset C^*_W(\partial W)$ leading to the short exact sequence
\begin{equation} \label{ShortExactSequenceSmooth}
\begin{tikzcd}
  0 \ar[r] & \sC_W(\partial W) \ar[r] & \sQ(W, \Gamma) \ar[r, "\varpi"] & \sC(X, \Gamma) \ar[r] & 0.
\end{tikzcd}
\end{equation}
This refines the quasi-periodic short exact sequence of $C^*$-algebras, Eq.\ \eqref{ShortExactSequence}.

We have the following refinement of Theorem \ref{thm:functional.calculus}.

\begin{proposition}\label{prop:smooth.functional.calculus}
If $\varphi$ lies in the Schwartz space $\cS(\RR)$, we have $\varphi(H_X) \in \sC(X, \Gamma)$ and $\varphi(H_W) \in \sQ(W, \Gamma)$.
\end{proposition}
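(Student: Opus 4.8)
The plan is to leverage the wave-equation/Fourier-transform machinery already established in the proof of Theorem~\ref{thm:functional.calculus}, upgrading the conclusions from membership in the $C^*$-algebras to membership in the smooth kernel subalgebras, using the polynomial volume growth hypothesis \eqref{PolynomialVolumeGrowth} to control decay rates. First I would treat the statement $\varphi(H_X) \in \sC(X,\Gamma)$. For $\varphi \in \cS(\RR)$, the operator $\varphi(H_X)$ is $\Gamma$-invariant (since $H_X$ is), so I only need to show it has a smooth kernel decaying rapidly away from the diagonal. Write $\varphi(H_X)$ via the wave operator: in the Laplace case, $\varphi(H_X) = \psi(\sqrt{H_X})$ with $\psi(x) = \varphi(x^2)$ even, so $\varphi(H_X) = \frac{1}{\pi}\int_0^\infty \hat\psi(s)\cos(s\sqrt{H_X})\,\dd s$. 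Split the integral at a cutoff $|s| \leq r$ versus $|s| > r$. The finite-propagation part (small $s$) contributes, at points $x,y$ with $d(x,y) > r$, exactly zero by finite propagation of the wave operator; hence the kernel of $\varphi(H_X)$ at distance $R$ agrees with the kernel of $\frac{1}{\pi}\int_{|s|>R}\hat\psi(s)\cos(s\sqrt{H_X})\,\dd s$, whose operator norm is bounded by $\frac{1}{\pi}\int_{|s|>R}|\hat\psi(s)|\,\dd s$. Since $\psi \in \cS(\RR)$ (note $\varphi \in \cS$ implies $x \mapsto \varphi(x^2)$ is Schwartz), $\hat\psi \in \cS(\RR)$, so this tail is $O(R^{-\nu})$ for every $\nu$. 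This gives rapid decay of the kernel, \emph{provided} we know the kernel is pointwise controlled by the operator norm of its ``far-away'' part --- which is where elliptic regularity and polynomial growth enter: the kernel is smooth by elliptic regularity (as in Thm.~\ref{thm:functional.calculus}), and Sobolev embedding plus the local uniformity afforded by cocompactness converts $L^2$-operator bounds into pointwise kernel bounds. Concretely, I would apply the estimate to $\varphi$ replaced by $\varphi$ multiplied by suitable powers of the resolvent or by using that $H_X^k \varphi(H_X)$ is again of the same type (since $x^k\varphi(x)$, $x^k\varphi(x^2)$ remain Schwartz after the even-function adjustment), controlling all derivatives, and then invoke a Sobolev-type bound for the integral kernel that is uniform across $X$ by $\Gamma$-cocompactness.

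Next, for $\varphi(H_W) \in \sQ(W,\Gamma)$, I would reuse the key identity from the proof of Theorem~\ref{thm:functional.calculus}: the operator $\varphi(H_W) - \sigma\varphi(H_X)$ (the difference of the compressed functional calculus and the section applied to $\varphi(H_X)$) is, by finite propagation speed of the wave operators and uniqueness of solutions to the wave equation, supported within finite distance of $\partial W$ in a precise quantitative sense. More carefully: repeating the computation there but now tracking decay, for $w \in C_c(W)$ with $d(\mathrm{supp}(w),\partial W) > R$, one has $(\varphi(H_W) - \sigma\varphi(H_X))w = 2\int_R^\infty \hat\varphi(s)(\cos(2\pi s\sqrt{H_W}) - \Pi_W\cos(2\pi s\sqrt{H_X})\Pi_W^*)w\,\dd s$ (or the analogous first-order expression), whose norm is bounded by a tail integral of $|\hat\varphi|$, hence $O(R^{-\nu})$ for all $\nu$ since $\hat\varphi \in \cS(\RR)$. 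Translating this operator bound into a pointwise kernel bound exactly as before (elliptic regularity for smoothness, Sobolev embedding for pointwise control, uniform by cocompactness), one concludes that the kernel of $b := \varphi(H_W) - \sigma\varphi(H_X)$ satisfies $|b(x,y)| \leq C_\nu[(1+d_{\partial W}(x))^{-\nu} + (1+d_{\partial W}(y))^{-\nu}]$ after symmetrizing (one also needs rapid decay away from the diagonal for $b$, which follows from the same finite-propagation argument applied in the diagonal direction). Hence $b \in \sC_W(\partial W)$. Since $\varphi(H_X) \in \sC(X,\Gamma)$ by the first part and $\sigma\varphi(H_X)$ is by definition the restriction to $W$, we get $\varphi(H_W) = \sigma\varphi(H_X) + b$ with $b \in \sC_W(\partial W)$, which is exactly the decomposition defining membership in $\sQ(W,\Gamma)$.

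The main obstacle I anticipate is the passage from operator-norm bounds (on tails of the wave-operator integral) to \emph{pointwise} bounds on the smooth integral kernels with the \emph{correct} decay rates --- this is precisely the role of the polynomial volume growth hypothesis \eqref{PolynomialVolumeGrowth} and is the reason it is imposed here but not in Theorem~\ref{thm:functional.calculus}. The standard route is: for an elliptic operator, $\varphi(H_X)$ maps $L^2$ into every local Sobolev space with bounds controlled by $\|(1+H_X)^k\varphi(H_X)\|$, which by Schwartz decay of $\varphi$ (and the even-function device so that $x \mapsto (1+x^2)^k\varphi(x^2)$ is still Schwartz) are all finite; combined with Sobolev embedding this yields a pointwise bound on $a(x,y)$ by a constant times $\|\chi_{B_1(x)}\varphi(H_X)\chi_{B_1(y)}\|$ (or a derivative-corrected version), uniformly in $x,y$ by $\Gamma$-cocompactness; and then the finite-propagation/tail argument controls the latter by $C_\nu(1+d(x,y))^{-\nu}$. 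The polynomial growth condition is what guarantees the resulting kernel operators actually lie in the stated algebras (e.g., that composition is well-defined and that these are genuinely subalgebras, which is cited as ``straightforward'' just before this Proposition) --- and in practice I expect one needs \eqref{PolynomialVolumeGrowth} to ensure the rapid-decay kernels are integrable enough that products of such operators again have rapid-decay kernels, but for this Proposition itself the growth condition is mainly a standing assumption ensuring we are working inside the correct function spaces. I would organize the writeup by first proving a lemma of the form ``an operator in $C^*(X,\Gamma)$ [resp.\ $C^*_W(\partial W)$] whose far-diagonal [resp.\ far-boundary] truncations have operator norm decaying faster than any polynomial, and which is given by a smooth kernel, lies in $\sC(X,\Gamma)$ [resp.\ $\sC_W(\partial W)$]'', reducing everything to applying that lemma to the two wave-operator tail estimates above.
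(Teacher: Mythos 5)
Your argument is sound, but you should know that the paper does not actually prove this proposition: its entire proof is a citation of Thm.~6.3 of \cite{Ludewig-Thiang} for the statement about $\varphi(H_X)$, together with the remark that $\varphi(H_W)$ ``can be dealt with in a similar fashion.'' What you have written is, in effect, the standard argument behind that citation: the splitting of $\psi(\sqrt{H_X})=\tfrac{1}{\pi}\int_0^\infty\hat\psi(s)\cos(s\sqrt{H_X})\,\dd s$ into a finite-propagation part and a rapidly decaying tail, the sandwiching by powers of $(1+H_X)$ (replacing $\varphi$ by $(1+x)^{2k}\varphi(x)$, still Schwartz) to convert operator-norm bounds on localised pieces into pointwise kernel bounds via elliptic regularity and Sobolev embedding, uniformly in $x,y$ by cocompactness. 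Your treatment of $\varphi(H_W)$ correctly quantifies the finite-speed comparison of $\cos(s\sqrt{H_W})$ with $\cos(s\sqrt{H_X})$ from the proof of Theorem~\ref{thm:functional.calculus} to show $b=\varphi(H_W)-\sigma\varphi(H_X)\in\sC_W(\partial W)$, which is exactly the decomposition defining $\sQ(W,\Gamma)$; note your tail estimate actually yields the stronger bound $|b(x,y)|\leq C_\nu(1+d_{\partial W}(y))^{-\nu}$, which implies the symmetric sum required in the definition. Your reading of the role of \eqref{PolynomialVolumeGrowth} is also correct: it is not needed for the kernel estimates themselves but for the claim (asserted before the proposition) that the rapid-decay classes are closed under composition. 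The one point you pass over, as does the paper, is smoothness of the kernel of $\varphi(H_W)$ up to $\partial W$, which requires boundary elliptic regularity for the chosen local boundary condition rather than interior regularity alone; a careful writeup should address this.
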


\begin{proof}
For $\varphi(H_X)$, this is Thm.~6.3 in \cite{Ludewig-Thiang}. The case of $\varphi(H_W)$ can be dealt with in a similar fashion.
\end{proof}

\subsection{Proof of Theorem \ref{thm:main.current.theorem}}\label{sec:quantised.chiral.currents}

With respect to an admissible partition of $W$ into $W_+\cup W_-$, and writing $\Pi$ for $\Pi_{W_+}$ as before, the \emph{switching elements} $(1-\Pi)A\Pi$ and $\Pi A (1-\Pi)$ of an operators $A\in \cB(L^2(W))$ will be of particular interest to us. 
\begin{lemma}\label{lem:switching.elements}
Suppose $A,B\in \cB(L^2(W))$ have trace class switching elements. Then $[\Pi,A]$ (and also $[\Pi,B]$) is trace class with zero trace. Furthermore, 
\begin{equation*}
{\rm Tr}(A[\Pi,B])={\rm Tr}([\Pi A\Pi, \Pi B \Pi]-\Pi[A,B]\Pi)=-{\rm Tr}(B[\Pi,A]).
\end{equation*}
\end{lemma}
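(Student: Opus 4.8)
The statement is an algebraic identity about traces, so the plan is to first reduce to the case where $A$ and $B$ are finite-propagation operators (or even operators with integral kernels) by a density/approximation argument, then verify the identity by a direct computation that only uses cyclicity of the trace on trace-class operators together with the splitting $1 = \Pi + (1-\Pi)$. The key algebraic observation is that since $\Pi^2 = \Pi$, the commutator $[\Pi, A] = \Pi A(1-\Pi) - (1-\Pi)A\Pi$ is exactly the difference of the two switching elements of $A$, so the hypothesis that the switching elements are trace class immediately gives that $[\Pi, A]$ and $[\Pi, B]$ are trace class. That the trace of $[\Pi, A]$ vanishes does \emph{not} follow from naive cyclicity (one is not allowed to cyclically permute $\Pi A \Pi \cdot \Pi$ type expressions unless the pieces are individually trace class); instead I would write ${\rm Tr}[\Pi,A] = {\rm Tr}(\Pi A(1-\Pi)) - {\rm Tr}((1-\Pi)A\Pi)$ and note that each summand \emph{is} trace class by hypothesis, so cyclicity applies termwise: ${\rm Tr}(\Pi A(1-\Pi)) = {\rm Tr}((1-\Pi)\Pi A) = {\rm Tr}((1-\Pi)A\Pi)$ after using $\Pi(1-\Pi) = 0$ and relabelling — more carefully, ${\rm Tr}(\Pi A (1-\Pi)) = {\rm Tr}((1-\Pi)\Pi A)=0$? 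No: one computes ${\rm Tr}(\Pi A(1-\Pi))={\rm Tr}((1-\Pi)\Pi A)$ is wrong since $(1-\Pi)\Pi=0$ would give $0$, whereas the correct cyclic move is ${\rm Tr}(\Pi\cdot A(1-\Pi))={\rm Tr}(A(1-\Pi)\cdot\Pi)={\rm Tr}(A\cdot 0)=0$ — so in fact \emph{each} switching-element trace is itself zero, which is even stronger and makes ${\rm Tr}[\Pi,A]=0$ transparent.

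For the main identity, I would expand $A[\Pi, B] = A\Pi B - AB\Pi$ and insert $1 = \Pi + (1-\Pi)$ in the right spots. Concretely, sandwich appropriately: $\Pi[A,B]\Pi = \Pi AB\Pi - \Pi BA \Pi$, while $[\Pi A\Pi, \Pi B\Pi] = \Pi A\Pi B\Pi - \Pi B\Pi A\Pi$. Taking the difference of the two and using $\Pi = 1-(1-\Pi)$ to rewrite $\Pi A\Pi B\Pi - \Pi AB\Pi = -\Pi A(1-\Pi)B\Pi$ and similarly $\Pi B\Pi A\Pi - \Pi BA\Pi = -\Pi B(1-\Pi)A\Pi$, one gets $[\Pi A\Pi, \Pi B\Pi] - \Pi[A,B]\Pi = \Pi B(1-\Pi)A\Pi - \Pi A(1-\Pi)B\Pi$. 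Now I take traces: since $A(1-\Pi)$ and $B(1-\Pi)$ (equivalently $(1-\Pi)A$, $(1-\Pi)B$) are trace class, each term is a product of a trace-class operator with bounded ones, and cyclicity gives ${\rm Tr}(\Pi B(1-\Pi)A\Pi) = {\rm Tr}((1-\Pi)A\Pi B) $ etc. On the other side, ${\rm Tr}(A[\Pi,B]) = {\rm Tr}(A\Pi B) - {\rm Tr}(AB\Pi)$; inserting $\Pi + (1-\Pi)$ before the first $B$ and before the $\Pi$ and cancelling the two copies of ${\rm Tr}(A\Pi B\Pi)$-type terms (legitimate once everything is trace class) collapses this to ${\rm Tr}((1-\Pi)A\Pi B) - {\rm Tr}(\Pi A (1-\Pi) B)$, matching. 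The antisymmetry ${\rm Tr}(A[\Pi,B]) = -{\rm Tr}(B[\Pi,A])$ then follows either by symmetry of the final expression under $A\leftrightarrow B$ together with a sign, or directly from ${\rm Tr}(A\Pi B - AB\Pi) = -{\rm Tr}(B\Pi A - BA\Pi)$ using cyclicity ${\rm Tr}(A\Pi B)={\rm Tr}(BA\Pi)$ and ${\rm Tr}(AB\Pi)={\rm Tr}(\Pi AB)={\rm Tr}(B\Pi A)$ — each such move being valid because, for instance, $A\Pi B - AB\Pi = A[\Pi,B]$ is trace class so we may split it and handle the pieces after noting they become trace class once paired with the $(1-\Pi)$ factor.

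\textbf{Main obstacle.} The genuine subtlety — and the only place real care is needed — is justifying every use of cyclicity of the trace: ${\rm Tr}(XY) = {\rm Tr}(YX)$ requires $XY$ (hence $YX$) to be trace class, and it is \emph{not} enough that the full expression one starts with is trace class; the individual rearranged pieces must be too. So the plan is to be scrupulous about always grouping a $(1-\Pi)$ adjacent to an $A$ or $B$ (or a $\Pi$ adjacent to the opposite factor) before splitting any expression into a sum, so that each summand is manifestly a product involving one of the trace-class operators $A(1-\Pi), (1-\Pi)A, B(1-\Pi), (1-\Pi)B$. I do not expect to need the approximation-by-finite-propagation reduction at all for this lemma, since the hypothesis is phrased directly in terms of trace-class switching elements; that reduction is presumably used elsewhere to \emph{verify} the hypothesis for the operators $-\varphi'(H_W)$ and $i[H_{W,\Delta},\Pi]$ appearing in Theorem~\ref{thm:main.current.theorem}, via Proposition~\ref{prop:smooth.functional.calculus} and the rapid-decay kernel estimates. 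The rest is bookkeeping.
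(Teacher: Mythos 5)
Your proposal is correct and follows essentially the same route as the paper: both rest on the identity $[\Pi A\Pi,\Pi B\Pi]-\Pi[A,B]\Pi=\Pi B(1-\Pi)A\Pi-\Pi A(1-\Pi)B\Pi$, on the observation that each switching element is itself traceless because $(1-\Pi)\Pi=0$, and on applying cyclicity only to products of a bounded operator with one of the trace-class switching elements. The one cosmetic tightening worth making is to justify ${\rm Tr}(\Pi A(1-\Pi))=0$ by first inserting $\Pi=\Pi^2$, so the cyclic move is manifestly of the form ${\rm Tr}(\Pi\cdot[\Pi A(1-\Pi)])={\rm Tr}([\Pi A(1-\Pi)]\cdot\Pi)$ with an explicit trace-class factor — exactly the discipline your own "main obstacle" paragraph prescribes.
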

\begin{proof}
$[\Pi,A]=\Pi A(1-\Pi)-(1-\Pi)A\Pi$ is the sum of two trace class operators. Since $\Pi^2=\Pi$, it follows from cyclicity of the trace that
\begin{equation*}
{\rm Tr}(\Pi A(1-\Pi))={\rm Tr}(\Pi \Pi A (1-\Pi))={\rm Tr}(\Pi A (1-\Pi)\Pi)={\rm Tr}(0)=0,
\end{equation*}
so ${\rm Tr}([\Pi,A])=0$.
For the second statement, observe that the middle term
\begin{equation*}
[\Pi A\Pi, \Pi B \Pi]-\Pi[A,B]\Pi=-\Pi A(1-\Pi)B\Pi+\Pi B(1-\Pi)A\Pi
\end{equation*}
is trace class. Supplementing it with the terms $-(1-\Pi)A(1-\Pi)B\Pi$ and \mbox{$\Pi B(1-\Pi)A(1-\Pi)$}, which are traceless (by cyclicity), we get
\begin{align*}
{\rm Tr}([\Pi A\Pi, \Pi B \Pi]-\Pi[A,B]\Pi)&= {\rm Tr}(-A(1-\Pi)B\Pi+\Pi B(1-\Pi)A)\\
&={\rm Tr}(-A(1-\Pi)B\Pi+A\Pi B(1-\Pi))\\
&={\rm Tr}(-AB\Pi +A\Pi B\Pi + A\Pi B-A\Pi B\Pi)\\
&={\rm Tr}(A[\Pi,B]).
\end{align*}
\end{proof}

\begin{lemma}\label{lem:trace.class.off.diagonal}
Suppose $u\in C^*_W(\partial W)^+$ is a unitary such that $u$ (or equivalently $u-1$) has trace class switching elements. Then
\begin{equation*}
\theta_{W_+}([u])={\rm Tr}(u[\Pi,u^*])=-{\rm Tr}(u^*[\Pi,u])=-{\rm Tr}((u^*-1)[\Pi,u]).
\end{equation*}
\end{lemma}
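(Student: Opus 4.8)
The plan is to express the Fredholm index $\theta_{W_+}([u]) = \mathrm{Index}(T_u)$, where $T_u = \Pi u \Pi$ acting on $L^2(W_+)$, in terms of a trace of commutators with $\Pi$, exploiting the standard fact that for a Fredholm operator of the form "compression of a unitary modulo compacts", its index can be computed by a trace formula. More precisely, I would first recall from Lemma~\ref{lem:compact.commutator} that $[\Pi, u]$ is compact (here $A = u - 1 \in \mathcal{SC}_{W,0}(\partial W)$, after density reduction), so $T_u T_{u^*} - 1$ and $T_{u^*} T_u - 1$ are compact; under the strengthened hypothesis that the switching elements $(1-\Pi)u\Pi$ and $\Pi u (1-\Pi)$ are trace class, these operators are in fact trace class. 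Then the index is given by $\mathrm{Index}(T_u) = \mathrm{Tr}(1 - T_{u^*}T_u) - \mathrm{Tr}(1 - T_u T_{u^*})$, the classical formula valid whenever both these operators are trace class.

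Next I would compute these two traces explicitly. Writing $P = \Pi$ and expanding, $1 - T_{u^*}T_u = \Pi - \Pi u^* \Pi u \Pi = \Pi u^* (1-\Pi) u \Pi$ using $u^* u = 1$, and similarly $1 - T_u T_{u^*} = \Pi u (1-\Pi) u^* \Pi$. Hence
\begin{equation*}
\mathrm{Index}(T_u) = \mathrm{Tr}\bigl(\Pi u^*(1-\Pi)u\Pi\bigr) - \mathrm{Tr}\bigl(\Pi u(1-\Pi)u^*\Pi\bigr).
\end{equation*}
Using cyclicity of the trace (each factor sandwiched here is trace class, so cyclic permutation is legitimate) together with $\Pi^2 = \Pi$, the first term equals $\mathrm{Tr}\bigl((1-\Pi)u\Pi u^*(1-\Pi)\bigr) = \mathrm{Tr}\bigl((1-\Pi)u\Pi u^*\bigr)$ and the second equals $\mathrm{Tr}\bigl(\Pi u^* (1-\Pi) u\Pi\bigr) = \mathrm{Tr}\bigl(u^*(1-\Pi)u\Pi\bigr)$; rearranging these into commutator form, I expect to land on $\mathrm{Index}(T_u) = \mathrm{Tr}(u[\Pi, u^*])$. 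Indeed $\mathrm{Tr}(u[\Pi,u^*]) = \mathrm{Tr}(u\Pi u^*) - \mathrm{Tr}(u u^* \Pi) = \mathrm{Tr}(u\Pi u^*) - \mathrm{Tr}(\Pi)$ is only formal since $\mathrm{Tr}(\Pi)$ diverges, so I must instead keep everything genuinely trace class by writing $u[\Pi,u^*] = u\Pi u^*(1-\Pi) - u(1-\Pi)u^*\Pi$ (which has trace-class summands because the switching elements of $u$ are trace class) and then matching term by term with the index expression above. The equalities $\mathrm{Tr}(u[\Pi,u^*]) = -\mathrm{Tr}(u^*[\Pi,u])$ follow from Lemma~\ref{lem:switching.elements} applied with $A = u^*$, $B = u$ (noting $[u^*,u]=0$ so the middle term in that lemma reduces appropriately), and the last equality $-\mathrm{Tr}(u^*[\Pi,u]) = -\mathrm{Tr}((u^*-1)[\Pi,u])$ holds because $[\Pi,u] = [\Pi, u-1]$ has zero trace (again by Lemma~\ref{lem:switching.elements}), so subtracting $1$ from $u^*$ changes nothing.

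The main obstacle I anticipate is bookkeeping: ensuring that every trace I write down is of a genuinely trace-class operator before invoking cyclicity, since $\Pi$ itself is not trace class and naive manipulations produce divergent expressions. The cleanest route is to reduce to $u - 1 \in \mathcal{SC}_{W,0}(\partial W)$ by density (as in the proof of Proposition~\ref{prop:cobordism.partition}), note that the trace-class-switching-element hypothesis is closed under the relevant operations, and then carry out the algebra entirely in terms of the switching elements $\Pi u (1-\Pi)$ and $(1-\Pi) u \Pi$ (and their adjoints), invoking Lemma~\ref{lem:switching.elements} to handle the commutators with $\Pi$ and to get the sign-flip symmetry. Once the reduction is set up, the computation itself is the short cyclicity argument sketched above, and the classical index-as-trace formula does the rest.
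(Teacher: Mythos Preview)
Your approach is essentially the same as the paper's: both start from Cald\'eron's formula $\mathrm{Index}(T_u)=\mathrm{Tr}(T_uT_{u^*}-T_{u^*}T_u)$, rewrite this as $\mathrm{Tr}([\Pi u\Pi,\Pi u^*\Pi]-\Pi[u,u^*]\Pi)$, and then invoke Lemma~\ref{lem:switching.elements} (with $A=u$, $B=u^*$) to identify this with $\mathrm{Tr}(u[\Pi,u^*])=-\mathrm{Tr}(u^*[\Pi,u])$; the final equality follows since $\mathrm{Tr}([\Pi,u])=0$. Your explicit expansion of the index as $\mathrm{Tr}(\Pi u^*(1-\Pi)u\Pi)-\mathrm{Tr}(\Pi u(1-\Pi)u^*\Pi)$ and subsequent matching with $\mathrm{Tr}(u[\Pi,u^*])$ via off-diagonal tracelessness is correct and amounts to rederiving part of Lemma~\ref{lem:switching.elements} in this special case; the paper simply quotes that lemma directly.

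One caution: the density reduction you propose in the last paragraph is both unnecessary and unreliable. The hypothesis of the lemma is that the switching elements of a \emph{given} $u$ are trace class; approximating $u$ in norm by operators $u_n$ with $u_n-1\in\sC_{W,0}(\partial W)$ does not guarantee that the $u_n$ have trace-class switching elements, nor would the relevant traces be continuous under norm convergence even if they did. Fortunately you do not need it: the direct argument you sketch in your first two paragraphs, working entirely with the trace-class switching elements $(1-\Pi)u\Pi$, $\Pi u(1-\Pi)$ and their adjoints, is already complete. Drop the density step and you have exactly the paper's proof.
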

\begin{proof} The computation that
\begin{equation*}
\Pi-T_uT_{u^*}=\Pi-\Pi u \Pi u^*=\Pi+\Pi u (1-\Pi)u^*-\Pi uu^*=\Pi u (1-\Pi)u^*
\end{equation*}
is trace class, and similarly for $\Pi-T_{u^*}T_u$, shows that modulo trace class operators, $T_{u*}$ is an inverse for $T_u$ in $\cB(L^2(W_+))$. By Cald\'{e}ron's formula, the Fredholm index of $T_u$ is
\begin{equation*}
\begin{aligned}
\theta_{W_+}([u])\equiv{\rm Index}\, T_u &= {\rm Tr}(T_uT_{u^*}-T_{u^*}T_u)\\
 & = {\rm Tr}(\Pi u \Pi u^*\Pi - \Pi u^* \Pi u \Pi) \\
 &= {\rm Tr}(\Pi u\Pi u^*\Pi-\Pi u^* \Pi u \Pi) \\ 
 &={\rm Tr}([\Pi u \Pi, \Pi u^* \Pi]-\Pi[u,u^*]\Pi)
\end{aligned}
\end{equation*}
The result follows from Lemma \ref{lem:switching.elements} and swapping the roles of $u$ and $u^*$.
\end{proof}

\begin{remark}
The \emph{relative index} of a pair of projections $(P,Q)$ is the integer defined (where possible) by ${\rm dim\, ker}(P-Q-1)-{\rm dim\, ker}(P-Q+1)$, and if $P-Q$ is trace class, the formula ${\rm Ind}(P-Q)={\rm Tr}(P-Q)$ holds, see \cite{ASS} for details. In particular, the expression ${\rm Tr}(u[\Pi,u^*])={\rm Tr}(u\Pi u^*-\Pi)$ in Lemma \ref{lem:trace.class.off.diagonal} computes the relative index of the pair $(u\Pi u^*, \Pi)$, and we have another way to see that ${\rm Tr}(u[\Pi,u^*])$ is in fact integral.
\end{remark}

\begin{proposition}\label{prop:smooth.localised.Roe.switching}
If $A\in \sC_W(\partial W)$, then it has trace class switching elements.
\end{proposition}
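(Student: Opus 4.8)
The statement to prove is that any $A \in \sC_W(\partial W)$ has trace class switching elements $(1-\Pi)A\Pi$ and $\Pi A(1-\Pi)$, where $\Pi = \Pi_{W_+}$. The plan is to estimate the Schwartz kernel of the switching element directly and show it is trace class by a Schur-type / kernel-decay argument adapted to the polynomial growth setting. Recall that $A$ has a smooth kernel $a(x,y)$ which (i) decays rapidly away from the diagonal, i.e.\ $|a(x,y)| \leq C_\nu (1+d(x,y))^{-\nu}$ for all $\nu$, and (ii) additionally satisfies the boundary-decay estimate $|a(x,y)| \leq C_\nu\bigl[(1+d_{\partial W}(x))^{-\nu} + (1+d_{\partial W}(y))^{-\nu}\bigr]$ for all $\nu \geq 0$. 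The switching element $(1-\Pi)A\Pi$ has kernel $\mathbbm{1}_{W_-}(x)\, a(x,y)\, \mathbbm{1}_{W_+}(y)$, so it is supported on $x \in W_-$, $y \in W_+$; since $N = W_+ \cap W_-$ is the common boundary, $d(x,y) \geq \max\{d_{\partial W}(x), d_{\partial W}(y)\}$ does \emph{not} hold in general, but what does hold is that any straight path (geodesic) from $y \in W_+$ to $x \in W_-$ must cross $N$, hence $d(x,y) \geq d(x,N) + d(y,N) \geq$ something controlling both $d_{\partial W}(x)$ and $d_{\partial W}(y)$ only near $N$ — so the key geometric observation is simply that for $x\in W_-, y\in W_+$ we have $d(x,y)\ge \max\{d(x,N),d(y,N)\}$ (a point on a minimizing path from $x$ to $y$ lies in $N$).

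First I would combine the two decay estimates into a single estimate good enough for trace-class membership. On the region $x \in W_-, y \in W_+$, use rapid off-diagonal decay to get $|a(x,y)| \leq C_\nu(1+d(x,y))^{-\nu}$ for arbitrarily large $\nu$; by polynomial volume growth \eqref{PolynomialVolumeGrowth}, $\int_X (1+d(x,y))^{-\mu}\,dy < \infty$ uniformly in $x$, and by $\Gamma$-cocompactness there is a uniform bound $V_\mu$. Splitting $(1+d(x,y))^{-\nu} = (1+d(x,y))^{-\mu}(1+d(x,y))^{-(\nu-\mu)}$ and using $d(x,y) \geq \max\{d_{\partial W}(x), d_{\partial W}(y)\}$ on the switching region, one gets
\begin{equation*}
|a(x,y)\,\mathbbm{1}_{W_-}(x)\mathbbm{1}_{W_+}(y)| \leq C\,(1+d(x,y))^{-\mu}\,(1+d_{\partial W}(x))^{-(\nu-\mu)/2}\,(1+d_{\partial W}(y))^{-(\nu-\mu)/2}.
\end{equation*}
This is the crucial pointwise bound: it has integrable off-diagonal decay \emph{and} decay in the distance to the boundary in each variable separately.

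Next I would pass from the kernel bound to trace-class membership. The standard route: an integral operator on $L^2(W)$ with continuous kernel $k(x,y)$ satisfying $|k(x,y)| \leq g(x)g(y)\,h(x,y)$ with $h$ giving a bounded Schur estimate ($\sup_x\int h(x,y)\,dy<\infty$ and vice versa) and $g \in L^2(W)$, is trace class — indeed write it as a composition $M_g \circ K_h \circ M_g$ where $M_g$ is multiplication by $g$ (Hilbert–Schmidt if $g\in L^2$) and $K_h$ is bounded, so the product of two Hilbert–Schmidt operators is trace class. Here I take $g(x) = (1+d_{\partial W}(x))^{-m}$ for $m$ large; $g \in L^2(W)$ because, by polynomial volume growth and cocompactness, the number of fundamental domains within distance $r$ of $\partial W$ grows polynomially in $r$ (a counting lemma that follows from \eqref{PolynomialVolumeGrowth}), so $\int_W (1+d_{\partial W}(x))^{-2m}\,dx < \infty$ for $m$ large enough. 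The residual kernel $(1+d(x,y))^{-\mu}$ satisfies the Schur bound by \eqref{PolynomialVolumeGrowth} directly. Choosing $\nu$ large enough (hence $m = (\nu-\mu)/2$ large) makes all three factors work simultaneously, giving $(1-\Pi)A\Pi$ trace class; the estimate for $\Pi A(1-\Pi)$ is identical by symmetry (or by taking adjoints).

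The main obstacle I anticipate is \emph{not} the functional-analytic packaging (which is routine once the kernel bound is in hand) but the geometric/measure-theoretic input: establishing that $\int_W (1+d_{\partial W}(x))^{-2m}\,dx < \infty$ for large $m$ in the general Riemannian setting. This requires knowing that the sublevel sets $\{d_{\partial W} \leq r\} \cap W$ have volume growing at most polynomially in $r$ — which follows from the global polynomial volume growth \eqref{PolynomialVolumeGrowth} together with cocompactness of the $\Gamma$-action (cover a thickened boundary region by translates of a fundamental domain and count), but needs to be stated carefully since $\partial W$ itself is not $\Gamma$-invariant. A clean way to phrase it: fix $x_0 \in X$; then $\{x : d_{\partial W}(x) \leq r\} \subseteq B_{r+C}(\Gamma\text{-orbit of boundary chart})$ has volume $\lesssim (1+r)^{k}$ for some $k$ depending on the growth degree, because the $\Gamma$-action is cocompact and $\partial W \cap (\text{any fundamental domain})$ is precompact. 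Once this polynomial-volume-of-boundary-tube fact is recorded, the rest of the proof is the composition-of-Hilbert–Schmidt argument above.
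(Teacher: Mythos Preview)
There is a genuine gap. Your key pointwise inequality
\[
d(x,y) \geq \max\{d_{\partial W}(x), d_{\partial W}(y)\}\qquad\text{for } x\in W_-,\ y\in W_+
\]
is false. Take the standard example $W=\RR_+\times\RR$, $W_+=\RR_+\times\RR_+$, $W_-=\RR_+\times\RR_-$, so $\partial W=\{0\}\times\RR$ and $N=\RR_+\times\{0\}$. Then $x=(100,-\varepsilon)\in W_-$ and $y=(100,\varepsilon)\in W_+$ have $d(x,y)=2\varepsilon$ while $d_{\partial W}(x)=d_{\partial W}(y)=100$. What a minimising path from $y\in W_+$ to $x\in W_-$ actually must cross is $N\cup\partial W$, not $\partial W$ alone; so you only get $d(x,y)\geq d(x,N\cup\partial W)$, which is useless by itself.

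The second step fails for the same underlying reason. Your weight $g(x)=(1+d_{\partial W}(x))^{-m}$ is \emph{not} in $L^2(W)$ when $\partial W$ is non-compact: in the half-plane example $g(a,b)=(1+a)^{-m}$ and $\int_W g^2=\int_0^\infty(1+a)^{-2m}\,da\cdot\int_\RR db=\infty$. The claim that ``the number of fundamental domains within distance $r$ of $\partial W$ grows polynomially in $r$'' is simply wrong for non-compact $\partial W$ --- it is infinite for every $r>0$.

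What is missing is any use of the \emph{admissibility} of $W_+$ (Definition~\ref{defn:admissible.subset}). Condition (ii), that $Q_R^{W;W_+}=B_R(X\setminus W)\cap B_R(N)$ is bounded for every $R$, is exactly what converts the two separate unbounded directions (along $\partial W$ and along $N$) into a single compact exhaustion. The paper exploits this by writing $W_+=\bigcup_n W_{n,+}$ with $W_{n,+}=W_+\cap Q_{R_n}^{W;W_+}$ compact, so that $\Pi_{n,+}A\Pi_{n,-}$ is trace class for trivial reasons (smooth kernel, compact support), and then shows the sequence is trace-norm Cauchy. The Cauchy estimate uses that a point in $W_+\setminus Q_{R_n}^{W;W_+}$ is either far from $\partial W$ (so the $\sC_W(\partial W)$ decay kicks in) or far from $N$ (so, via condition (i), far from $W_-$, and the off-diagonal decay kicks in). Your direct-kernel approach can be repaired along similar lines, but not without invoking admissibility; as written it does not go through.
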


\begin{proof}
Let $R_n, n\in\NN$ be a positive sequence increasing to $\infty$, then $W_{n,+}:=W_+\cap Q_{R_n}^{W;W_+}$ (recall (ii) of Definition \ref{defn:admissible.subset}) is an increasing exhaustion of $W_+$ by compact subsets. The projection $\Pi=\Pi_{W_+}$ is strongly approximated by the corresponding sequence $\Pi_{n,+}$ that projects onto $L^2(W_{n,+})$; similarly for $(1-\Pi)=\Pi_{W_-}$. Then the intermediate switching elements $\Pi_{n,+}A\Pi_{n,-}$ is a sequence of trace-class operators (since $A$ has smooth integral kernel), which can be arranged to be Cauchy by an appropriate choices of the $R_n$, due to the rapid decay of the integral kernel away from the diagonal and from $\partial W$, and polynomial volume growth, see Fig.\ \ref{fig:smooth.kernel.small}. Thus the limit $\Pi A(1-\Pi)$ is trace class, and similarly for $(1-\Pi) A\Pi$.
\end{proof}

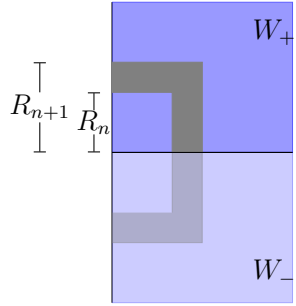
\begin{figure}
\begin{center}
\begin{tikzpicture}[scale=0.8, every node/.style={scale=0.9}]

\draw (0,0)--(0,5);
\draw (0,2.5)--(3,2.5);
\filldraw[blue,opacity=0.4] (0,2.5)--(3,2.5)--(3,5)--(0,5);
\filldraw[blue,opacity=0.2] (0,2.5)--(3,2.5)--(3,0)--(0,0);
\filldraw[gray,opacity=1] (0,3.5)--(1,3.5)--(1,2.5)--(1.5,2.5)--(1.5,4)--(0,4);
\filldraw[gray,opacity=0.4] (0,1.5)--(0,1)--(1.5,1)--(1.5,2.5)--(1,2.5)--(1,1.5);
\draw (0,2.5)--(3,2.5);
\node at (2.7,4.5) {$W_+$};
\node at (2.7,0.5) {$W_-$};

 \draw[|-] (-0.3,2.5) -- (-0.3,2.8);
\draw[-|] (-0.3,3.2) -- (-0.3,3.5);
  \node at (-0.3,3) {$R_n$};
  
 \draw[|-] (-1.2,2.5) -- (-1.2,3);
\draw[-|] (-1.2,3.5) -- (-1.2,4);
  \node at (-1.2,3.25) {$R_{n+1}$};

\end{tikzpicture}
\caption{Let $R_{n+1}>R_n>0$. With the notation of Prop.\ \ref{prop:smooth.localised.Roe.switching}, let $x\in W_{n+1,+}\setminus W_{n,+}$ (dark grey region) and $y\in W_{n+1,-}\setminus W_{n,-}$ (light grey region). If $a\in\sC_W(\partial W)$, then $a(x,y)$ is small because $x$ and $y$ are far apart, or because $x$ and $y$ are far from $\partial W$. 
}\label{fig:smooth.kernel.small}
\end{center}
\end{figure}

\begin{corollary}\label{cor:trace.class}
Let $A\in \sC_W(\partial W)$. If $C\in \cB(L^2(W))$ is such that $AC\in \sC_W(\partial W)$, then $A[\Pi,C]$ is trace class.
\end{corollary}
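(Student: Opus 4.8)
The plan is to reduce to the previous results by the identity
\begin{equation*}
A[\Pi,C] = A\Pi C - AC\Pi = A\Pi C - (AC)\Pi,
\end{equation*}
so it suffices to show that $A\Pi C$ and $(AC)\Pi$ are each trace class. The second term is easy: $AC \in \sC_W(\partial W)$ by hypothesis, and multiplying the bounded operator $\Pi$ on the right preserves the trace class, so I first need to argue $(AC)\Pi$ is trace class. For this, note that by Proposition \ref{prop:smooth.localised.Roe.switching}, $AC$ has trace class switching elements, i.e.\ $\Pi(AC)(1-\Pi)$ and $(1-\Pi)(AC)\Pi$ are trace class; but this is not quite the same as $(AC)\Pi$ being trace class. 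The correct reduction is instead to write everything in terms of switching elements of $A$ alone.

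The key step is the algebraic rearrangement
\begin{equation*}
A[\Pi,C] = A\Pi C - AC\Pi.
\end{equation*}
Insert $\Pi + (1-\Pi)$ on the left of $C$ in the first term and on the left of $\Pi$ in the appropriate places. Concretely, write $A = \Pi A \Pi + \Pi A(1-\Pi) + (1-\Pi)A\Pi + (1-\Pi)A(1-\Pi)$ and expand $A[\Pi,C]$ using $[\Pi,C] = \Pi C(1-\Pi) - (1-\Pi)C\Pi$. Every resulting term contains a factor of the form $\Pi A(1-\Pi)$ or $(1-\Pi)A\Pi$ --- a switching element of $A$ --- \emph{or} a factor of the form $\Pi(AC)(1-\Pi)$ or $(1-\Pi)(AC)\Pi$ --- a switching element of $AC$ --- multiplied by bounded operators. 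Indeed, the "diagonal" contributions $\Pi A\Pi C \Pi$ and $\Pi A C \Pi$ cancel when one carefully tracks $\Pi C \Pi$ against $(AC)\Pi$ localized appropriately, and the genuinely surviving pieces are exactly the off-diagonal ones. Since $A \in \sC_W(\partial W)$ has trace class switching elements by Proposition \ref{prop:smooth.localised.Roe.switching}, and so does $AC \in \sC_W(\partial W)$ by the same proposition, every surviving term is (bounded)$\times$(trace class)$\times$(bounded), hence trace class, and the ideal property of the trace class finishes the argument.

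I expect the main obstacle to be bookkeeping: making sure the expansion of $A[\Pi,C]$ into the four-by-four block decomposition really does leave only terms containing a switching element of either $A$ or $AC$, with no "pure diagonal" term $\Pi A \Pi C \Pi$ left over that one cannot control (since $C$ is merely bounded, $\Pi C \Pi$ need not be trace class). The resolution is that the diagonal term $\Pi A \Pi C \Pi$ from $A \Pi C$ is matched by $\Pi A \Pi (C\Pi)$, and one must use $AC\Pi$ rather than $A \cdot C\Pi$: writing $\Pi A \Pi C \Pi = \Pi(AC)\Pi - \Pi A(1-\Pi)C\Pi$, the first piece cancels against the corresponding block of $(AC)\Pi$ and the second piece has the switching element $\Pi A(1-\Pi)$ of $A$. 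So the cancellation is genuine and the only care needed is in ordering the insertions of $\Pi$ and $1-\Pi$ correctly. Once this is laid out the proof is a few lines; no analysis beyond Proposition \ref{prop:smooth.localised.Roe.switching} and the ideal property of trace class operators is required.
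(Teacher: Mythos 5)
Your proposal is correct and matches the paper's proof: the paper simply records the identity $A[\Pi,C]=(1-\Pi)A\Pi C+\Pi AC(1-\Pi)-\Pi A(1-\Pi)C-(1-\Pi)AC\Pi$ and applies Proposition \ref{prop:smooth.localised.Roe.switching} to the switching elements of $A$ and of $AC$, exactly as your block expansion does. The only cosmetic issue is your intermediate remark about reducing to ``switching elements of $A$ alone,'' which your own final argument (rightly) does not follow, since the switching elements of $AC$ are also needed.
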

\begin{proof}
This follows from Prop.\ \ref{prop:smooth.localised.Roe.switching} applied to the equality
\begin{equation*}
A[\Pi,C]=(1-\Pi)A\Pi C+\Pi AC (1-\Pi) - \Pi A(1-\Pi) C - (1-\Pi) AC\Pi.
\end{equation*}
\end{proof}
\begin{example}\label{ex:ideal.trace.class}
If $A\in \sC_W(\partial W)$ and $C\in \sQ(W,\Gamma)$, then $AC\in \sC_W(\partial W)$ by the ideal property, so that $A[\Pi,C]$ is trace class.
\end{example}

We now have all the ingredients needed for the proof of Theorem \ref{thm:main.current.theorem}.

\begin{proof}[Proof of Thm.~\ref{thm:main.current.theorem}]
As in the beginning of this section, let $\varphi(H_X)=\varphi_S(H_X)$ be the spectral projection of $H_X$ with $\varphi\in \mathcal{S}(\RR)$. Then the unitary operator 
\begin{equation*}
W={\rm exp}(-2\pi i \varphi(H_W))
\end{equation*}
 satisfies 
 \begin{equation*}
 \begin{aligned}
 \varpi(W-1) &={\rm exp}\bigl(-2\pi i \cdot \varpi(\varphi(H_W))\bigr)-1\\
 &={\rm exp}\bigl(-2\pi i \cdot\varphi(H_X)\bigr)-1\\
 &=1-1=0
 \end{aligned}
 \end{equation*}
  by Theorem \ref{thm:functional.calculus} and the projection property of $\varphi(H_X)$, so Proposition \ref{prop:smooth.functional.calculus} says that $W-1\in \sC_W(\partial W)$, and similarly for $W^*-1$. Then $(W^*-1)$ has trace class switching elements (Prop.\ \ref{prop:smooth.localised.Roe.switching}), so we can use Lemma \ref{lem:trace.class.off.diagonal} to write a trace formula for the left-hand-side of Eq.\ \eqref{eqn:current.formula},
\begin{equation}\label{eqn:trace.formula.nice.exponential}
\begin{aligned}
\theta_{W_+}({\rm Exp}_W[\varphi(H_X])&\equiv \theta_{W_+}\Bigl(\bigl[{\rm exp}\bigl(-2\pi i \varphi(H_W)\bigr)\bigr]\Bigr)\equiv\theta_{W_+}([W])\\
&= -{\rm Tr}\bigl((W^*-1)[\Pi,W]\bigr).
\end{aligned}
\end{equation}
Expanding $W=e^{-2\pi i \varphi(H_W)}$ as a power series, we have
\begin{equation*}
\begin{aligned}
\theta_{W_+}\bigl(&{\rm Exp}_W[\varphi(H_X)]\bigr)\\
&=-{\rm Tr}\left( (W^*-1)\sum_{k=1}^\infty\frac{(-2\pi i)^k}{k !}\sum_{l=0}^{k-1}(\varphi(H_W))^l[\Pi,\varphi(H_W)](\varphi(H_W))^{k-1-l}\right).
\end{aligned}
\end{equation*}
By Corollary \ref{cor:trace.class}, Example \ref{ex:ideal.trace.class} and the fact that $W^*$ commutes with $\varphi(H_W)^l$, $(W^*-1)[\Pi,\varphi(H_W)]$ is trace class, so the above partial sums of operators under the trace are trace class. Using continuity and cyclicity of the trace, we obtain 
\begin{equation*}
\begin{aligned}
\theta_{W_+}({\rm Exp}_W[\varphi(H_X)])&=-{\rm Tr}\left( (W^*-1)\sum_{k=1}^\infty\frac{(-2\pi i)^k}{(k-1)!}(\varphi(H_W))^{k-1}[\Pi,\varphi(H_W)]\right)\\
&= 2\pi i\,{\rm Tr}\bigl((1-W)[\Pi,\varphi(H_W)]  \bigr).
\end{aligned}
\end{equation*}
Let $Q_\Delta$ be the spectral projection of $H_W$ for the interval $\Delta$ (the spectral gap immediately above $S$), and $Q_-$ the spectral projection of $H_W$ for $(-\infty,{\rm sup}\, S]$; note that $Q_\Delta$ and $Q_-$ are orthogonal to each other. Decompose $\varphi(H_W)=(\chi_\Delta\cdot\varphi)(H_W)+ (\chi_{\Delta^c}\cdot\varphi)(H_W)=Q_\Delta\varphi(H_W)+Q_-$. Let us write \mbox{$H_{W,\Delta}=Q_\Delta H_W Q_\Delta$} for the operator $H_W$ restricted to the spectral subspace ${\rm Range}(Q_\Delta)$. Then we can rewrite the decomposition as $\varphi(H_W)=\varphi(H_{W,\Delta})\oplus Q_-$. The $Q_-$ piece will not contribute to ${\rm Tr}((1-W)[\Pi,\varphi(H_W)])$; observe that the function $s\mapsto 1-e^{-2\pi i \varphi(s)}$   vanishes on ${\rm Spec}(H_W)\setminus\Delta$, so $(1-W)Q_-=0$, and thus 
\begin{equation*}
(1-W)[\Pi,Q_-]=(1-W)\Pi Q_-=(1-\Pi)(1-W)\Pi Q_--\Pi(1-W)(1-\Pi)Q_-
\end{equation*}
is trace class by Prop.~\ref{prop:smooth.localised.Roe.switching}, with
\begin{equation*}
\begin{aligned}
{\rm Tr}((1-W)[\Pi,Q_-]) &={\rm Tr}\bigl((1-W)\Pi Q_-\bigr)\\
&={\rm Tr}\bigl((1-W)\Pi Q_-Q_-\bigr)\\
&={\rm Tr}\bigl(\underbrace{Q_-(1-W)}_{0}\Pi Q_-\bigr)=0.
\end{aligned}
\end{equation*}
We obtain the reduction
\begin{equation*}
\theta_{W_+}\bigl({\rm Exp}_W[\varphi(H_X)]\bigr)= 2\pi i\, {\rm Tr}\bigl((1-W)[\Pi,\varphi(H_{W,\Delta})]\bigr).
\end{equation*}
Now, we may regard $\varphi$ as its restriction to $ \overline{\Delta}\supset{\rm Spec}(H_{W,\Delta})$, and approximate it in the $C^1$ sense by a sequence of polynomials $\{\varphi_n\}$ on ${\rm Spec}(H_{W,\Delta})$. Also, pick some $g\in\mathcal{S}(\RR)$ which restricts to the identity function on $\Delta$, to see that
\begin{align*}
(1-W)H_{W,\Delta} &=(1-W)Q_\Delta g(H_W)\\
&=(1-W)g(H_W)\;\in\; \sC_W(\partial W)\cdot\sQ(W,\Gamma)\;\subset\; \sC_W(\partial W)
\end{align*}
is trace class (Corollary \ref{cor:trace.class}). So we may use cyclicity of the trace and the fact that $H_{W, \Delta}$ commutes with $1-W$ to re-sum
\begin{equation*}
\begin{aligned}
{\rm Tr}\Bigl((1-W)[\Pi,\varphi_n(H_{W,\Delta})]\Bigr) &= \sum_{k=0}^\infty a_k {\rm Tr}\Bigl((1-W)[\Pi, H_{W,\Delta}^k]\Bigr)\\
&= \sum_{k=0}^\infty a_k \sum_{l=0}^{k-1}{\rm Tr}\Bigl((1-W)H_{W, \Delta}^k[\Pi, H_{W,\Delta}] H_{W, \Delta}^{k-l-1}\Bigr) \\
&= \sum_{k=0}^\infty a_k {\rm Tr}\Bigl((1-W)H_{W, \Delta}^{k-1}[\Pi, H_{W,\Delta}]\Bigr) \\
&= {\rm Tr}\Bigl((1-W)\varphi_n^\prime(H_{W,\Delta})[\Pi,H_{W,\Delta}]\Bigr),
\end{aligned}
\end{equation*}
where all except finitely many $a_k$ are non-zero, since $\varphi_n$ is a polynomial.
Taking the limit, and noting that $\varphi^\prime(H_{W,\Delta})=\varphi^\prime(H_W)$, 
\begin{equation*}
\theta_{W_+}({\rm Exp}_W[\varphi(H_X)])= 2\pi i\, {\rm Tr}\Bigl((1-W)\varphi^\prime(H_W)[\Pi,H_{W,\Delta}]\Bigr).
\end{equation*}
In fact, for $0\neq k\in\ZZ$, we also have
\begin{equation*}
\begin{aligned}
\theta_{W_+}\bigl({\rm Exp}_W[\varphi(H_X)]\bigr) &=\frac{1}{k}\theta_{W_+}\bigl({\rm Exp}_W[k\varphi(H_X)]\bigr)\\
&= \frac{2\pi i}{k}\, {\rm Tr}\Bigl((1-W^k)k\varphi^\prime(H_W)[\Pi,H_{W,\Delta}]\Bigr)\\
&=2\pi i\, {\rm Tr}\Bigl((1-W^k)\varphi^\prime(H_W)[\Pi,H_{W,\Delta}]\Bigr).
\end{aligned}
\end{equation*}
We claim that $\varphi^\prime(H_W)\in \sC_W(\partial W)$.  First, by Prop.~\ref{prop:smooth.functional.calculus}, we have $\varphi^\prime(H_W) \in \sQ(W, \Gamma)$. But then, since $\varphi^\prime$ is supported in the spectral gap of $H_X$, we have $0=\varphi^\prime(H_X) = \varpi\varphi^\prime(H_W)$ by Thm.~\ref{thm:functional.calculus}.
So $\varphi^\prime(H_W) \in \ker(\varpi)$, and Lemma~\ref{LemmaLocalisedRoeAlgebraKernel} implies that $\varphi^\prime(H_W) \in C^*_W(\partial W) \cap \sQ(W, \Gamma) = \sC_W(\partial W)$, verifying the claim.

The final simplification follows \S 7.1.2 of \cite{PSB}, \S 10 of \cite{KSB}. Let $\phi$ be a smooth function $[0,1]\rightarrow\RR$ vanishing at the endpoints, with Fourier coefficients $a_k, k\in\ZZ$. Observe that $\sum_{k\in\ZZ} a_k =0$, so $a_0=-\sum_{0\neq k\in\ZZ} a_k$. Furthermore, since $\varphi^\prime(H_W)\in \sC_W(\partial W)$, 
$\varphi^\prime(H_W)[\Pi,H_{W,\Delta}]$ is trace class by another application of Corollary~\ref{cor:trace.class}. For brevity, write $\Theta=\theta_{W_+}(\Exp_W[\varphi(H_X)])$, then 
\begin{align*}
a_0\Theta&=-\sum_{0\neq k\in\ZZ} a_k\Theta=-2\pi i \sum_{k\in\ZZ} a_k {\rm Tr}\Bigl((1-W^k)\varphi^\prime(H_{W,\Delta})[\Pi,H_{W,\Delta}]\Bigr)\\
&= -2\pi i\, {\rm Tr}\left(\sum_{k\in\ZZ}a_k e^{-2\pi i k \varphi(H_W)}\varphi^\prime(H_W)[\Pi,H_{W,\Delta}]\right)\\
&=-2\pi i\, {\rm Tr}\Bigl(\phi(\varphi(H_W))\varphi^\prime(H_W)[\Pi,H_{W,\Delta}]\Bigr),\\
&=-2\pi i\, {\rm Tr}\Bigl(\phi(\varphi(H_W))Q_\Delta\varphi^\prime(H_W)[\Pi,H_{W,\Delta}]\Bigr),
\end{align*}
where in the last line, we used $\varphi^\prime(H_W)=Q_\Delta\varphi^\prime(H_W)$. Let $\phi$ converge pointwise and boundedly to the indicator function $\chi_{(0,1)}$, such that the Fourier coefficient $a_0\rightarrow 1$, then $\phi(\varphi(H_W))Q_\Delta\rightarrow Q_\Delta$ in strong operator topology (\cite{RS1} Theorem VIII.5), thus also in weak operator topology. As this is a norm-bounded sequence, it also converges in $\sigma$-weak topology (i.e.\ the weak-$*$-topology on bounded operators regarded as the dual of trace class operators, see Theorem 4.6.14 of \cite{Pedersen}). By continuity of the trace pairing with respect to the $\sigma$-weak topology, the sequence of traces converges,
\begin{equation}
{\rm Tr}\Bigl(\phi\bigl(\varphi(H_W)\bigr)Q_\Delta\varphi^\prime(H_W)[\Pi,H_{W,\Delta}]\Bigr)\overset{\phi\rightarrow\chi_{(0,1)}}{\longrightarrow} {\rm Tr}\Bigl(Q_\Delta\varphi^\prime(H_W)[\Pi,H_{W,\Delta}]\Bigr).
\end{equation}
Since $Q_\Delta \varphi^\prime(H_W) = \varphi^\prime(H_W)$, the desired Eq.\ \eqref{eqn:current.formula} follows.
\end{proof}

\vspace{1em}

\noindent {\bf Acknowledgements.}
The authors thank U.\ Bunke, N.\ Higson and P.\ Hochs for their suggestions, and the Australian Research Council for financial support under grants FL170100020, DE170100149, and DP200100729. G.C.T. acknowledges H. Wang and Q. Wang for their hospitality at East China Normal University where part of this work was done, as well as G.\ De Nittis and E.\ Prodan for helpful discussions.

\bibliography{literature}
\bibliographystyle{plain}

\end{document}